\documentclass[a4paper,UKenglish,cleveref, autoref, thm-restate, numberwithinsect]{lipics-v2021}




\usepackage{scalerel}
\usepackage{proof}
\usepackage{ifthen}
\usepackage{mathtools}
\usepackage{tikz}
\usetikzlibrary{positioning,arrows,calc}

\newboolean{OmitProofs}
\setboolean{OmitProofs}{true} 

\newcommand{\m}[1]{\mathsf{#1}}
\newcommand{\mr}[1]{\mathrel{#1}}
\newcommand{\seq}[2][n]{{#2_1},\dots,{#2_{#1}}}
\newcommand{\sig}[2][n]{{#2_1}\times\cdots\times{#2_{#1}}}
\newcommand{\h}[1][.3]{\hspace{#1mm}}
\newcommand{\SET}[1]{\{\h#1\h\}}
\newcommand{\inter}[1]{[\![{#1}]\!]}

\newcommand{\CO}[1]{[\h#1\h]} 

\newcommand{\xE}{\mathcal{E}}
\newcommand{\xF}{\mathcal{F}}

\newcommand{\xI}{\mathcal{I}}
\newcommand{\xJ}{\mathcal{J}}
\newcommand{\xM}{\mathcal{M}}
\newcommand{\xR}{\mathcal{R}}
\newcommand{\xS}{\mathcal{S}}
\newcommand{\xT}{\mathcal{T}}

\newcommand{\xV}{\mathcal{V}}
\newcommand{\xFTe}{\xF_{\m{te}}}
\newcommand{\xFTh}{\xF_{\m{th}}}
\newcommand{\xSTe}{\xS_{\m{te}}}
\newcommand{\xSTh}{\xS_{\m{th}}}
\newcommand{\xVTe}{\xV_{\m{te}}}
\newcommand{\xVTh}{\xV_{\m{th}}}
\newcommand{\SigmaTh}{\Sigma_{\m{th}}}
\newcommand{\SigmaTe}{\Sigma_{\m{te}}}

\newcommand{\fI}{\mathfrak{I}}
\newcommand{\fJ}{\mathfrak{J}}
\newcommand{\fM}{\mathfrak{M}}
\newcommand{\fT}{\mathfrak{T}}

\newcommand{\fU}{\mathfrak{U}}

\newcommand{\xRca}{\xR_\m{calc}}
\newcommand{\Val}{\xV\m{al}}
\newcommand{\Var}{\xV\m{ar}}

\newcommand{\LVar}{\mathcal{L}\Var}

\newcommand{\Dom}{\mathcal{D}\m{om}}

\newcommand{\VDom}{\mathcal{V}\Dom}
\newcommand{\Pos}{\mathcal{P}\m{os}}

\newcommand{\sort}[1]{\m{#1}}
\newcommand{\Bool}{\sort{Bool}}
\newcommand{\Int}{\sort{Int}}


\newcommand{\CEqn}[4]{%
\ifthenelse{\equal{#1}{}}{%
\ifthenelse{\equal{#4}{}}{#2 \approx #3}{#2 \approx #3~\CO{#4}}}{%
\ifthenelse{\equal{#4}{}}{\Pi #1.\, #2 \approx #3}{\Pi #1.\, #2 \approx #3~\CO{#4}}%
}}

\newcommand{\R}{\rightarrow}
\newcommand{\Ra}[1][]{\R^{#1}}
\newcommand{\Rb}[1][]{\R_{#1}}
\newcommand{\Rca}{\Rb[\m{calc}]}
\newcommand{\Rru}{\Rb[\m{rule}]}

\renewcommand{\L}{\leftarrow}

\newcommand{\Lb}[1][]{\mr{\vphantom{\R}_{#1}{\L}}}
\newcommand{\Lca}{\Lb[\m{calc}]}

\newcommand{\Rbase}[1][]{\R_{\mathsf{base}}}

\newcommand{\Ca}[1][]{\xleftrightarrow{#1}}

\newcommand{\Cb}[1][]{\Ca[]_{#1}}
\newcommand{\Cab}[2][]{\Ca[#2]_{#1}}

\newcommand{\Cca}{\Cb[\m{calc}]}
\newcommand{\Cru}[1][\xE]{\ifthenelse{\equal{#1}{}}{\Cb[\m{rule}]}{\Cb[\m{rule},#1]}}
\newcommand{\Cbase}[1][\xE]{\ifthenelse{\equal{#1}{}}{\Cb[\m{base}]}{\Cb[\m{base},#1]}}



\newcommand{\cec}{\mathrel{\models_{\m{cec}}}}


\renewcommand{\ge}{\geqslant}
\renewcommand{\le}{\leqslant}

\newcommand{\Bfnum}[1]{\textcolor{darkgray}{\sffamily\bfseries #1}}


\bibliographystyle{plainurl}

\title{Equational Theories and Validity for Logically Constrained Term Rewriting}

\author{Takahito Aoto}{Niigata University, Japan}{aoto@ie.niigata-u.ac.jp}
{https://orcid.org/0000-0003-0027-0759}
{JSPS KAKENHI Grant Numbers 21K11750, 24K14817.}

\author{Naoki Nishida}{
Nagoya University, Japan}{nishida@i.nagoya-u.ac.jp}{https://orcid.org/0000-0001-8697-4970}{JSPS KAKENHI Grant Number 24K02900.}

\author{Jonas Sch\"opf}{University of Innsbruck, Austria}{jonas.schoepf@uibk.ac.at}{https://orcid.org/0000-0001-5908-8519}{FWF (Austrian Science Fund) project I~5943-N.}

\authorrunning{T. Aoto, N. Nishida and J. Sch\"opf} 

\Copyright{Takahito Aoto, Naoki Nishida and Jonas Sch\"opf} 

\ccsdesc{Theory of computation~Equational logic and rewriting}

\keywords{
constrained equation,
constrained equational theory, 
logically constrained term rewriting, 
algebraic semantics,
consistency
} 

\category{} 

\relatedversiondetails[cite=ANS24-arxiv]{Full Version}{https://arxiv.org/abs/2405.01174}


\funding{This research was supported by the FWF (Austrian Science Fund)
project I~5943-N and JSPS-FWF Grant Number JPJSBP120222001.}

\acknowledgements{We thank the anonymous reviewers for their valuable feedback, which improved the
paper.}

\nolinenumbers 

\EventEditors{}
\EventNoEds{2}
\EventLongTitle{}
\EventShortTitle{}
\EventAcronym{}
\EventYear{}
\EventDate{}
\EventLocation{}
\EventLogo{}
\SeriesVolume{}
\ArticleNo{}

\begin{document}

\maketitle

\begin{abstract}
Logically constrained term rewriting is a relatively new formalism where rules
are equipped with constraints over some arbitrary theory. Although there are
many recent advances with respect to rewriting induction, completion, complexity
analysis and confluence analysis for logically constrained term rewriting, these
works solely focus on the syntactic side of the formalism lacking detailed
investigations on semantics. In this paper,  we investigate a semantic side of
logically constrained term rewriting. To this end, we first define constrained
equations, constrained equational theories and validity of the former based on
the latter. After presenting the relationship of validity and conversion of
rewriting, we then construct a sound inference system to prove validity of
constrained equations in constrained equational theories. Finally, we give an
algebraic semantics, which enables one to establish invalidity of constrained
equations in constrained equational theories. This algebraic semantics derives a
new notion of consistency for constrained equational theories.
\end{abstract}

\section{Introduction}
\label{sec:intro}

Logically constrained term rewriting is a relatively new formalism building upon
many-sorted term rewriting and built-in theories. The rules of a logically
constrained term rewrite system (LCTRS, for short) are equipped with constraints
over some arbitrary theory, which have to be fulfilled in order to apply
rules in rewrite steps. This formalism intends to live up with data
structures which are often difficult to represent in basic rewriting, such
as integers and bit-vectors, with the help of external provers and their
built-in theories.

Logical syntax and semantics are often conceived as two sides of the same coin.
This is not exceptional, especially for equational logic in which term
rewriting lies. On the other hand, although there are many recent advances in
rewriting induction~\cite{FKN17tocl}, completion~\cite{WM18}, complexity
analysis~\cite{WM21}, confluence analysis~\cite{KN13frocos,NW18vstte,SM23} and
(all-path) reachability~\cite{CL18,KN23jlamp,KN23padl} for LCTRSs, these works
solely focus on the syntactic side of the formalism, lacking detailed
investigations on semantics.

In this paper, we investigate a semantic side of the LCTRS formalism. To
this end, we first define \emph{constrained equations} (CEs, for short) and
\emph{constrained equational theories} (CE-theories, for short). In
(first-order) term rewriting, the equational version of rewrite rules is
obtained by removing the orientation of the rules. However, in the case of
LCTRSs, if we consider a constrained rule $\ell \R r~\CO{\varphi}$ and relate
this naively to a CE $\ell \approx r~\CO{\varphi}$, which does not distinguish
between left- and right-hand sides, we lose information about the
restriction on the possible instantiation of variables. This motivates us to add
an explicit set $X$ to each CE $\ell \approx r~\CO{\varphi}$ as
$\CEqn{X}{\ell}{r}{\varphi}$\footnote{In the literature, some other approaches
exist. The computation of critical pairs is also prone of losing information
\cite{SM23}. They solved it by adding dummy constraints $x = x$ to the critical
pair. Another approach was proposed in~\cite{WM18} where $\LVar(\ell \approx
r~\CO{\varphi})$ was simply defined as $\Var(\varphi)$.}---we name variables in
$X$ as \emph{logical variables} with respect to the equation. A CE-theory
is then defined as a set of CEs. Similar to the rewrite steps of
LCTRSs, we define validity by convertibility if all logical variables are
instantiated by values---we denote this notion of validity as
\emph{CE-validity} for clarity.

After establishing fundamental properties of the CE-validity, we present
its relation to the conversion of rewriting. However, the conversion of rewriting is
useful in general to establish the validity of arbitrary CEs. This
motivates us to introduce $\mathbf{CEC}_0$, an inference calculus for
deriving valid CEs. After demonstrating the usefulness of $\mathbf{CEC}_0$ via
some derivations, we present a soundness theorem for the calculus. We also show
a partial completeness result, followed by a discussion why our system seems
incomplete. Afterwards we consider the opposite question, namely
how to prove that a CE is not valid for a particular CE-theory. To this
end, we introduce an algebraic semantics that captures CE-validity. We
give a natural notion of models for CE-theory, which we call \emph{CE-algebras}.
We establish soundness and completeness with respect to CE-validity for
this.

Figure~\ref{fig:overview} presents the relationships between the
introduced notions and results of this paper.
The following concrete contributions are covered in this paper:
\begin{enumerate}
    \item We propose a formulation of CEs and CE-theories.
    \item On top of that we devise a notion of validity of a CE for a
    CE-theory $\xE$, which we call CE-validity.
    \item We give a proof system $\mathbf{CEC}_0$, and show soundness
    (Theorem~\ref{thm:soundness of the system CEC0}) and a partial completeness
    result (Theorem~\ref{thm: partial completeness of CEC0}) with respect to
    CE-validity.
    \item We give a notion of CE-algebras and based on it we define
    algebraic semantics, which is sound (Theorem~\ref{thm:soundness of algebraic
    semantics w.r.t. constrained equational validity}) and complete
    (Theorem~\ref{thm:completeness}) with respect to CE-validity for
    \emph{consistent} CE-theories.
\end{enumerate}

\begin{figure}[t]
    \centering
  \begin{tikzpicture}[node distance=2cm and 3cm,semithick]
  \node (CEC) {$\xE \cec \CEqn{X}{s}{t}{\varphi}$};
  \node (CECtitle) [above left = 0pt and -2cm of CEC] {\fbox{\scriptsize CE-Validity}};
  \node (CEC0) [below = of CEC] {$\xE \vdash_{\mathbf{CEC}_0} \CEqn{X}{s}{t}{\varphi}$};
  \node (CEC0title) [above left = 0pt and -1cm of CEC0] {\fbox{\scriptsize Provability in $\mathbf{CEC}_0$}};
  \node (Sem)  [right = of CEC] {$\xE \models \CEqn{X}{s}{t}{\varphi}$};
  \node (SEMtitle) [above right = 0pt and -1cm of Sem] {\fbox{\scriptsize Algebraic Semantics}};
  \draw[->] (CEC) 
        edge [bend left=10] 
        node [above = 1pt] {Theorem~\ref{thm:soundness of algebraic semantics w.r.t. constrained equational validity}} 
        (Sem);
  \draw[->] (Sem) 
      edge[bend left=10] 
      node [below = 1pt] {Theorem~\ref{thm:completeness}} 
      (CEC);
  \draw[<-] (CEC) 
      edge[bend right=20] 
      node [left = 1pt] {Theorem~\ref{thm:soundness of the system CEC0}} 
      (CEC0);
  \draw[<-,dashed] (CEC0) 
      edge[bend right=20] 
      node [right = 1pt] {Theorem~\ref{thm: partial completeness of CEC0}}
      (CEC);
  \draw[->] (CEC0)
      edge[bend right=20] 
      node [right = 10pt] {Corollary~\ref{cor:soundness of CEC0 wrt algebraic semantics}}
      (Sem);
\end{tikzpicture}
\caption{An overview of the main results of this paper.}
\label{fig:overview}
\end{figure}

We want to discuss some highlights of the last item for readers who are
familiar with algebraic semantics of equational logic. First of all, our
definition of CE-algebras admits extended underlying models, contrast to those
that precisely contain the same underlying models; we will demonstrate why this
generalization is required to obtain the completeness result. To reflect this
definition, it was necessary to modify the definition of congruence
relation to a non-standard one. Also, the notion of consistency with respect
to values arises to guarantee this modified notion of congruence in the term
algebras. Moreover, it also turns out that value-consistency is equivalent
to a more intuitive notion of consistency.

The remainder of the paper is organized as follows. In the next section, we
briefly explain the LCTRS formalism, and present some basic lemmas that are
necessary for our proofs. Section~\ref{sec:constrained-equational-validity}
introduces the notion of CEs, CE-theories and CE-validity, and presents basic
properties on CE-validity and its relation to the conversion of rewriting.
Section~\ref{sec:proving-validity} is devoted to our inference system
$\mathbf{CEC}_0$, including its soundness and partial completeness with respect
to CE-validity. In Section~\ref{sec:algebraic-semantics}, we present algebraic
semantics, and soundness and completeness results with respect to
CE-validity. Before concluding this paper in Section~\ref{sec:conclusion},
we briefly describe related work in Section~\ref{sec:related-work}. 
We provide only brief proof sketches of selected results in the main text.
However, all detailed proofs are given in the appendix.

\section{Preliminaries}
\label{sec:preliminaries}

In this section, we briefly recall LCTRSs~\cite{KN13frocos,FKN17tocl,SM23}.
Familiarity with the basic notions on mathematical
logic~\cite{EFTF94,logicandstructure} and term rewriting~\cite{BN98,Ohl02} is
assumed.

The (sorted) signature of an LCTRS is given by the set $\xS$ of sorts and the
set $\xF$ of $\xS$-sorted function symbols. Each $f \in \xF$ is equipped with a
sort declaration $f\colon \sig{\tau} \to \tau_0$ with $\tau_0,\ldots,\tau_n \in
\xS$; $\sig{\tau} \to \tau_0$ is said to be the sort of $f$, and we denote by
$\xF^{\sig{\tau} \to \tau_0}$ the set of function symbols of sort $\sig{\tau}
\to \tau_0$. For constants of sort $\to \tau$ we drop $\to$ and write $\tau$
instead of $\to \tau$. The set of $\xS$-sorted variables is denoted by $\xV$ and
the set of $\xS$-sorted terms over $\xF,\xV$ is $\xT(\xF,\xV)$. For each $\tau
\in \xS$, we denote by $\xV^{\tau}$ the set of variables of sort $\tau$ and by
$\xT(\xF,\xV)^\tau$ the set of terms of sort $\tau$; we also write $t^{\tau}$
for a term $t$ such that $t \in \xT(\xF,\xV)^\tau$. The set of variables
occurring in a term $t \in \xT(\xF, \xV)$ is denoted by $\Var(t)$ and can be
restricted by a set of sorts $T$ with $\Var^T(t) = \SET{x^{\tau} \in \Var(t)
\mid \tau \in T}$. A substitution $\sigma$ is a mapping $\xV \to \xT(\xF,\xV)$
such that $\Dom(\sigma) = \SET{ x \in \xV \mid x \neq \sigma(x)}$ is finite and
$\sigma(x^\tau) \in \xT(\xF,\xV)^\tau$ is satisfied for all $x \in
\Dom(\sigma)$.

In the LCTRS formalism, sorts are divided into two categories, that is, each
sort $\tau \in \xS$ is either a \emph{theory sort} or a \emph{term sort}, where
we denote by $\xSTh$ the set of theory sorts and by $\xSTe$ the set of term
sorts, i.e.\ $\xS = \xSTh \uplus \xSTe$. Accordingly, the set of variables is
partitioned as $\xV = \xVTh \uplus \xVTe$ by letting $\xVTh$ for the set of
variables of sort $\tau \in \xSTh$ and $\xVTe$ for the set of variables of sort
$\tau \in \xSTe$. Furthermore, we assume each function symbol $f \in \xF$ is
either a theory symbol or a term symbol, where all former symbols $f\colon
\sig{\tau} \to \tau_0$ need to satisfy $\tau_i \in \xSTh$ for all $0 \leqslant i
\leqslant n$. The sets of theory and term symbols are denoted by $\xFTh$ and
$\xFTe$, respectively: $\xF = \xFTh \uplus \xFTe$. Throughout the paper, we
consider signatures consisting of four components $\langle
\xSTh,\xSTe,\xFTh,\xFTe \rangle$. In some cases term/theory signature stands
for the two respective term/theory components of such a signature.

An LCTRS is also equipped with a model over the sorts $\xSTh$ and the symbols
$\xFTh$, which is given by $\xM = \langle \xI, \xJ \rangle$, where $\xI$ assigns
each $\tau \in \xSTh$ a \emph{non-empty} set $\xI(\tau)$, specifying its domain,
and $\xJ$ assigns each $f\colon \sig{\tau} \to \tau_0 \in \xFTh$ an
interpretation function $\xJ(f)\colon \xI(\tau_1) \times \cdots \times
\xI(\tau_n) \to \xI(\tau_0)$. In particular, $\xJ(c) \in \xI(\tau)$ for any
constant $c \in \xFTh^{\tau}$. We suppose for each $\tau \in \xSTh$, there
exists a subset $\Val_\tau \subseteq \xFTh^{\tau}$ of constants of sort $\tau$
such that (the restriction of) $\xJ$ to $\Val_\tau$ forms a bijection $\Val_\tau
\cong \xI(\tau)$. We let $\Val = \bigcup_{\tau \in \xSTh} \Val_\tau$, whose
elements are called \emph{values}. For simplicity, we do not distinguish between
$c \in \Val$ and $\xJ(c)$. Note that, in~\cite{KN13frocos,FKN17tocl}, an
arbitrary overlap between term and theory symbols is allowed provided it is
covered by values. For simplicity, we assume $\xFTh \cap \xFTe = \varnothing$.

A \emph{valuation} over a model $\xM = \langle \xI, \xJ \rangle$ is a family
$\rho = (\rho_\tau)_{\tau \in \xSTh}$ of mappings $\rho_\tau\colon \xV^\tau \to
\xI(\tau)$. The \emph{interpretation} $\inter{t}_{\xM,\rho} \in \xI(\tau)$ of a
term $t^\tau \in \xT(\xFTh,\xV)$ in the model $\xM$ with respect to the
valuation $\rho = (\rho_\tau)_{\tau \in \xSTh}$ is inductively defined as
follows: $\inter{x^\tau}_{\xM,\rho} = \rho^\tau(x)$ and
$\inter{f(t_1,\ldots,t_n)}_{\xM,\rho} =
\xJ(f)(\inter{t_1}_{\xM,\rho},\ldots,\inter{t_n}_{\xM,\rho})$. We abbreviate
$\inter{t}_{\xM,\rho}$ as $\inter{t}_{\rho}$ if $\xM$ is known from the context.
Furthermore, for any ground term $t \in \xT(\xFTh)$, the valuation $\rho$ has no
impact on the interpretation $\inter{t}_\rho$ which can be safely ignored and
written as $\inter{t}$.

We suppose a special sort $\Bool \in \xSTh$ such that $\xI(\Bool) = \mathbb{B} =
\SET{\mathsf{true}, \mathsf{false}}$, and usual logical connectives 
$\neg, {\land},{\lor},\ldots \in \xFTh$ with their default sorts. We assume that
there exists for each  $\tau \in \xSTh$ an equality symbol $=_\tau$ of sort
$\tau \times \tau \to \Bool$ in $\xFTh$. For brevity we will omit ${}_\tau$
from $=_\tau$. We assume, for all of these theory symbols, that their
interpretation functions model their default semantics.
The terms in $\xT(\xFTh,\xV)^\Bool$ are called \emph{logical constraints}.%
\footnote{Logical constraints are quantifier-free, which is not restrictive:
Consider, for example, a formula $\forall x.\ \varphi$ with $n$ free variables
$x_1,\ldots,x_n$ and a quantifier-free formula $\varphi$. By introducing an
$n$-ary predicate symbol $p$ defined as $\inter{p(x_1,\ldots,x_n)}_{\xM,\rho} =
\inter{\forall x.\ \varphi}_{\xM,\rho}$, we can replace the formula by the
quantifier-free formula $p(x_1,\ldots,x_n)$. Clearly, this applies to arbitrary
first-order formulas. Another approach can be seen in~\cite[Section~2.2]{FKN17tocl}.} Note
that $\Var(\varphi) \subseteq \xVTh$ for any logical constraint $\varphi$, thus
in this case $\xT(\xFTh,\xV)^\Bool =\xT(\xFTh,\xVTh)^\Bool$. We say that a
logical constraint $\varphi$ is over a set $X \subseteq \xVTh$ of theory
variables if $\xV(\varphi) \subseteq X$. A logical constraint $\varphi$ is said
to be \emph{valid} in a model $\xM$, written as $\models_{\xM} \varphi$ (or
$\models \varphi$ when the model $\xM$ is known from the context), if
$\inter{\varphi}_{\xM,\rho} = \mathsf{true}$ for any valuation $\rho$ over the
model $\xM$. Considering the bijection $\Val_\tau  \cong \xI(\tau)$, an
arbitrary substitution $\sigma$ is equivalent to a valuation $\rho$. Suppose that
$\VDom(\sigma) = \SET{ x \in \Dom(\sigma) \mid \sigma(x) \in \Val }$ and
$\Var(\varphi) \subseteq \VDom(\sigma)$. Then the substitution $\sigma$ can be
seen as a valuation over $\varphi$, and $\models_{\xM} \varphi \sigma$
coincides with $\inter{\varphi}_{\xM,\sigma} = \mathsf{true}$. More
generally, we have the following.

\begin{restatable}{lemma}{LemmaSubstitutionAsInterpretation} 
\label{lem:substitution as interpretation}
Let $t \in \xT(\xFTh,\xVTh)$, $\rho$ a valuation, and $\sigma$ a substitution.
\begin{enumerate}
    \item  Suppose $\sigma(x) \in \xT(\xFTh,\xVTh)$ for all $x \in \xVTh$. 
    Let $\inter{\sigma}_{\xM,\rho}$ be a valuation defined as
    $\inter{\sigma}_{\xM,\rho}(x) = \inter{\sigma(x)}_{\xM,\rho}$. Then,
    $\inter{t}_{\xM,\inter{\sigma}_{\xM,\rho}} = \inter{t\sigma}_{\xM,\rho}$.
    \item Suppose that $\Var(t) \subseteq \VDom(\sigma)$. Then,
    $\inter{t}_{\xM,\hat \sigma} = \inter{t\sigma}_{\xM}$, where the valuation
    $\hat \sigma$ is defined by $\hat \sigma(x^\tau) = \xi(\sigma(x)) \in
    \xI(\tau)$ for $x \in \VDom(\sigma)$, where $\xi$ is a bijection $\Val^\tau
    \cong \xI(\tau)$.
\end{enumerate}
\end{restatable}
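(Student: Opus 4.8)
The plan is to prove both statements by structural induction on the term $t \in \xT(\xFTh,\xVTh)$, since the interpretation $\inter{\cdot}_{\xM,\rho}$ is defined by exactly such a recursion. In both parts the function-symbol step is a routine unfolding of the inductive definition together with the fact that substitution commutes with function application, i.e.\ $f(t_1,\dots,t_n)\sigma = f(t_1\sigma,\dots,t_n\sigma)$; all the genuine content sits in the base case where $t$ is a variable.

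For the first claim I would argue as follows. If $t = x \in \xVTh$, then on the left $\inter{x}_{\xM,\inter{\sigma}_{\xM,\rho}} = \inter{\sigma}_{\xM,\rho}(x)$, which by the definition of the valuation $\inter{\sigma}_{\xM,\rho}$ equals $\inter{\sigma(x)}_{\xM,\rho}$; on the right $\inter{x\sigma}_{\xM,\rho} = \inter{\sigma(x)}_{\xM,\rho}$ since $x\sigma = \sigma(x)$. Here the hypothesis $\sigma(x) \in \xT(\xFTh,\xVTh)$ is precisely what guarantees that $\inter{\sigma(x)}_{\xM,\rho}$ is defined, so the two sides coincide. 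For $t = f(t_1,\dots,t_n)$ with $f \in \xFTh$, I unfold the interpretation to $\xJ(f)(\inter{t_1}_{\xM,\inter{\sigma}_{\xM,\rho}},\dots,\inter{t_n}_{\xM,\inter{\sigma}_{\xM,\rho}})$, apply the induction hypothesis to each $t_i$ to rewrite $\inter{t_i}_{\xM,\inter{\sigma}_{\xM,\rho}}$ as $\inter{t_i\sigma}_{\xM,\rho}$, and then refold using $f(t_1,\dots,t_n)\sigma = f(t_1\sigma,\dots,t_n\sigma)$ to reach $\inter{f(t_1,\dots,t_n)\sigma}_{\xM,\rho}$.

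The second claim could either be proved by the same induction or deduced from the first. By direct induction, the function step is identical, and the base case $t = x^\tau$ now uses $\Var(t) = \SET{x} \subseteq \VDom(\sigma)$, so that $\sigma(x) \in \Val^\tau$ is a value: then $\inter{x}_{\xM,\hat\sigma} = \hat\sigma(x) = \xi(\sigma(x))$ by definition of $\hat\sigma$, while $\inter{x\sigma}_{\xM} = \inter{\sigma(x)}_{\xM} = \xJ(\sigma(x))$ because $\sigma(x)$ is a value constant, and these agree precisely because $\xi$ is the restriction of $\xJ$ to values of sort $\tau$ (the identification of a value $c$ with $\xJ(c)$ fixed in the preliminaries). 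Alternatively, I would derive the second claim from the first by noting that $t\sigma$ is ground, since all variables of $t$ lie in $\VDom(\sigma)$ and are hence replaced by values, so $\inter{t\sigma}_{\xM,\rho} = \inter{t\sigma}_{\xM}$ for any $\rho$; moreover on $\Var(t)$ the two valuations coincide, as $\inter{\sigma}_{\xM,\rho}(x) = \inter{\sigma(x)}_{\xM,\rho} = \inter{\sigma(x)}_{\xM} = \xi(\sigma(x)) = \hat\sigma(x)$ for $\sigma(x)$ a ground value, whence $\inter{t}_{\xM,\inter{\sigma}_{\xM,\rho}} = \inter{t}_{\xM,\hat\sigma}$.

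The only genuinely delicate point, and the expected main obstacle, is the value identification in the base case of the second claim: one must check that the abstract bijection $\xi\colon \Val^\tau \cong \xI(\tau)$ is indeed the one induced by $\xJ$, so that $\xi(\sigma(x))$ and the true interpretation $\inter{\sigma(x)}_{\xM} = \xJ(\sigma(x))$ coincide rather than being matched by an arbitrary choice of bijection. If one instead takes the reduction route, the subtlety migrates to invoking the standard coincidence property that $\inter{t}_{\xM,\rho}$ depends only on $\rho$ restricted to $\Var(t)$, which is itself a trivial induction but must be stated explicitly, since the valuations $\inter{\sigma}_{\xM,\rho}$ and $\hat\sigma$ need not agree outside $\Var(t)$.
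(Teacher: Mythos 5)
Your proposal is correct and follows essentially the same route as the paper: structural induction on $t$ for part~1, and for part~2 the observation that $\inter{\sigma}_{\xM,\rho}$ and $\hat\sigma$ agree on $\Var(t)$ (via the identification of $\xi$ with the restriction of $\xJ$ to values) combined with the groundness of $t\sigma$, which is exactly how the paper reduces part~2 to the part~1 induction. Your explicit flagging of the delicate point---that $\xi$ must be the bijection induced by $\xJ$, which the paper absorbs into its convention of not distinguishing $c \in \Val$ from $\xJ(c)$---is accurate but does not change the argument.
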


\begin{proof}[Proof (Sketch)]
\Bfnum{1.} Use structural induction on $t \in \xT(\xFTh,\xVTh)$.
 \Bfnum{2.} Similar to \Bfnum{1}, using the assumption $\Var(t) \subseteq \VDom(\sigma)$.
\end{proof}

\newcommand{\ProofLemmaSubstitutionAsInterpretation}{
\ifthenelse{\boolean{OmitProofs}}{\LemmaSubstitutionAsInterpretation*}{}
\begin{proof}
\Bfnum{1.}
The proof proceeds by structural induction on $t \in \xT(\xFTh,\xVTh)$.
Suppose $t = x \in \xVTh$.
Then, 
$\inter{x}_{\inter{\sigma}_{\rho}}
= \inter{\sigma}_{\rho}(x)
= \inter{\sigma(x)}_{\rho}$ by definition.
Suppose $t = f(t_1,\ldots,t_n)$.
Then, using the induction hypothesis,
we have $\inter{t}_{\inter{\sigma}_{\rho}}
= \inter{f(t_1,\ldots,t_n)}_{\inter{\sigma}_{\rho}}
= \xJ(f)(\inter{t_1}_{\inter{\sigma}_{\rho}},\ldots,\inter{t_n}_{\inter{\sigma}_{\rho}})
$ $= \xJ(f)(\inter{t_1\sigma}_{\rho},\ldots,\inter{t_n\sigma}_{\rho})
= \inter{f(t_1\sigma,\ldots,t_n\sigma)}_{\rho}
= \inter{t\sigma}_{\rho}$.
\Bfnum{2.}
Take an arbitrary valuation $\rho$.
Then, as $\Var(t) \subseteq \VDom(\sigma)$,
we have $\inter{\sigma}_{\xM,\rho}(x) 
= \inter{\sigma(x)}_{\xM,\rho}
= \inter{\sigma(x)}_{\xM}
= \sigma(x)
= \hat \sigma(x)$ for $x \in \Var(t)$.
Thus, similar to \Bfnum{1}, it follows by induction on $t$
that 
$\inter{t}_{\xM,\hat \sigma} 
= \inter{t\sigma}_{\xM,\rho}$.
Since $\Var(t) \subseteq \VDom(\sigma)$,
we have $t\sigma \in \xT(\xFTh)$.
Thus 
$\inter{t}_{\xM,\hat \sigma} = \inter{t\sigma}_{\xM}$.
\end{proof}}
\ifthenelse{\boolean{OmitProofs}}{}{\ProofLemmaSubstitutionAsInterpretation}

From Lemma~\ref{lem:substitution as interpretation} the following
characterizations, which are used later on, are obtained. Note that $\models
\varphi = \m{true}$ ($\models \varphi = \m{false}$) if and only if $\models
\varphi$ ($\models \neg\varphi$), for a logical constraint $\varphi$.

\begin{restatable}{lemma}{NoRespSubstImplyFalse}
\label{lem:no-resp-subst-imply-false}
Let $\varphi$ be a logical constraint.
\begin{enumerate}
    \item $\models_{\xM} \varphi$ if and only if $\models_{\xM} \varphi\sigma$
    for all substitutions $\sigma$ such that $\Var(\varphi) \subseteq
    \VDom(\sigma)$.

    \item If $\models_{\xM} \varphi$, then $\models_{\xM} \varphi\sigma$ for all
    substitutions $\sigma$ such that $\sigma(x) \in \xT(\xFTh,\xVTh)$ for all $x
    \in \Var(\varphi) \cap \Dom(\sigma)$. 

\item
The following statements are equivalent: 
\textit{(1)} $\models_{\xM} \neg\varphi$,
\textit{(2)} $\not\models_{\xM} \varphi\sigma$ for all substitutions $\sigma$
such that $\Var(\varphi) \subseteq \VDom(\sigma)$, and
\textit{(3)} $\sigma \models_{\xM} \varphi$ for no substitution $\sigma$.

\end{enumerate}
Here, $\sigma \models_\xM \varphi$ denotes that
$\Var(\varphi) \subseteq \VDom(\sigma)$ and 
$\models_{\xM} \varphi \sigma$ hold.
\end{restatable}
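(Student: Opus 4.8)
The plan is to reduce all three statements to Lemma~\ref{lem:substitution as interpretation}, exploiting the bijection $\Val_\tau \cong \xI(\tau)$ to translate freely between valuations and substitutions into values. The one reusable device I set up first is, for a valuation $\rho$, the \emph{associated value-substitution} $\sigma_\rho$: for each $x^\tau \in \Var(\varphi)$ let $\sigma_\rho(x)$ be the unique value in $\Val_\tau$ whose interpretation is $\rho_\tau(x)$, and let $\sigma_\rho(x) = x$ elsewhere. By construction $\Var(\varphi) \subseteq \VDom(\sigma_\rho)$, and the induced valuation $\widehat{\sigma_\rho}$ of Lemma~\ref{lem:substitution as interpretation}(2) agrees with $\rho$ on $\Var(\varphi)$, since $\widehat{\sigma_\rho}(x) = \xi(\sigma_\rho(x)) = \inter{\sigma_\rho(x)} = \rho_\tau(x)$.

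For the forward direction of statement~1, assume $\models_\xM \varphi$ and take $\sigma$ with $\Var(\varphi)\subseteq\VDom(\sigma)$; then $\varphi\sigma$ is ground, and Lemma~\ref{lem:substitution as interpretation}(2) gives $\inter{\varphi\sigma}_\xM = \inter{\varphi}_{\widehat\sigma} = \mathsf{true}$ because $\widehat\sigma$ is a valuation. For the converse I fix an arbitrary valuation $\rho$, apply the hypothesis to $\sigma_\rho$, and combine Lemma~\ref{lem:substitution as interpretation}(2) with the coincidence property (interpretation of $\varphi$ depends only on $\Var(\varphi)$) to obtain $\inter{\varphi}_\rho = \inter{\varphi}_{\widehat{\sigma_\rho}} = \inter{\varphi\sigma_\rho}_\xM = \mathsf{true}$. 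Statement~2 is handled with Lemma~\ref{lem:substitution as interpretation}(1) instead, since here $\sigma$ may send variables of $\varphi$ to non-ground theory terms. To meet the precondition that \emph{every} theory variable maps to a theory term, I first replace $\sigma$ by the substitution $\sigma'$ that coincides with $\sigma$ on $\Var(\varphi)$ and is the identity elsewhere; then $\varphi\sigma' = \varphi\sigma$ and $\sigma'(x)\in\xT(\xFTh,\xVTh)$ for all $x\in\xVTh$. For any valuation $\rho$, Lemma~\ref{lem:substitution as interpretation}(1) then yields $\inter{\varphi\sigma}_\rho = \inter{\varphi}_{\inter{\sigma'}_\rho} = \mathsf{true}$, the last step because $\inter{\sigma'}_\rho$ is a valuation and $\models_\xM\varphi$.

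For statement~3 I prove the cycle $(1)\Rightarrow(2)\Rightarrow(3)\Rightarrow(1)$, reading $(1)$ as $\inter{\varphi}_\rho = \mathsf{false}$ for all $\rho$ via the equivalence $\models_\xM\neg\varphi \iff \models_\xM \varphi = \mathsf{false}$ noted before the lemma. For $(1)\Rightarrow(2)$ I apply statement~1 to $\neg\varphi$: for each $\sigma$ with $\Var(\varphi)\subseteq\VDom(\sigma)$ we get $\models_\xM \neg(\varphi\sigma)$, and since $\varphi\sigma$ is ground this forces $\not\models_\xM \varphi\sigma$. The step $(2)\Rightarrow(3)$ is immediate from the definition of $\sigma \models_\xM \varphi$, which already demands $\Var(\varphi)\subseteq\VDom(\sigma)$, so $(2)$ rules out every candidate $\sigma$. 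For $(3)\Rightarrow(1)$ I argue contrapositively: if some $\rho$ gives $\inter{\varphi}_\rho = \mathsf{true}$, then the computation from statement~1 shows $\inter{\varphi\sigma_\rho}_\xM = \mathsf{true}$, i.e.\ $\sigma_\rho \models_\xM \varphi$, contradicting $(3)$.

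The only non-routine ingredient is the passage from an arbitrary valuation to a value-substitution realizing it, used in the converse of statement~1 and in $(3)\Rightarrow(1)$; this is exactly where the bijection $\Val_\tau\cong\xI(\tau)$ (guaranteeing such a $\sigma_\rho$ exists) and the coincidence property for $\inter{\varphi}$ are essential. Everything else is a direct instantiation of Lemma~\ref{lem:substitution as interpretation}, so I expect no further obstacles.
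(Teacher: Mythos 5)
Your proposal is correct and follows essentially the same route as the paper's proof: the value-substitution $\sigma_\rho$ you construct is exactly the paper's $\check\rho$, your treatment of statement~2 via the modified substitution $\sigma'$ coincides with the paper's, and your implication cycle for statement~3 merely inlines the same appeal to statement~1 applied to $\neg\varphi$ that the paper phrases as a chain of equivalences. No gaps.
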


\begin{proof}[Proof (Sketch)]
\Bfnum{1.}
($\Rightarrow$)
Let $\sigma$ be a substitution such that $\Var(\varphi) \subseteq \VDom(\sigma)$,
and $\hat\sigma$ be defined as in Lemma~\ref{lem:substitution as interpretation}.
Then, $\inter{\varphi}_{\xM,\hat\sigma} = \mathsf{true}$,
and hence $\inter{\varphi\sigma}_{\xM} = \mathsf{true}$
by Lemma~\ref{lem:substitution as interpretation}.
Therefore, $\models_{\xM} \varphi\sigma$.
($\Leftarrow$)
Let $\rho$ be a valuation over a model $\xM = \langle \xI, \xJ \rangle$.
Then, in the view of $\Val_\tau  \cong \xI(\tau)$,
we can take a substitution $\check \rho$
given by $\check \rho(x) = \rho(x) \in \Val$ for all $x \in \Var(\varphi)$.
Then, use Lemma~\ref{lem:substitution as interpretation}
to obtain 
 $\inter{\varphi}_{\xM,\rho} = \inter{\varphi}_{\xM,\hat {\check\rho}} 
 = \mathsf{true}$, from which $\models_{\xM} \varphi$ follows.
\Bfnum{2.}
Take a substitution $\sigma'$ such that 
$\sigma'(x) = \sigma(x)$ for $x \in \Var(\varphi)$
and $\sigma'(x) = x$ otherwise.
Then, using Lemma~\ref{lem:substitution as interpretation}, we have
$\inter{\varphi}_{\inter{\sigma}_\rho} 
= \inter{\varphi}_{\inter{\sigma'}_\rho} 
= \inter{\varphi\sigma'}_{\rho}
= \inter{\varphi\sigma}_{\rho}$.
Thus, $\inter{\varphi\sigma}_{\rho} = \mathsf{true}$ for any $\rho$.
Therefore, $\models_{\xM} \varphi\sigma$.
\Bfnum{3.}
Use \Bfnum{1}.
\end{proof}

\newcommand{\ProofNoRespSubstImplyFalse}{
\ifthenelse{\boolean{OmitProofs}}{\NoRespSubstImplyFalse*}{}
\begin{proof}\ 
\begin{enumerate}
\item 
($\Rightarrow$)
Suppose $\models_{\xM} \varphi$.
Then $\inter{\varphi}_{\xM,\rho} = \mathsf{true}$ for any valuation $\rho$ over $\xM$.
Thus, 
for any substitution $\sigma$ such that $\Var(\varphi) \subseteq \VDom(\sigma)$,
$\inter{\varphi}_{\xM,\hat\sigma} = \mathsf{true}$,
where $\hat\sigma$ is defined as in Lemma~\ref{lem:substitution as interpretation}.
Hence, by Lemma~\ref{lem:substitution as interpretation},
$\inter{\varphi\sigma}_{\xM} = \mathsf{true}$
for any substitution $\sigma$ such that $\Var(\varphi) \subseteq \VDom(\sigma)$.
Therefore, 
$\models_{\xM} \varphi\sigma$ holds
for all substitution $\sigma$
such that $\Var(\varphi) \subseteq \VDom(\sigma)$.
($\Leftarrow$)
Suppose 
$\models_{\xM} \varphi\sigma$ for all substitutions $\sigma$
such that $\Var(\varphi) \subseteq \VDom(\sigma)$.
Let $\rho$ be a valuation over a model $\xM = \langle \xI, \xJ \rangle$.
Then $\rho = (\rho_\tau)_{\tau \in \xSTh}$
for mappings $\rho_\tau\colon \xV^\tau \to \xI(\tau)$.
Now, in the view of $\Val_\tau  \cong \xI(\tau)$,
we can take a substitution $\check \rho$
given by $\check \rho(x) = \rho(x) \in \Val$ for all $x \in \Var(\varphi)$.
Then, as $\Var(\varphi) \subseteq \VDom(\check \rho)$,
we have $\models_{\xM} \varphi\check \rho$ by our assumption.
This means $\inter{\varphi\check \rho}_\xM = \mathsf{true}$.
Hence, it follows from Lemma~\ref{lem:substitution as interpretation} that
$\inter{\varphi}_{\xM,\hat {\check\rho}} = \mathsf{true}$. Moreover, we have
$\hat {\check\rho}(x) = \rho(x)$ for $x \in \Var(\varphi)$ by definition, and
thus $\inter{\varphi}_{\xM,\rho} = \mathsf{true}$. Therefore,
$\inter{\varphi}_{\xM,\rho} = \mathsf{true}$ for any valuation $\rho$ over
$\xM$. This shows $\models_{\xM} \varphi$.

\item
Suppose $\models_{\xM} \varphi$.
Then $\inter{\varphi}_{\xM,\rho} = \mathsf{true}$ for any valuation $\rho$ over $\xM$.
Suppose that $\sigma$ is a substitution such that
$\Var(\varphi) \cap \Dom(\sigma) \subseteq \xT(\xFTh,\xVTh)$.
Let $\sigma'$ be a substitution such that 
$\sigma'(x) = \sigma(x)$ for $x \in \Var(\varphi)$
and $\sigma'(x) = x$ otherwise.
Then $\sigma'(x) \in \xT(\xFTh,\xVTh)$ for any $x \in \xVTh$.
Then, using Lemma~\ref{lem:substitution as interpretation}, we have
$\inter{\varphi}_{\inter{\sigma}_\rho} 
= \inter{\varphi}_{\inter{\sigma'}_\rho} 
= \inter{\varphi\sigma'}_{\rho}
= \inter{\varphi\sigma}_{\rho}$.
Thus, $\inter{\varphi\sigma}_{\rho} = \mathsf{true}$ for any $\rho$.
Therefore, $\models_{\xM} \varphi\sigma$.

\item
Using \Bfnum{1}, we have 
$\models_{\xM} \neg \varphi$
iff 
$\models_{\xM} \neg \varphi\sigma$ for all substitution $\sigma$
such that $\Var(\varphi) \subseteq \VDom(\sigma)$
iff 
$\not \models_{\xM} \varphi\sigma$ for all substitution $\sigma$
such that $\Var(\varphi) \subseteq \VDom(\sigma)$
iff 
there is no substitution $\sigma$
such that $\models_{\xM} \varphi\sigma$ and $\Var(\varphi) \subseteq \VDom(\sigma)$
iff 
$\sigma \models \varphi$ for no substitution $\sigma$.
\end{enumerate}
\end{proof}
}
\ifthenelse{\boolean{OmitProofs}}{}{\ProofNoRespSubstImplyFalse}

LCTRSs admit special rewrite steps over $\xT(\xF,\xV)$
specified by the underlying model $\xM = \langle \xI, \xJ \rangle$.
Such rewrite steps are called \emph{calculation steps} and
denoted by $s \Rca t$, which is defined as follows:
$s \Rca t$ if $s = C[f(c_1,\ldots,c_n)]$ and $t = C[c_0]$
for $f \in \xFTh \setminus \Val$ and $c_0,\ldots,c_n \in \Val$ 
with $c_0 = \xJ(f)(c_1,\ldots,c_n)$ and a context $C$.
The following lemma connects calculation steps and interpretations over ground theory terms $\xT(\xFTh)$.
In the following $s \Ra[!] t$ is used for $s \Ra[*] t$ with $t$ being a normal form with respect to $\Ra$.

\begin{restatable}{lemma}{LemmaPropertiesOfCalculationSteps}
\label{lem:properties of calculation steps}
Let $s,t \in \xT(\xFTh)$. Then, all of the following holds:
\textup{\Bfnum{1.}}
$\inter{t} \in \Val$,
\textup{\Bfnum{2.}}
$t \Rca^{!} \inter{t}$,
\textup{\Bfnum{3.}}
$s \Rca^{*} t$ implies $\inter{s} = \inter{t}$, and 
\textup{\Bfnum{4.}}
$s \Cca^{*} t$ if and only if $\inter{s} = \inter{t}$.
\end{restatable}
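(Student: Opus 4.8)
The plan is to prove the four items in order, each building on its predecessors, resting on two elementary observations. First, the bijection $\Val_\tau \cong \xI(\tau)$ supplied with the model names every element of $\xI(\tau)$ by a unique value; under the convention identifying $c \in \Val$ with $\xJ(c)$ this immediately gives \Bfnum{1}, since for a ground term $t^{\tau} \in \xT(\xFTh)$ we have $\inter{t} \in \xI(\tau)$ and hence $\inter{t} \in \Val$. Second, every value $c \in \Val$ is a $\Rca$-normal form: a calculation redex has the shape $f(c_1,\dots,c_n)$ with $f \in \xFTh \setminus \Val$, whereas a value $c$ is a constant whose only subterm is $c$ itself, with $c \in \Val$, so it matches no redex pattern.

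For \Bfnum{2} I would show $t \Rca^{*} \inter{t}$ by structural induction on $t \in \xT(\xFTh)$; together with the normal-form observation this yields $t \Rca^{!} \inter{t}$. If $t = c$ is a value constant, then $\inter{t} = c$ under the identification and no step is needed; a non-value constant $f$ reduces in one calculation step to $\xJ(f) = \inter{f}$. If $t = f(t_1,\dots,t_n)$, the induction hypothesis gives $t_i \Rca^{*} \inter{t_i} = c_i \in \Val$, so $t \Rca^{*} f(c_1,\dots,c_n)$; when $f \in \Val$ this is already a value, and otherwise $f \in \xFTh \setminus \Val$ enables a final step to $c_0 = \xJ(f)(c_1,\dots,c_n)$, which equals $\inter{t} = \xJ(f)(\inter{t_1},\dots,\inter{t_n})$ by the definition of interpretation.

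For \Bfnum{3} it suffices to treat a single step $s \Rca t$ and close under transitivity. Writing $s = C[f(c_1,\dots,c_n)]$ and $t = C[c_0]$ with $c_0 = \xJ(f)(c_1,\dots,c_n)$, redex and contractum have equal interpretation: $\inter{f(c_1,\dots,c_n)} = \xJ(f)(c_1,\dots,c_n) = c_0 = \inter{c_0}$. I would then invoke a routine sublemma, proved by induction on the context $C$ using the compositional clause for $\inter{\cdot}$ (if $C$ is the empty context this is immediate), stating that $\inter{u} = \inter{u'}$ implies $\inter{C[u]} = \inter{C[u']}$; this gives $\inter{s} = \inter{t}$.

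Finally, for \Bfnum{4}, the relation ``$s$ and $t$ have equal interpretation'' is reflexive, symmetric and transitive; by \Bfnum{3} it contains $\Rca$, hence by symmetry the single conversion step $\Cca$ and therefore its reflexive-transitive closure $\Cca^{*}$, so $s \Cca^{*} t$ implies $\inter{s} = \inter{t}$. Conversely, \Bfnum{2} gives $s \Rca^{*} \inter{s}$ and $t \Rca^{*} \inter{t}$, and since $\inter{s} = \inter{t}$ both reduce to a common value, so $s \Rca^{*} \inter{s} = \inter{t} \mathrel{\Lca^{*}} t$, i.e.\ $s \Cca^{*} t$. I expect no real difficulty here: the only points needing care are the value/interpretation identification underlying \Bfnum{1}, the split of the base case of \Bfnum{2} into value and non-value constants, and a clean statement of the context sublemma so that \Bfnum{3} and thus \Bfnum{4} go through.
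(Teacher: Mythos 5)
Your proposal is correct, and on items \Bfnum{1}, \Bfnum{2} and \Bfnum{4} it coincides with the paper's own argument: the bijection $\Val_\tau \cong \xI(\tau)$ for \Bfnum{1}, structural induction plus the observation that values are $\Rca$-normal forms for \Bfnum{2}, and for \Bfnum{4} the same two halves (the only-if direction from \Bfnum{3}, the if direction via $s \Rca^{*} \inter{s} = \inter{t} \Lca^{*} t$). Where you genuinely diverge is item \Bfnum{3}. The paper proves it by a rewriting-theoretic appeal: the calculation rules $\xRca$ form a terminating, non-overlapping, hence confluent system (citing~\cite{KN13frocos}), so from $s \Rca^{!} \inter{s}$, $t \Rca^{!} \inter{t}$ (item \Bfnum{2}) and $s \Rca^{*} t$, uniqueness of normal forms forces $\inter{s} = \inter{t}$. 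You instead argue semantically and locally: a single calculation step replaces a redex $f(c_1,\ldots,c_n)$ by $c_0 = \xJ(f)(c_1,\ldots,c_n)$ of equal interpretation, and your context sublemma ($\inter{u} = \inter{u'}$ implies $\inter{C[u]} = \inter{C[u']}$, by induction on $C$ using the compositional clause of $\inter{\cdot}$) lifts this to arbitrary positions, with transitivity closing the argument. Your route is more elementary and self-contained, needing no termination or critical-pair input at all, only the inductive definition of the interpretation; the paper's route buys brevity by reusing an off-the-shelf confluence result and, as a side effect, records confluence of $\xRca$, which is of independent use. One cosmetic point: in your inductive step for \Bfnum{2}, the subcase ``$f \in \Val$'' can only arise when $n = 0$, since values are constants, and you already treated constants in the base case; the redundancy is harmless.
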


\begin{proof}[Proof (Sketch)]
\Bfnum{1.}
This claim follows as $\inter{t^\tau} \in \xI(\tau) \cong \Val^\tau$.
\Bfnum{2.}
Show $t \Rca^{*} \inter{t}$ by structural induction on $t$.
Then, the claim follows, 
since values are normal forms with respect to calculation steps.
\Bfnum{3.}
We use the fact that 
the set of calculation rules forms a confluent LCTRS~\cite{KN13frocos}.
Since $s \Rca^{!} \inter{s}$ and $t \Rca^{!} \inter{t}$ from \Bfnum{2},
$s \Rca^{*} t$ implies $\inter{s} = \inter{t}$ by confluence.
\Bfnum{4.}
The \textit{only-if} part follows from \Bfnum{3},
and the \textit{if} part follows from \Bfnum{1}.
\end{proof}

\newcommand{\ProofLemmaPropertiesOfCalculationSteps}{
\ifthenelse{\boolean{OmitProofs}}{\LemmaPropertiesOfCalculationSteps*}{}
\begin{proof}
\Bfnum{1.}
This claim follows as $\inter{t^\tau} \in \xI(\tau) \cong \Val^\tau$.
\Bfnum{2.}
To show this,
it suffices to show $t \Rca^{*} \inter{t}$,
since values are normal forms with respect to calculation steps.
We show $t \Rca^{*} \inter{t}$ by structural induction on $t$.
Suppose $t \in \Val$, then
$\inter{t} = t$ and hence $t \Rca^{*} \inter{t}$.
Suppose $t \not\in \Val$.
Then $t = f(t_1,\ldots,t_n)$ for some 
$f \in \xFTh \setminus \Val$ and $t_1,\ldots,t_n \in \xT(\xFTh)$.
By the induction hypothesis, we have $t_i \Rca^{*} \inter{t_i}$ for all $1 \le i \le n$.
Furthermore, by the definition of the calculation steps
we have $f(\inter{t_1},\ldots,\inter{t_n}) \to \xI(f)(\inter{t_1},\ldots,\inter{t_n})
= \inter{f(t_1,\ldots,t_n)}$.
Therefore, $t = f(t_1,\ldots,t_n) \Rca^{*} f(\inter{t_1},\ldots,\inter{t_n}) \Rca \inter{t}$.
\Bfnum{3.}
First, note that the set of calculation rules
$\xRca = \SET{ f(x_1,\ldots,x_n) \to x_0 ~ \CO{x_0=f(x_1,\ldots,x_n)
\mid f\colon \sig{\tau} \to \tau_0 \in \xFTh \setminus \Val, $ $
x_i \in \Var^{\tau_i} \text{ for all } 0 \le i \le n }$~\cite{KN13frocos}}
is terminating and has no overlaps, and hence is confluent.
Suppose $s \Rca^{*} t$.
Then it follows from \Bfnum{2} that $s \Rca^{!} \inter{s}$ and
$t \Rca^{!} \inter{t}$. Hence, $\inter{s} = \inter{t}$ by confluence.
This shows the desired claim.
\Bfnum{4.}
The \textit{only-if} part of this claim follows immediately from
\Bfnum{3}. The \textit{if} part follows from \Bfnum{1} as $s \Rca^{*} \inter{s}
= \inter{t} \Lca^{*} t$.
\end{proof}
}
\ifthenelse{\boolean{OmitProofs}}{}{\ProofLemmaPropertiesOfCalculationSteps}

The other type of rewrite steps in LCTRSs are rule steps specified by rewrite
rules. Let us fix a signature $\langle \xSTh,\xSTe,\xFTh,\xFTe \rangle$. A
constrained rule of an LCTRS is a triple $\ell \R r~\CO{\varphi}$ of terms
$\ell,r$ with the same sort satisfying $root(\ell) \in \xFTe$ and a logical
constraint $\varphi$. We define $\LVar(\ell \R r~\CO{\varphi}) = (\Var(r)
\setminus \Var(\ell))\cup \Var(\varphi)$, whose members are called \emph{logical
variables} of the rule. The intention is that the logical variables of rules in
LCTRSs are required to be instantiated only by values. Let us also fix a model
$\xM$. Then, a substitution $\gamma$ is said to \emph{respect a rewrite
rule} $\ell \R r~\CO{\varphi}$ if $\LVar(\ell \R r~\CO{\varphi}) \subseteq
\VDom(\gamma)$ and $\models_{\xM} \varphi\gamma$. Using this notation, a rule
step $s \Rru t$ over the model $\xM$ by the rewrite rule $\ell \R
r~\CO{\varphi}$ is given as follows: $s \Rru t$ if and only if $s =
C[\ell\gamma]$ and $t = C[r\gamma]$ for some context $C$ and some substitution
$\gamma$ that respects the rewrite rule $\ell \R r~\CO{\varphi}$.

Finally, a \emph{logically constrained term rewrite system} (LCTRS, for short)
consists of a signature $\Sigma = \langle \xSTh,\xSTe,\xFTh,\xFTe \rangle$, a
model $\xM$ over $\SigmaTh = \langle \xSTh, \xFTh \rangle$ (which induces the
set $\Val \subseteq \xFTh$ of values) and a set $\xR$ of constrained rules over
the signature $\Sigma$. All this together defines rewrite steps consisting of
calculation steps and rule steps. In a practical setting, often some predefined
(semi-)decidable theories are assumed and used as model $\xM$ and theory
signature $\langle \xSTh, \xFTh \rangle$. An example of such a theory is
\emph{linear integer arithmetic}, whose model consists of standard boolean
functions and the set of integers including standard predefined functions on
them. From this point of view, we call the triple $\fU = \langle \xSTh, \xFTh,
\xM \rangle$ of the theory signature and its respective model the
\emph{underlying model} or \emph{background theory} of the LCTRS. We also denote
an LCTRS as $\langle \xM, \xR \rangle$ with an implicit signature or $\langle
\xM, \xR \rangle$ over the signature $\Sigma = \langle \xSTh, \xSTe, \xFTh,
\xFTe \rangle$ for an explicit signature.

\section{Validity of Constrained Equational Theories}
\label{sec:constrained-equational-validity}

In this section, we introduce validity of constrained equational theories
(CE-validity), which is a key concept used throughout the paper. Subsequently,
we present fundamental properties of CE-validity, and show their relation to the
conversion of rewriting.

\subsection{Constrained Equational Theory and Its Validity}

In this subsection, after introducing the notion of CEs,
we define equational systems, which are sets of CEs, and rewriting with respect
to such systems. This gives an equational version of the rewrite step in LCTRSs.
Furthermore, based on these notions, we define the validity of CEs.

Recall that logical variables of a constrained rule are those which are only
allowed to be instantiated by values. As we have seen in the previous section,
rewrite steps of LCTRSs depend on the correct instantiation of the logical
variables of the applied rule. However, the sets of logical variables
$\LVar(\ell \R r~\CO{\varphi})$ and $\LVar(r \R \ell~\CO{\varphi})$ are not
necessarily equivalent, and the CE $\ell \approx r~\CO{\varphi}$ alone does not
suffice to specify the correct logical variables. This motivates us to add an
explicit set $X$ to the CE $\ell \approx r~\CO{\varphi}$ as
$\CEqn{X}{\ell}{r}{\varphi}$ which specifies its logical variables.

\begin{definition}[constrained equation]
Let $\SigmaTe = \langle \xSTe, \xFTe \rangle$ be a term signature over the
underlying model $\fU = \langle \xSTh, \xFTh,\xM \rangle$. A \emph{constrained
equation} (CE, for short) over $\fU$ and $\SigmaTe$ is a quadruple
$\CEqn{X}{s}{t}{\varphi}$ where $s,t$ are terms with the same sort, $\varphi$ is
a logical constraint, and $X \subseteq \xVTh$ is a set of theory variables
satisfying $\Var(\varphi) \subseteq X$. A \emph{logically constrained equational
system} (LCES, for short) is a set of CEs. We abbreviate
$\CEqn{X}{s}{t}{\varphi}$ to $\CEqn{}{s}{t}{\varphi}$ if $\Var(\varphi) = X$. A
CE $\CEqn{X}{s}{t}{\mathsf{true}}$ is abbreviated to
$\CEqn{X}{s}{t}{}$.
\end{definition}
We remark that a constrained rewrite rule $\ell \to r ~\CO{\varphi}$ is
naturally encoded as a CE $\CEqn{X}{\ell}{r}{\varphi}$ by taking  $X =
\LVar(\ell \to r~\CO{\varphi})$. Furthermore, let us illustrate the
aforementioned issues, without an explicit set of logical variables, by an
example.

\begin{example}
Consider the LCTRS $\xR$ over the theory of integer arithmetic and its (labeled) rules 
\begin{align*}
    \alpha\colon \m{f}(x,y) &\R \m{g}(z)~\CO{x = \m{1}} 
    & \beta\colon \m{g}(z) &\R \m{f}(x,y)~\CO{x = \m{1}}
\end{align*}
with their sets of logical variables $\LVar(\alpha) = \SET{x,z}$ and
$\LVar(\beta) = \SET{x,y}$. Transforming them naively into the CE $\m{f}(x,y)
\approx \m{g}(z)~\CO{x = \m{1}}$ and $\m{g}(z) \approx \m{f}(x,y)~\CO{x =
\m{1}}$ would give the set of logical variables $\SET{x}$ for both. We use the
notion of logical variables in Winkler and Middeldorp~\cite{WM18}, where the set
of logical variables of a CE consists of the variables appearing in the
constraint. Obviously, we lose concrete information about logical variables of
the original rules. Clearly, in our notion this information remains intact:
$\CEqn{\SET{x,z}}{\m{f}(x,y)}{\m{g}(z)}{x = \m{1}}$ and
$\CEqn{\SET{x,y}}{\m{g}(z)}{\m{f}(x,y)}{x = \m{1}}$. Note that variables
appearing solely in the set of logical variables and not in the CE have no
effect but are allowed. For example, in the CE
$\CEqn{\SET{x,z,z'}}{\m{f}(x,y)}{\m{g}(z)}{x = \m{1}}$ the logical variable $z'$
has no effect and could be dropped. 
\end{example}

In the following we extend the notion of rewrite steps by using CEs
instead of rewrite rules.

\begin{definition}[{$\Cab[\xE]{}$}]
\label{def:equational-rewriting}
Let $\xE$ be an LCES over the underlying model $\fU = \langle \xSTh, \xFTh,\xM
\rangle$ and the term signature $\SigmaTe = \langle \xSTe, \xFTe \rangle$. For
terms $s,t \in \xT(\xF,\xV)$, we define a \emph{rule} step $s \Cru t $ if $s =
C[\ell\sigma]$ and $t = C[r\sigma]$ (or vice versa) for some CE
$\CEqn{X}{\ell}{r}{\varphi} \in \xE$ and some $X$-valued substitution
$\sigma$ such that $\models_{\xM} \varphi\sigma $. Here, a substitution is said
to be \emph{$X$-valued} if $X \subseteq \VDom(\sigma)$. We let ${\Cab[\xE]{}} =
{\Cca} \cup {\Cru}$, where ${\Cca}$ is the symmetric closure of the calculation
steps ${\Rca}$ specified by $\xM$.
\end{definition}

We give examples on rewriting with CEs.

\begin{example}
\label{exa:modular-arithmetic-es}
Consider integer arithmetic as underlying model $\xM$. We consider the term
sorts $\xSTe = \SET{ \m{Unit} }$ and the term signature $\xFTe =
\SET{\m{cong}\colon \m{Int} \to \m{Unit}}$ where $\m{Int}$ is the
respective sort of the integers. The set $\xE$ of CEs consists of
$\SET{\m{cong}(x) \approx \m{cong}(y)~\CO{\m{mod}(x,12) = \m{mod}(y,12)}}$.
Arithmetic values in intermediate steps of rewrite sequences wrapped in
$\m{cong}$ have the property that they are \emph{congruent modulo 12} and thus
$\xE$ simulates modular arithmetic with modulus 12. Consider the following
sequence: 
\begin{align*} 
\m{cong}(7 + 31) \Cca \m{cong}(38) \Cru[\xE]{} \m{cong}(14) 
\end{align*}
From this we conclude that $7 + 31$, which gives $38$, and $14$ are congruent
modulo 12. Note that the rule step $\Cru[\xE]{}$ does not allow to directly
convert $\m{cong}(7 + 31)$ and $\m{cong}(14)$.
\end{example}

\begin{example}
\label{exa:group with exponentials}
Consider integer arithmetic as the underlying model $\xM$.
We take a term signature
$\xSTe = \SET{ \m{G} }$
and 
$\xFTe = \SET{
\m{e}\colon \m{G},
\m{inv}\colon \m{G} \to \m{G},
\m{*}\colon \m{G} \times \m{G} \to \m{G},
\m{exp}\colon \m{G} \times \m{Int} \to \m{G}}$.
Let the set $\xE$ of CEs consist of:
\[
\begin{array}{rcll@{\qquad}rcll}
(x * y) * z      &\approx&  x * (y * z)& &
  \m{e} * x      &\approx&  x    & \\
\m{inv}(x) * x   &\approx& \m{e} & &
\m{exp}(x,\m{0}) &\approx& \m{e} & \\
\m{exp}(x,\m{1}) &\approx& x     & &
\Pi \SET{ n,m }.\, \m{exp}(x,n) * \m{exp}(x,m) &\approx& \m{exp}(x,m+n) &
\end{array}
\]
As in first-order equational reasoning,
one can show $x *  \m{e} \Cab[\xE]{*} x$.
Thus,
$\m{exp}(x,\m{-1}) 
\Cab[\xE]{}  \m{e} * \m{exp}(x,\m{-1})
\Cab[\xE]{}  (\m{inv}(x) * x) * \m{exp}(x, \m{-1})
\Cab[\xE]{}  \m{inv}(x) * (x* \m{exp}(x,\m{-1}))
\Cab[\xE]{}  \m{inv}(x)*(\m{exp}(x,\m{1}) * \m{exp}(x, \m{-1}))
\Cab[\xE]{}  \m{inv}(x)* \m{exp}(x,\m{1} + (\m{-1}))
\Cab[\xE]{}   \m{inv}(x) * \m{exp}(x, \m{0})
\Cab[\xE]{}   \m{inv}(x)*\m{e}
\Cab[\xE]{*}  \m{inv}(x)
$
as expected. This encodes a system of groups with an explicit 
exponentiation operator $\m{exp}$.
\end{example}

\begin{example}
\label{exp:list functions}
Consider integer arithmetic as the underlying model $\xM$.
We take a term signature
$\xSTe = \SET{\m{Elem}, \m{List},\m{ElemOp}}$
and
$\xFTe = \SET{%
\m{nil}\colon \m{List},\allowbreak
\m{cons}\colon \m{Elem} \times \m{List} \to \m{List},\allowbreak
\m{none}\colon \m{ElemOp},\allowbreak
\m{some}\colon \m{Elem} \to \m{ElemOp},\allowbreak
\m{length}\colon \m{List}  \to \m{Int},\allowbreak
\m{nth}\colon \m{List} \times \m{Int} \to \m{ElemOp}\allowbreak
}$.
Let the set $\xE$ of CEs consist of
\[
\begin{array}{rcll@{\qquad}rcll}
\m{length}(\m{nil}) &\approx& \m{0} & &
\m{length}(\m{cons}(x,xs)) &\approx& \m{length}(xs) + \m{1}\\
\Pi \SET{ n }.\, \m{nth}(\m{nil},n) &\approx& \m{none} & &
\m{nth}(xs,n) &\approx& \m{none} & \CO{n < \m{0}}\\
\m{nth}(\m{cons}(x,xs),\m{0}) &\approx& \m{some}(x) & &
\m{nth}(\m{cons}(x,xs),n) &\approx& \m{nth}(xs,n-\m{1}) & \CO{n > \m{0}}\\
\end{array}
\]
This LCES encodes common list functions that use integers. For program
verification purposes, one may deal with the validity problem of a formula such
as $\m{nth} (xs,n) \not\approx \m{none} \Leftrightarrow \m{0} \leqslant n
\land n < \m{length}(xs)$.
\end{example}

We continue by giving some immediate facts which are used later on.

\begin{restatable}{lemma}{LemmaSymAndClosurePropertiesOfEquationalSteps}   
\label{lem:sym and closure properties of equational steps}
Let $\xE$ be an LCES over the underlying model $\fU = \langle \xSTh,
\xFTh,\xM \rangle$ and the term signature $\SigmaTe = \langle \xSTe, \xFTe
\rangle$. Then, all of the following hold:
\textup{\Bfnum{1.}}
${\Cab[\xE]{}}$ is symmetric,
\textup{\Bfnum{2.}}
${\Cab[\xE]{}}$ is closed under contexts
i.e.\ $s \Cab[\xE]{} t$ implies $C[s] \Cab[\xE]{} C[t]$ for any context $C$, and 
\textup{\Bfnum{3.}}
${\Cab[\xE]{}}$ is closed under substitutions,
i.e.\ $s \Cab[\xE]{} t$ implies $s\sigma \Cab[\xE]{} t\sigma$ 
for any substitution $\sigma$.
\end{restatable}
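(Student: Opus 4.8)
The plan is to exploit the fact that $\Cab[\xE]{}$ is, by definition, the union of two relations---the symmetric calculation relation $\Cca$ and the rule relation $\Cru$---and to verify each of the three closure properties separately for both summands. Since every property in question is preserved under taking unions, it suffices to treat the two constituent relations in turn.

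For \Bfnum{1}, I would simply note that symmetry is already built into the definitions: $\Cca$ is the symmetric closure of $\Rca$, hence symmetric by construction, and the defining clause of $\Cru$ explicitly includes the ``or vice versa'' alternative, so $\Cru$ is symmetric as well. A union of symmetric relations is symmetric, so $\Cab[\xE]{}$ is symmetric.

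For \Bfnum{2}, I would assume $s \Cab[\xE]{} t$ and let $C$ be an arbitrary context. If this is a rule step, then $s = D[\ell\gamma]$ and $t = D[r\gamma]$ (possibly with the roles reversed) for some CE $\CEqn{X}{\ell}{r}{\varphi} \in \xE$ and some $X$-valued $\gamma$ with $\models_{\xM} \varphi\gamma$. Writing $C[s] = C[D[\ell\gamma]]$ and $C[t] = C[D[r\gamma]]$ and observing that the nesting $C[D[\,\cdot\,]]$ is again a context, the same CE and substitution witness $C[s] \Cru C[t]$. For a calculation step the argument is identical, since a calculation redex occurring inside $s$ still occurs inside $C[s]$ after enlarging the surrounding context, and passing to the symmetric closure is harmless.

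The main obstacle is \Bfnum{3}, closure under substitutions. Fix a substitution $\sigma$ and a step $s \Cab[\xE]{} t$. For a rule step $s = D[\ell\gamma]$, $t = D[r\gamma]$ with witness CE $\CEqn{X}{\ell}{r}{\varphi}$ and $X$-valued $\gamma$ satisfying $\models_{\xM} \varphi\gamma$, applying $\sigma$ yields $s\sigma = (D\sigma)[\ell(\gamma\sigma)]$ and $t\sigma = (D\sigma)[r(\gamma\sigma)]$, so the natural candidate witness is the same CE together with the composed substitution $\gamma\sigma$. The crux is to check that $\gamma\sigma$ remains $X$-valued and still satisfies the constraint. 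Both points hinge on the fact that values are ground constants: for every $x \in X$ the term $\gamma(x) \in \Val$ is ground, so $(\gamma\sigma)(x) = \gamma(x)\sigma = \gamma(x) \in \Val$, which shows $\gamma\sigma$ is $X$-valued. Furthermore, since $\Var(\varphi) \subseteq X$ and $\gamma$ sends all of $X$ to ground values, $\varphi\gamma$ is itself ground; hence $\varphi(\gamma\sigma) = (\varphi\gamma)\sigma = \varphi\gamma$, and $\models_{\xM} \varphi\gamma$ carries over verbatim to $\models_{\xM} \varphi(\gamma\sigma)$. This gives $s\sigma \Cru t\sigma$. For a calculation step the same groundness phenomenon settles matters: the redex $f(c_1,\dots,c_n)$ and its contractum $c_0$, with all $c_i \in \Val$, are fixed by $\sigma$, so $s\sigma \Cca t\sigma$ via the identical calculation. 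Combining the two cases yields $s\sigma \Cab[\xE]{} t\sigma$, completing the plan.
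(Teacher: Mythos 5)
Your proposal is correct and takes essentially the same approach as the paper: both split $\Cab[\xE]{}$ into its calculation and rule summands, and for part \textup{\Bfnum{3}} both use the identical witness (the same CE with the composed substitution) justified by the same key observation that values are ground constants, so the composition stays $X$-valued and $\varphi\gamma$ is ground, whence $\models_{\xM} \varphi(\gamma\sigma)$ follows verbatim. Your elaboration of parts \textup{\Bfnum{1}} and \textup{\Bfnum{2}}, which the paper dismisses as trivial, merely spells out the same routine details.
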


\begin{proof}[Proof (Sketch)]
\Bfnum{1} and~\Bfnum{2} are trivial. For~\Bfnum{3} the case $s \Cca t$ is clear.
Suppose $s \Cru t$.
Then $s = C[\ell\rho]$ and $t = C[r\rho]$ (or vice versa)
for some CE $\CEqn{X}{\ell}{r}{\varphi} \in \xE$
and an $X$-valued substitution $\rho$ such that $\models_{\xM} \varphi\rho$.
Then $\varphi\rho = (\varphi\rho)\sigma = \varphi(\sigma \circ
\rho)$ and hence $\models_{\xM} \varphi(\sigma \circ \rho)$. 
Then, the claim follows, as 
$s\sigma = C[\ell\rho]\sigma = C\sigma[\ell(\sigma \circ \rho)]$ and $t\sigma =
C[r\rho]\sigma = C\sigma[r(\sigma \circ \rho)]$.
\end{proof}

\newcommand{\ProofSymAndClosurePropertiesOfEquationalSteps}{
\ifthenelse{\boolean{OmitProofs}}{\LemmaSymAndClosurePropertiesOfEquationalSteps*}{}
\begin{proof}
\Bfnum{1} and~\Bfnum{2} are trivial. For~\Bfnum{3} the case $s \Cca t$ is clear.
So, we consider $s \Cru t$ and have a CE $\CEqn{X}{\ell}{r}{\varphi} \in \xE$,
an $X$-valued substitution $\rho$ such that $\models_{\xM} \varphi\rho$, a
context $C$ such that $s = C[\ell\rho]$ and $t = C[r\rho]$ (or vice versa).
Then, for any $x \in X$, $\sigma(\rho(x)) = \rho(x)$ as $\rho(x) \in \Val$.
Thus, $\sigma \circ \rho$ is again an $X$-valued substitution. Moreover, as
$\Var(\varphi) \subseteq X \subseteq \VDom(\rho)$, we have $\varphi\rho \in
\xT(\xFTh)$. Also, $\varphi\rho = (\varphi\rho)\sigma = \varphi(\sigma \circ
\rho)$ and hence $\models_{\xM} \varphi(\sigma \circ \rho)$. The substitution
$\sigma \circ \rho$ is an $X$-valued substitution with $\models_{\xM}
\varphi(\sigma \circ \rho)$. Therefore, we have $s\sigma \Cru t\sigma$, because
$s\sigma = C[\ell\rho]\sigma = C\sigma[\ell(\sigma \circ \rho)]$ and $t\sigma =
C[r\rho]\sigma = C\sigma[r(\sigma \circ \rho)]$.
\end{proof}
}
\ifthenelse{\boolean{OmitProofs}}{}{\ProofSymAndClosurePropertiesOfEquationalSteps}

We proceed by defining constrained equational theories (CE-theories) and
validity of CEs (CE-validity) with respect to a CE-theory.

\begin{definition}[constrained equational theory]
A \emph{constrained equational theory} is specified by a triple
$\fT = \langle \fU, \SigmaTe, \xE \rangle$, where $\fU = \langle\xSTh,\xFTh, \xM
\rangle$ is an underlying model, $\SigmaTe$ is a term signature over $\fU$ (as
given in the LCTRS formalism), and $\xE$ is an LCES over $\fU,
\SigmaTe$. If no confusion arises, we refer to the CE-theory by
$\langle \xM, \xE \rangle$, without stating its signature explicitly. We also
say that a CE-theory $\langle \xM, \xE \rangle$ is defined over the
signature $\Sigma = \langle \xSTh, \xSTe, \xFTh, \xFTe \rangle$ in order to make
the signature explicit.
\end{definition}

\begin{definition}[CE-validity]
Let $\fT = \langle\xM, \xE \rangle$ be a CE-theory. Then a CE
$\CEqn{X}{s}{t}{\varphi}$ is said to be a \emph{constrained equational
consequence} (CE-consequence, for short) \emph{of $\fT$}  or
\emph{valid} (CE-valid, for clarity), written as $\fT \cec
\CEqn{X}{s}{t}{\varphi}$, if $s\sigma \Cab[\xE]{*} t\sigma$ for all $X$-valued
substitutions $\sigma$ such that $\models_{\xM} \varphi\sigma$. We write $\xE
\cec \CEqn{X}{s}{t}{\varphi}$ if $\xM$ is known from the context.
\end{definition}    

We conclude this subsection with an example on CE-validity.

\begin{example}
\label{exp:abs max}
Consider integer arithmetic as the underlying model $\xM$.
We take the term signature
$\xFTe = \SET{ 
\m{abs}\colon \m{Int} \to \m{Int},
\m{max}\colon \m{Int} \times \m{Int} \to \m{Int} 
}$ 
the set of CEs $\xE$ consisting of
\[
\begin{array}{rcll@{\qquad\qquad}rcll}
\m{abs}(x) &\approx& - x & \CO{x < \m{0}} &
\m{abs}(x) &\approx& x   & \CO{x \ge \m{0}}\\
\m{max}(x,y) &\approx& x  & \CO{ x \ge y} &
\m{max}(x,y) &\approx& y   & \CO{x < y }\\
\end{array}
\]
The following are valid CE-consequences:
\[
\begin{array}{l@{\qquad\qquad}l}
    \fT \cec \CEqn{\SET{ x }}{\m{abs}(x)}{\m{abs}(-x)}{} &
    \fT \cec \CEqn{\SET{ x, y }}{\m{max}(x,y)}{\m{max}(y,x)}{} \\
\multicolumn{2}{l}{\fT \cec \CEqn{\SET{ x, y }}{\m{abs}(\m{max}(x,y))}{\m{max}(\m{abs}(x),\m{abs}(y))}{0 \le x \land 0 \le y }}
\end{array}
\]
On the other hand, the CE $\CEqn{\varnothing}{\m{abs}(x)}{\m{abs}(-x)}{}$
is not a valid CE-consequence: For the $\varnothing$-valued identity
substitution $\sigma$, we have that $\m{abs}(x)\sigma = \m{abs}(x)
\not\Cab[\xE]{*} \m{abs}(-x) = \m{abs}(-x)\sigma$.
\end{example}

\subsection{Properties of CE-Validity}

This subsection covers important properties related to 
CE-validity, for example, we show that 
validity forms an equivalence and a congruence relation.
Furthermore, we cover in which way it is closed under substitutions and contexts,
and how equality can be induced from constraints.

Our first two lemmas follow immediately from the definition of the CE-validity.

\begin{restatable}{lemma}{LemmaRule}
\label{lem:rule}
Let $\fT = \langle \xM, \xE \rangle$  be a CE-theory. Then for any
$\CEqn{X}{s}{t}{\varphi} \in \xE$, we have $\fT \cec \CEqn{X}{s}{t}{\varphi}$.
\end{restatable}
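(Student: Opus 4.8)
The plan is to prove this directly by unwinding the definition of CE-validity. By definition, establishing $\fT \cec \CEqn{X}{s}{t}{\varphi}$ amounts to showing $s\sigma \Cab[\xE]{*} t\sigma$ for every $X$-valued substitution $\sigma$ with $\models_{\xM} \varphi\sigma$. So I would fix an arbitrary such $\sigma$ and produce the conversion $s\sigma \Cab[\xE]{*} t\sigma$ using only the hypothesis $\CEqn{X}{s}{t}{\varphi} \in \xE$.

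The key step is to observe that a single rule step already suffices. Since $\CEqn{X}{s}{t}{\varphi} \in \xE$, and the chosen $\sigma$ is $X$-valued and satisfies $\models_{\xM} \varphi\sigma$, Definition~\ref{def:equational-rewriting} (instantiated with the empty context) yields $s\sigma \Cru t\sigma$, and hence $s\sigma \Cab[\xE]{} t\sigma$. Including this single step into the reflexive-transitive closure gives $s\sigma \Cab[\xE]{*} t\sigma$, which is exactly the required conclusion for the fixed $\sigma$; since $\sigma$ was arbitrary, CE-validity follows.

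I expect no genuine obstacle here: the statement holds purely by definition unwinding, which is precisely why it is grouped among the immediate facts of this subsection. The only point demanding (minimal) attention is matching the syntactic shape of a rule step---which always carries a surrounding context $C$ with $s\sigma = C[\ell\sigma]$ and $t\sigma = C[r\sigma]$---to the bare conversion $s\sigma \Cab[\xE]{} t\sigma$; this is handled by choosing that context to be the empty one, so that $\ell = s$ and $r = t$ directly.
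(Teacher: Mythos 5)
Your proposal is correct and coincides with the paper's own proof: both fix an arbitrary $X$-valued substitution $\sigma$ with $\models_{\xM} \varphi\sigma$ and obtain $s\sigma \Cru t\sigma$ as a single rule step from Definition~\ref{def:equational-rewriting}, then embed it into $\Cab[\xE]{*}$. Your explicit remark that the context is taken to be empty is a minor clarification the paper leaves implicit, but it is not a different argument.
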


\newcommand{\ProofRule}{
\ifthenelse{\boolean{OmitProofs}}{\LemmaRule*}{}
\begin{proof}
Consider a CE-theory $\fT = \langle\xM, \xE \rangle$ and an arbitrary CE
$\CEqn{X}{s}{t}{\varphi} \in \xE$. Let $\sigma$ be an $X$-valued substitution
such that $\models_{\xM} \varphi\sigma$. Then, we have $s\sigma \Cab[\m{rule},
\xE]{} t\sigma$ by definition of rewriting with equations. Hence, $s\sigma
\Cab[\xE]{*} t\sigma$ and therefore $\fT \cec \CEqn{X}{s}{t}{\varphi}$.
\end{proof}
}
\ifthenelse{\boolean{OmitProofs}}{}{\ProofRule}

\begin{restatable}[congruence]{lemma}{LemmaCongruence}
\label{lem:congruence}
Let $\fT = \langle\xM, \xE \rangle$  be a CE-theory. For any set $X
\subseteq \xVTh$ and logical constraint $\varphi$ such that $\Var(\varphi)
\subseteq X$, the binary relation $\fT \cec \CEqn{X}{\cdot}{\cdot}{\varphi}$
over terms is a congruence relation over $\Sigma$.
\end{restatable}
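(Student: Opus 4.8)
The plan is to unfold the definition of $\cec$ and reduce each congruence axiom, for the fixed parameters $X$ and $\varphi$, to a corresponding closure property of the one-step relation $\Cab[\xE]{}$ proved in Lemma~\ref{lem:sym and closure properties of equational steps}. By definition, $\fT \cec \CEqn{X}{s}{t}{\varphi}$ holds precisely when $s\sigma \Cab[\xE]{*} t\sigma$ for every $X$-valued substitution $\sigma$ with $\models_\xM \varphi\sigma$; since $X$ and $\varphi$ are the same in every instance occurring in the statement, I would fix one such admissible $\sigma$ and argue entirely within $\Cab[\xE]{*}$, the reflexive--transitive closure of $\Cab[\xE]{}$. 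It then suffices to check that $\Cab[\xE]{*}$ is a context-closed equivalence relation on $\xT(\xF,\xV)$, and to read the conclusions back through the universal quantifier over $\sigma$.

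For the equivalence axioms: reflexivity $\fT \cec \CEqn{X}{s}{s}{\varphi}$ is immediate, as $s\sigma \Cab[\xE]{*} s\sigma$ via the empty sequence; transitivity follows by concatenating a sequence $s\sigma \Cab[\xE]{*} t\sigma$ with $t\sigma \Cab[\xE]{*} u\sigma$; and symmetry is the only axiom needing an input result, where reversing a sequence $s\sigma \Cab[\xE]{*} t\sigma$ uses the symmetry of $\Cab[\xE]{}$ from Lemma~\ref{lem:sym and closure properties of equational steps}(\Bfnum{1}) to obtain $t\sigma \Cab[\xE]{*} s\sigma$, hence $\fT \cec \CEqn{X}{t}{s}{\varphi}$.

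For compatibility with the operations of $\Sigma$, I would first establish closure under single-hole contexts: assuming $\fT \cec \CEqn{X}{s}{t}{\varphi}$ and fixing an admissible $\sigma$, the identity $C[s]\sigma = C\sigma[s\sigma]$ together with $s\sigma \Cab[\xE]{*} t\sigma$ and context-closure of $\Cab[\xE]{}$ (Lemma~\ref{lem:sym and closure properties of equational steps}(\Bfnum{2}), lifted stepwise to $\Cab[\xE]{*}$) yields $C[s]\sigma = C\sigma[s\sigma] \Cab[\xE]{*} C\sigma[t\sigma] = C[t]\sigma$, so $\fT \cec \CEqn{X}{C[s]}{C[t]}{\varphi}$. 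The many-holes form $\fT \cec \CEqn{X}{f(\seq{s})}{f(\seq{t})}{\varphi}$, obtained from $\fT \cec \CEqn{X}{s_i}{t_i}{\varphi}$ for all $i$, then follows by rewriting one argument at a time and chaining with transitivity; all sorting constraints are respected, since related terms share a sort by the definition of a CE.

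There is no deep obstacle; the content is the observation that reflexivity, symmetry, transitivity and context-closure all lift from $\Cab[\xE]{}$ to $\Cab[\xE]{*}$. The one point deserving care, and where I would be most explicit, is that the admissibility condition $\models_\xM \varphi\sigma$ depends only on $\sigma$ and not on the terms being related, so a single $\sigma$ may be fixed once and reused across all the instances of $\cec$ appearing in a given axiom; this is exactly what lets the pointwise argument in $\Cab[\xE]{*}$ close up into a statement about $\cec$.
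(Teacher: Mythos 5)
Your proposal is correct and follows essentially the same route as the paper's proof: unfold the definition of $\cec$, fix an admissible $X$-valued substitution $\sigma$ with $\models_{\xM}\varphi\sigma$, and lift reflexivity, symmetry, transitivity and context-closure from $\Cab[\xE]{}$ to $\Cab[\xE]{*}$ via Lemma~\ref{lem:sym and closure properties of equational steps}, handling the $n$-ary case argument-by-argument with transitivity. Your explicit remark that the admissibility condition on $\sigma$ is independent of the terms being related, so one $\sigma$ serves all instances of $\cec$ in a given axiom, is exactly the (implicit) hinge of the paper's argument as well.
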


\newcommand{\ProofCongruence}{
\ifthenelse{\boolean{OmitProofs}}{\LemmaCongruence*}{}
\begin{proof}
Consider a CE-theory $\fT$. Furthermore, we assume a set $X \subseteq \xVTh$ and
a logical constraint $\varphi$ such that $\Var(\varphi) \subseteq X$. In order
for the CE-consequence to be a binary congruence relation over terms, it needs
to be reflexive, symmetric, transitive and congruent. We prove those properties
independently:
\begin{enumerate}
\item Trivially $s\sigma \Cab[\xE]{*} s\sigma$ holds for all ($X$-valued)
substitutions $\sigma$. Hence also $\fT \cec \CEqn{X}{s}{s}{\varphi}$.
(reflexivity)
\item 
By Lemma~\ref{lem:sym and closure properties of equational steps},
we have $u \Cab[\xE]{} v$ if and only if $v \Cab[\xE]{} u$. By repeated application
we obtain $u \Cab[\xE]{*} v$ if and only if $v \Cab[\xE]{*} u$.
Consider that $\fT \cec \CEqn{X}{s}{t}{\varphi}$. Then we have for all
$X$-valued substitutions $\sigma$ with $\models_{\xM} \varphi\sigma$ that
$s\sigma \Cab[\xE]{*} t\sigma$, which further implies $t\sigma \Cab[\xE]{*}
s\sigma$. Therefore, $\fT \cec \CEqn{X}{t}{s}{\varphi}$. (symmetry)

\item Consider that $\fT \cec \CEqn{X}{s}{t}{\varphi}$ and $\fT \cec
\CEqn{X}{t}{u}{\varphi}$. This implies that for all $X$-valued substitutions
$\sigma$ with $\models_{\xM} \varphi\sigma$ that $s\sigma \Cab[\xE]{*} t\sigma
\Cab[\xE]{*} u\sigma$, which gives further that $s\sigma \Cab[\xE]{*} u\sigma$.
Therefore, $\fT \cec \CEqn{X}{s}{u}{\varphi}$ follows. (transitivity)

\item Consider that $\fT \cec \CEqn{X}{s_i}{t_i}{\varphi}$ for all $1 \leqslant
i \leqslant n$. We have for all $X$-valued substitutions $\sigma$ with
$\models_{\xM} \varphi\sigma$ that $s_i\sigma \Cab[\xE]{*} t_i\sigma$ for all $1
\leqslant i \leqslant n$, which implies that $f(\seq{s})\sigma =
f(s_1\sigma,\ldots,s_n\sigma) \Cab[\xE]{*} f(t_1\sigma,\ldots,t_n\sigma) =
f(\seq{t})\sigma$ for a $f \in \Sigma$ by Lemma~\ref{lem:sym and closure
properties of equational steps}. Therefore $\fT \cec
\CEqn{X}{f(\seq{s})}{f(\seq{t})}{\varphi}$. (congruence)
\end{enumerate}
All properties in order to form a congruence relation are fulfilled.
\end{proof}
}
\ifthenelse{\boolean{OmitProofs}}{}{\ProofCongruence}

For stability under substitutions, we differentiate two kinds;
for each CE $\CEqn{X}{s}{t}{\varphi}$,
the first one considers substitutions instantiating variables in $X$; the second
one considers substitutions instantiating variables not in $X$.

\begin{restatable}[stability of theory terms]{lemma}{LemmaStability}
\label{lem:stability}
Let $\fT = \langle \xM, \xE \rangle$ be a CE-theory.
Let $X,Y \subseteq \xVTh$ be sets of theory variables and $\sigma$ a
substitution such that $\sigma(y) \in \xT(\xFTh,X)$ for any $y \in Y$. If $\fT
\cec \CEqn{Y}{s}{t}{\varphi}$, then $\fT \cec
\CEqn{X}{s\sigma}{t\sigma}{\varphi\sigma}$.
\end{restatable}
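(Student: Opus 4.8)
The plan is to unfold the definition of CE-validity and reduce the statement to a single instance. First I would record that the conclusion is well-formed: since $\Var(\varphi) \subseteq Y$ and $\sigma(y) \in \xT(\xFTh,X)$ for every $y \in Y$, we obtain $\Var(\varphi\sigma) \subseteq X$, so $\CEqn{X}{s\sigma}{t\sigma}{\varphi\sigma}$ is indeed a CE. Now fix an arbitrary $X$-valued substitution $\delta$ with $\models_{\xM} (\varphi\sigma)\delta$; the goal becomes $s\sigma\delta \Cab[\xE]{*} t\sigma\delta$. The naive idea is to feed the composite $v \mapsto \sigma(v)\delta$ into the hypothesis $\fT \cec \CEqn{Y}{s}{t}{\varphi}$, but this fails: for $y \in Y$ the term $\sigma(y)\delta$ is a \emph{ground theory term} (as $\sigma(y) \in \xT(\xFTh,X)$ and $\delta$ maps the variables of $X$ to values), yet not necessarily a value, so the composite is not $Y$-valued. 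Bridging this gap is the heart of the proof.

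To close it, I would define a genuinely $Y$-valued substitution $\tau$ by collapsing the offending subterms to their computed values: set $\tau(y) = \inter{\sigma(y)\delta}$ for $y \in Y$ and $\tau(v) = \sigma(v)\delta$ for every other variable $v$. By Lemma~\ref{lem:properties of calculation steps}(1) we have $\inter{\sigma(y)\delta} \in \Val$, hence $Y \subseteq \VDom(\tau)$ and $\tau$ is $Y$-valued. It then remains to verify $\models_{\xM} \varphi\tau$. Since $\Var(\varphi) \subseteq Y$, the ground theory terms $(\varphi\sigma)\delta$ and $\varphi\tau$ differ only at the occurrences of $Y$-variables, where $\sigma(y)\delta \Rca^* \inter{\sigma(y)\delta}$ by Lemma~\ref{lem:properties of calculation steps}(2); closure of calculation steps under contexts gives $(\varphi\sigma)\delta \Rca^* \varphi\tau$, and Lemma~\ref{lem:properties of calculation steps}(3) yields $\inter{(\varphi\sigma)\delta} = \inter{\varphi\tau}$. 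As both sides are ground and $\models_{\xM} (\varphi\sigma)\delta$, we conclude $\inter{\varphi\tau} = \mathsf{true}$, i.e.\ $\models_{\xM} \varphi\tau$.

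With such a $\tau$ the hypothesis applies and gives $s\tau \Cab[\xE]{*} t\tau$, and I would transport this back using the same calculation steps. Exactly as for $\varphi$, the terms $s\sigma\delta$ and $s\tau$ agree off the $Y$-variable occurrences of $s$ and calc-reduce to each other there, so $s\sigma\delta \Rca^* s\tau$, and likewise $t\sigma\delta \Rca^* t\tau$. Since $\Cca$ is the symmetric closure of $\Rca$ and $\Cca \subseteq \Cab[\xE]{}$, I may chain $s\sigma\delta \Cab[\xE]{*} s\tau \Cab[\xE]{*} t\tau \Cab[\xE]{*} t\sigma\delta$, which yields $s\sigma\delta \Cab[\xE]{*} t\sigma\delta$, and as $\delta$ was arbitrary, $\fT \cec \CEqn{X}{s\sigma}{t\sigma}{\varphi\sigma}$. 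The main obstacle is precisely the one flagged above: value-instantiation is not preserved under composing $\sigma$ with $\delta$, so the hypothesis cannot be instantiated directly; the remedy is to normalize the arising ground theory subterms to values through calculation steps, whose soundness (equal interpretation, hence preserved truth of the constraint, and convertibility absorbed into $\Cab[\xE]{*}$) is supplied by Lemma~\ref{lem:properties of calculation steps}.
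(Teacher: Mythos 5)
Your proof is correct and follows essentially the same route as the paper's: both identify that the composite substitution $\delta \circ \sigma$ fails to be $Y$-valued, repair it by defining a $Y$-valued substitution that maps each $y \in Y$ to the value $\inter{\sigma(y)\delta}$ (using Lemma~\ref{lem:properties of calculation steps}), and then chain the conversions $s\sigma\delta \Cab[\xE]{*} s\tau \Cab[\xE]{*} t\tau \Cab[\xE]{*} t\sigma\delta$. The only cosmetic difference is that you justify $\models_{\xM} \varphi\tau$ via calculation steps and interpretation preservation (Lemma~\ref{lem:properties of calculation steps}), whereas the paper invokes Lemma~\ref{lem:substitution as interpretation} directly; your explicit stipulation that $\tau$ agrees with $\delta \circ \sigma$ outside $Y$ is, if anything, slightly more careful than the paper's implicit treatment.
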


\begin{proof}[Proof (Sketch)]
Take any $X$-valued substitution $\theta$ with $\models_{\xM}
(\varphi\sigma)\theta$. This gives a $Y$-valued substitution $\xi$ by defining
$\xi(y) = \inter{(\theta \circ \sigma)(y)}$ for each $y \in Y$. From
Lemma~\ref{lem:properties of calculation steps}, we know $(\theta \circ
\sigma)(y) \Cab[\xE]{*} \xi(y)$ for any $y \in Y$. We also have $\models_{\xM}
\varphi\xi$ by Lemma~\ref{lem:substitution as interpretation}, and hence $s\xi
\Cab[\xE]{*} t\xi$ by assumption. Thus, using Lemma~\ref{lem:sym and closure
properties of equational steps}, we obtain $(s\sigma)\theta = s(\theta \circ
\sigma) \Cab[\xE]{*} s\xi \Cab[\xE]{*} t\xi \Cab[\xE]{*} t(\theta \circ \sigma)
= (t\sigma)\theta$.
\end{proof}

\newcommand{\ProofStability}{
\ifthenelse{\boolean{OmitProofs}}{\LemmaStability*}{}
\begin{proof}
Consider a CE-theory $\fT = \langle\xM, \xE \rangle$,
the sets $X,Y \subseteq \xVTh$ and the substitution $\sigma$ such that
$\sigma(y) \in \xT(\xFTh, X)$ for any $y \in Y$. Moreover, assume 
$\fT \cec \CEqn{Y}{s}{t}{\varphi}$ which gives $s\gamma \Cab[\xE]{*} t\gamma$ for all
$Y$-valued substitutions $\gamma$ such that $\models_{\xM} \varphi\gamma$.
Let $\theta$ be an $X$-valued substitution such that 
$\models_{\xM} (\varphi\sigma)\theta$,
i.e.\ $\models_{\xM} \varphi(\theta \circ \sigma)$.
Take $y \in Y$. Then, we have $\sigma(y) \in \xT(\xFTh,X)$ by assumption,
and thus $\theta(\sigma(y)) \in \xT(\xFTh)$ as $\theta$ is $X$-valued.
Hence, we have $(\theta \circ \sigma)(y) \in \xT(\xFTh)$ for all $y \in Y$.
Now, take for each $y \in Y$,  
the value $c_y = \inter{(\theta \circ \sigma)(y)}$.
This gives a $Y$-valued substitution $\xi$ by defining 
$\xi(y) = c_y$ for each $y \in Y$.
From this we obtain $(\theta \circ \sigma)(y) \Cab[\xE]{*} \xi(y)$ for any $y \in Y$
by Lemma~\ref{lem:properties of calculation steps},
and thus
$s(\theta \circ \sigma) \Cab[\xE]{*} s\xi$ 
and
$t(\theta \circ \sigma) \Cab[\xE]{*} t\xi$ 
by Lemma~\ref{lem:sym and closure properties of equational steps}.
Furthermore, 
we have $\inter{\varphi(\theta \circ \sigma)} = \inter{\varphi\xi}$
by Lemma~\ref{lem:substitution as interpretation},
and thus, by $\models_{\xM} \varphi(\theta \circ \sigma)$,
we obtain $\models_{\xM} \varphi\xi$.
Hence $\xi$ is a $Y$-valued substitution such 
that $\models_{\xM} \varphi\xi$, and by assumption
we obtain $s\xi \Cab[\xE]{*} t\xi$.
Thus,
$(s\sigma)\theta
= s(\theta \circ \sigma) \Cab[\xE]{*} s\xi 
\Cab[\xE]{*} t\xi
\Cab[\xE]{*} t(\theta \circ \sigma)
= (t\sigma)\theta$.
We have shown 
for all $X$-valued substitutions $\theta$ with $\models_{\xM} (\varphi\sigma)\theta$
that $(s\sigma)\theta \Cab[\xE]{*} (t\sigma)\theta$ from which follows
$\fT \cec \CEqn{X}{s\sigma}{t\sigma}{\varphi\sigma}$.
\end{proof}
}
\ifthenelse{\boolean{OmitProofs}}{}{\ProofStability}

\begin{restatable}[general stability]{lemma}{LemmaStabilityGeneral}
\label{lem:stability general}
Let $\langle\xM, \xE \rangle$ be a CE-theory
and $\sigma$ a substitution such that
$\Dom(\sigma) \cap X = \varnothing$.
Then, if $\xE \cec \CEqn{X}{s}{t}{\varphi}$
then $\xE \cec \CEqn{X}{s\sigma}{t\sigma}{\varphi}$.
\end{restatable}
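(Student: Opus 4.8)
The plan is to unfold the definition of CE-validity and reduce the goal to a single application of the hypothesis through a composed substitution. To establish $\xE \cec \CEqn{X}{s\sigma}{t\sigma}{\varphi}$ I must show $(s\sigma)\tau \Cab[\xE]{*} (t\sigma)\tau$ for every $X$-valued substitution $\tau$ with $\models_{\xM} \varphi\tau$. Writing $(s\sigma)\tau = s(\tau \circ \sigma)$ and $(t\sigma)\tau = t(\tau \circ \sigma)$, the natural candidate to feed into the hypothesis is the composite $\theta = \tau \circ \sigma$.

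The key observation, which makes everything go through, is that $\sigma$ fixes every variable of $X$: since $\Dom(\sigma) \cap X = \varnothing$ we have $\sigma(x) = x$, and hence $\theta(x) = \tau(\sigma(x)) = \tau(x)$, for all $x \in X$. Two facts follow. First, $\theta$ is again $X$-valued, because for $x \in X$ we have $\theta(x) = \tau(x) \in \Val$ as $\tau$ is $X$-valued. Second, since $\Var(\varphi) \subseteq X$ by the definition of a CE, the substitutions $\theta$ and $\tau$ agree on all variables occurring in $\varphi$, so $\varphi\theta = \varphi\tau$ and therefore $\models_{\xM} \varphi\theta$.

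With $\theta$ shown to be an $X$-valued substitution satisfying $\models_{\xM} \varphi\theta$, the hypothesis $\xE \cec \CEqn{X}{s}{t}{\varphi}$ yields $s\theta \Cab[\xE]{*} t\theta$, that is $(s\sigma)\tau \Cab[\xE]{*} (t\sigma)\tau$, which is exactly what is required. Since $\tau$ was arbitrary, this establishes $\xE \cec \CEqn{X}{s\sigma}{t\sigma}{\varphi}$.

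I do not expect any genuine obstacle here. In contrast to Lemma~\ref{lem:stability}, where $\sigma$ acts on the logical variables and one must push instantiations through values using Lemma~\ref{lem:properties of calculation steps}, here $\sigma$ leaves $X$ untouched, so no calculation steps or interpretation arguments are needed and the constraint is literally unchanged. The only point requiring care is the bookkeeping that $\sigma$ behaves as the identity on $X$ and that $\varphi$ mentions only variables of $X$; both are immediate from the hypotheses $\Dom(\sigma) \cap X = \varnothing$ and $\Var(\varphi) \subseteq X$.
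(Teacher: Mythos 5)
Your proof is correct and follows essentially the same route as the paper's: both take an arbitrary $X$-valued substitution satisfying $\varphi$, compose it with $\sigma$, and use $\Dom(\sigma)\cap X=\varnothing$ to conclude that the composite is $X$-valued with the constraint instance unchanged (your $\theta=\tau\circ\sigma$ is the paper's $\gamma=\delta\circ\sigma$). No gaps.
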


\begin{proof}[Proof (Sketch)]
Take any $X$-valued substitution $\delta$ such that $\models_{\xM} \varphi\delta$.
Take $\gamma = \delta \circ \sigma$.
From $\Dom(\sigma) \cap X = \varnothing$,
we have $\varphi\gamma = \varphi\delta$, and therefore,
$s\sigma\delta \Cab[\xE]{*} t\sigma\delta$ holds.
\end{proof}

\newcommand{\ProofStabilityGeneral}{
\ifthenelse{\boolean{OmitProofs}}{\LemmaStabilityGeneral*}{}
\begin{proof}
Assume $\xE \cec \CEqn{X}{s}{t}{\varphi}$
and a substitution $\sigma$ such that $\Dom(\sigma) \cap X = \varnothing$.
Let $\delta$ be an $X$-valued substitution such that $\models_{\xM} \varphi\delta$.
Consider the substitution $\gamma = \delta \circ \sigma$.
By $\Dom(\sigma) \cap X = \varnothing$,
we have 
$\gamma(x) = \delta(\sigma(x)) = \delta(x)$ for all $x \in X$.
Thus, $\gamma$ is $X$-valued and $\varphi\gamma = \varphi\delta$.
From our assumption $\xE \cec \CEqn{X}{s}{t}{\varphi}$,
we obtain $s\gamma \Cab[\xE]{*} t\gamma$ and further
$s\sigma\delta \Cab[\xE]{*} t\sigma\delta$.
We have that
$s\sigma\delta \Cab[\xE]{*} t\sigma\delta$ for 
any $X$-valued substitution with $\models_{\xM} \varphi\delta$,
which concludes 
$\xE \cec \CEqn{X}{s\sigma}{t\sigma}{\varphi}$.
\end{proof}
}
\ifthenelse{\boolean{OmitProofs}}{}{\ProofStabilityGeneral}

One may expect that $\xE \cec \CEqn{X}{s}{t}{\varphi}$ holds for
equivalent terms $s, t$ such that $\varphi \Rightarrow s = t$ is valid. In fact,
a more general result can be obtained.

\begin{restatable}[model consequence]{lemma}{LemmaModelConsequence}
\label{lem:model consequence}
    Let $\langle \xM, \xE \rangle$ be a CE-theory,
    $X \subseteq \xVTh$ a set of theory variables, 
    $s,t \in \xT(\xFTh,X)$,
    and $\varphi$ a logical constraint over $X$.
    If $\models_{\xM} ( \varphi\sigma \Rightarrow s\sigma = t \sigma )$
    holds for all $X$-valued substitutions $\sigma$,
    then $\xE \cec \CEqn{X}{s}{t}{\varphi}$.
\end{restatable}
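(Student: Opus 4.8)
The plan is to unfold the definition of CE-validity and reduce the goal to a statement purely about interpretations in $\xM$, which can then be discharged by the calculation-step lemma with $\xE$ playing no role whatsoever. Concretely, I would fix an arbitrary $X$-valued substitution $\sigma$ with $\models_{\xM} \varphi\sigma$ and aim to establish $s\sigma \Cab[\xE]{*} t\sigma$. The first observation is that, since $s,t \in \xT(\xFTh,X)$, we have $\Var(s),\Var(t) \subseteq X$, and because $\sigma$ is $X$-valued every variable of $X$ is mapped to a value; hence both $s\sigma$ and $t\sigma$ are \emph{ground} theory terms in $\xT(\xFTh)$, and the same applies to $\varphi\sigma$.

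Next I would instantiate the hypothesis at this particular $\sigma$, obtaining $\models_{\xM} (\varphi\sigma \Rightarrow s\sigma = t\sigma)$. Since $\varphi\sigma$, $s\sigma$ and $t\sigma$ are ground, no valuation is involved, so combining this implication with the standing assumption $\models_{\xM} \varphi\sigma$ yields $\models_{\xM} (s\sigma = t\sigma)$ by the truth table of implication. Using the standing assumption that the equality symbol $=$ is interpreted by its default semantics, together with the groundness of $s\sigma$ and $t\sigma$, this validity is equivalent to the interpretation equality $\inter{s\sigma} = \inter{t\sigma}$.

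With $\inter{s\sigma} = \inter{t\sigma}$ in hand, Lemma~\ref{lem:properties of calculation steps}(\Bfnum{4}) gives $s\sigma \Cca^{*} t\sigma$. Since $\Cca$ is a subrelation of $\Cab[\xE]{}$ by definition of the latter (recall ${\Cab[\xE]{}} = {\Cca} \cup {\Cru}$), we obtain $s\sigma \Cab[\xE]{*} t\sigma$, as required. As $\sigma$ was an arbitrary $X$-valued substitution satisfying $\models_{\xM} \varphi\sigma$, this establishes $\xE \cec \CEqn{X}{s}{t}{\varphi}$.

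The argument is short, and the only genuinely delicate step is the passage from $\models_{\xM} (s\sigma = t\sigma)$ to the interpretation equality $\inter{s\sigma} = \inter{t\sigma}$: this must invoke the default interpretation of $=$ and exploit that $s\sigma, t\sigma$ are ground so that the interpretation is valuation-independent. Everything afterwards is a mechanical application of the calculation-step lemma. The conceptual point worth emphasising is that, once the logical variables of $X$ are instantiated by values, the two theory terms become equal \emph{in the model}, and this model-level equality can be realised rewrite-theoretically using symmetric calculation steps alone, so the equations of $\xE$ are never needed.
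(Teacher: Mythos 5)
Your proposal is correct and follows essentially the same route as the paper's own proof: instantiate the hypothesis at an arbitrary $X$-valued $\sigma$ satisfying $\models_{\xM}\varphi\sigma$, conclude $\inter{s\sigma}_{\xM} = \inter{t\sigma}_{\xM}$, and apply Lemma~\ref{lem:properties of calculation steps} to obtain $s\sigma \Cab[\xE]{*} t\sigma$ via calculation steps alone. Your added care about groundness of $s\sigma,t\sigma$ and the default interpretation of $=$ simply makes explicit what the paper leaves implicit.
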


\begin{proof}[Proof (Sketch)]
For any $X$-valued substitution $\sigma$ with $\models_{\xM} \varphi\sigma$,
we have $\inter{s\sigma}_{\xM} = \inter{t\sigma}_{\xM}$.
Then, use Lemma~\ref{lem:properties of calculation steps}
to obtain $s\sigma \Cab[\xE]{*} t\sigma$.
\end{proof}

\newcommand{\ProofModelConsequence}{
\ifthenelse{\boolean{OmitProofs}}{\LemmaModelConsequence*}{}
\begin{proof}
Consider a CE-theory $\fT = \langle \xM, \xE \rangle$,
a set $X \subseteq \xVTh$, terms $s,t \in \xT(\xFTh,X)$ and a logical
constraint $\varphi$ with $\Var(\varphi) \subseteq X$. 
Assume that
$\models_{\xM} (\varphi\sigma \Rightarrow s\sigma = t\sigma)$ for all $X$-valued
substitutions $\sigma$. 
Let $\sigma$ be an $X$-valued substitution with $\models_{\xM} \varphi\sigma$.
By our assumption, $\models_{\xM} s\sigma = t\sigma$ holds and thus also
$\inter{s\sigma}_{\xM} = \inter{t\sigma}_{\xM}$.
By Lemma~\ref{lem:properties of calculation steps}, 
we obtain $s\sigma \Cab[\xE]{*} t\sigma$.
We have that $s\sigma \Cab[\xE]{*} t\sigma$ 
for all $X$-valued substitution $\sigma$ with $\models_{\xM} \varphi\sigma$.
Therefore, $\xE \cec \CEqn{X}{s}{t}{\varphi}$.
\end{proof}
}
\ifthenelse{\boolean{OmitProofs}}{}{\ProofModelConsequence}

\begin{restatable}{corollary}{CorollaryGeneralTheoryRule}
\label{lem:general theory rule}
\label{cor:general theory rule}
Let $\langle\xM, \xE \rangle$ be a CE-theory,
$X\subseteq \xVTh$ a set of theory variables,
and $\varphi \mr{\Rightarrow} s = t$ a logical constraint over $X$
such that $\models_{\xM} ( \varphi \mr{\Rightarrow} s = t )$.
Then, $\xE \cec \CEqn{X}{s}{t}{\varphi}$.
\end{restatable}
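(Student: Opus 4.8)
The plan is to recognize this corollary as an immediate specialization of Lemma~\ref{lem:model consequence}. Since $\varphi \Rightarrow s = t$ is assumed to be a logical constraint over $X$, every variable occurring in it lies in $X$; in particular $\Var(\varphi) \subseteq X$ and $s,t \in \xT(\xFTh,X)$, so the structural hypotheses of Lemma~\ref{lem:model consequence} are already satisfied. It therefore remains only to establish the single semantic hypothesis of that lemma, namely $\models_{\xM}(\varphi\sigma \Rightarrow s\sigma = t\sigma)$ for every $X$-valued substitution $\sigma$.

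The key step is to upgrade the given assumption $\models_{\xM}(\varphi \Rightarrow s = t)$, which concerns the constraint itself, into the corresponding family of validities of its instances. For this I would invoke Lemma~\ref{lem:no-resp-subst-imply-false}(1): writing $\psi$ for $\varphi \Rightarrow s = t$, any $X$-valued substitution $\sigma$ satisfies $\Var(\psi) \subseteq X \subseteq \VDom(\sigma)$, so from $\models_{\xM}\psi$ the lemma yields $\models_{\xM}\psi\sigma$ for all such $\sigma$. As instantiation distributes over the logical connective and over the equality symbol, $\psi\sigma$ is literally $\varphi\sigma \Rightarrow s\sigma = t\sigma$, which is exactly the hypothesis required by Lemma~\ref{lem:model consequence}.

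Feeding this into Lemma~\ref{lem:model consequence} then delivers $\xE \cec \CEqn{X}{s}{t}{\varphi}$, concluding the argument. I do not expect any genuine obstacle here, since the statement is a one-step consequence of two earlier results; the only points demanding care are the bookkeeping that makes Lemma~\ref{lem:no-resp-subst-imply-false} applicable---the containment $\Var(\psi) \subseteq \VDom(\sigma)$ arising because $\sigma$ is $X$-valued and $\psi$ is over $X$---and the purely syntactic observation that substituting into $\varphi \Rightarrow s = t$ produces the constraint built from $\varphi\sigma$, $s\sigma$, and $t\sigma$.
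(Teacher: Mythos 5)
Your proposal is correct and matches the paper's own proof essentially step for step: both reduce the corollary to Lemma~\ref{lem:model consequence} and use Lemma~\ref{lem:no-resp-subst-imply-false} to pass from $\models_{\xM}(\varphi \Rightarrow s = t)$ to $\models_{\xM}(\varphi\sigma \Rightarrow s\sigma = t\sigma)$ for every $X$-valued $\sigma$. The only cosmetic difference is that you invoke item~(1) of that lemma via $\Var(\varphi \Rightarrow s = t) \subseteq X \subseteq \VDom(\sigma)$, while the paper verifies the side condition in the form $\Var(\varphi \Rightarrow s = t) \cap \Dom(\sigma) \subseteq \VDom(\sigma)$; both verifications are valid and amount to the same bookkeeping.
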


\newcommand{\ProofGeneralTheoryRule}{
\ifthenelse{\boolean{OmitProofs}}{\CorollaryGeneralTheoryRule*}{}
\begin{proof}
    Let $X$ be a set of theory variables and
    $\varphi \mr{\Rightarrow} s = t$ a logical constraint over $X$.
    Suppose $\models_\xM (\varphi \mr{\Rightarrow} s = t)$.
    Let $\sigma$ be an $X$-valued substitution.
    Then, it follows from $\Var(\varphi,s,t) \subseteq X$ 
    that $\Var(\varphi \Rightarrow s = t) \cap \Dom(\sigma) \subseteq
    X \cap \Dom(\sigma) \subseteq \VDom(\sigma)$.
    Thus, by Lemma~\ref{lem:no-resp-subst-imply-false},
    $\models_{\xM} (\varphi\sigma \mr{\Rightarrow} s\sigma = t\sigma)$ holds.
    By Lemma~\ref{lem:model consequence} we
    obtain $\xE \cec \CEqn{X}{s}{t}{\varphi}$.
\end{proof}
}
\ifthenelse{\boolean{OmitProofs}}{}{\ProofGeneralTheoryRule}

\subsection{Relations to Conversion of Rewrite Steps}

In this subsection, we present characterizations of CE-validity from the
perspective of logically constrained rewriting with respect to equations.

\begin{restatable}{theorem}{TheoremRelationBetweenValidityAndConversionOfRewriting}
\label{thm:relation between validity and conversion of rewriting}
For a CE-theory $\langle\xM, \xE \rangle$,
$s \Cab[\xE]{*} t$
if and only if $\xE \cec \CEqn{\varnothing}{s}{t}{\mathsf{true}}$.
\end{restatable}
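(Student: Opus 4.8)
The plan is to prove the biconditional $s \Cab[\xE]{*} t \iff \xE \cec \CEqn{\varnothing}{s}{t}{\m{true}}$ by unwinding the definition of CE-validity for the special case where the logical variable set $X$ is empty and the constraint is $\m{true}$. The key observation is that when $X = \varnothing$, the only substitution I need to consider in the definition of $\cec$ is one that is vacuously $\varnothing$-valued, namely the identity. First I would recall that by Definition~\ref{def:equational-rewriting}, a substitution $\sigma$ is $\varnothing$-valued precisely when $\varnothing \subseteq \VDom(\sigma)$, which holds for \emph{every} substitution, including the identity substitution $\m{id}$. Second, I would note that $\models_{\xM} \m{true}\sigma$ holds trivially for any $\sigma$, since $\m{true}$ interprets to $\mathsf{true}$ under every valuation.

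For the forward direction ($\Rightarrow$), I would assume $s \Cab[\xE]{*} t$ and show $\xE \cec \CEqn{\varnothing}{s}{t}{\m{true}}$. By the definition of CE-validity, I must show $s\sigma \Cab[\xE]{*} t\sigma$ for every $\varnothing$-valued substitution $\sigma$ with $\models_{\xM} \m{true}\sigma$. The natural route is to use stability under substitution: since $s \Cab[\xE]{*} t$ holds and $\Cab[\xE]{}$ is closed under substitutions by Lemma~\ref{lem:sym and closure properties of equational steps}\,\textbf{3}, iterating this closure along the conversion sequence gives $s\sigma \Cab[\xE]{*} t\sigma$ for every substitution $\sigma$. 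Hence the required condition holds in particular for all $\varnothing$-valued $\sigma$, which establishes CE-validity.

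For the backward direction ($\Leftarrow$), I would assume $\xE \cec \CEqn{\varnothing}{s}{t}{\m{true}}$ and instantiate the definition with the identity substitution $\m{id}$. Since $\m{id}$ is $\varnothing$-valued and $\models_{\xM} \m{true}\,\m{id}$ holds trivially, the definition yields $s\,\m{id} \Cab[\xE]{*} t\,\m{id}$, that is, $s \Cab[\xE]{*} t$. This direction is essentially immediate once the right substitution is chosen.

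I do not expect any serious obstacle here; the statement is a straightforward corollary of the definitions, with the forward direction leaning on the substitution-closure property already established in Lemma~\ref{lem:sym and closure properties of equational steps}. The only point requiring slight care is the bookkeeping observation that the empty set $X = \varnothing$ imposes no genuine constraint on $\sigma$ (every substitution, in particular the identity, is $\varnothing$-valued) and that the constraint $\m{true}$ is satisfied under every substitution; once these two facts are made explicit, both directions follow directly.
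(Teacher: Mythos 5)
Your proposal is correct and follows essentially the same route as the paper's proof: both unwind CE-validity using the facts that every substitution is $\varnothing$-valued and that $\mathsf{true}$ is always satisfied, then use substitution-closure of $\Cab[\xE]{}$ (Lemma~\ref{lem:sym and closure properties of equational steps}) for one direction and the identity substitution for the other.
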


\begin{proof}[Proof (Sketch)]
We have $\xE \cec
\CEqn{\varnothing}{s}{t}{\mathsf{true}}$ if and only if $s\sigma \Cab[\xE]{*}
t\sigma$ for any $\varnothing$-valued substitution $\sigma$ such that $\sigma
\models \mathsf{true}$ if and only if $s\sigma \Cab[\xE]{*} t\sigma$ for any
substitution $\sigma$.
Thus, the claim follows by Lemma~\ref{lem:sym and closure properties of equational steps}.
\end{proof}

\newcommand{\ProofRelationBetweenValidityAndConversionOfRewriting}{
\ifthenelse{\boolean{OmitProofs}}{\TheoremRelationBetweenValidityAndConversionOfRewriting*}{}
\begin{proof}
Note that $\varnothing$-valued substitutions refer to all substitutions
and that 
$\sigma \models \mathsf{true}$ holds for any substitutions $\sigma$.
Thus, it follows from the definitions that: $\xE \cec
\CEqn{\varnothing}{s}{t}{\mathsf{true}}$ if and only if $s\sigma \Cab[\xE]{*}
t\sigma$ for any $\varnothing$-valued substitution $\sigma$ such that $\sigma
\models \mathsf{true}$ if and only if $s\sigma \Cab[\xE]{*} t\sigma$ for any
substitution $\sigma$.
Now, since the relation $\Cab[\xE]{*}$ is closed under substitutions by
Lemma~\ref{lem:sym and closure properties of equational steps}, $s\sigma
\Cab[\xE]{*} t\sigma$ for any substitution $\sigma$ if and only if $s
\Cab[\xE]{*} t$. Therefore, the claim follows.
\end{proof}
}
\ifthenelse{\boolean{OmitProofs}}{}{\ProofRelationBetweenValidityAndConversionOfRewriting}

We consider now the general case with a possibly non-empty set $X$ of
theory variables and a non-trivial constraint $\varphi \neq \mathsf{true}$ (also
$\neg \varphi$) for the CE $\CEqn{X}{s}{t}{\varphi}$. In this case, the
following partial characterization can be made by using the notion of trivial
CEs~\cite{SM23}. We can naturally extend the notion of trivial CEs
in~\cite{SM23} to our setting as follows: a CE $\CEqn{X}{s}{t}{\varphi}$ is said
to be \emph{trivial} if $s\sigma = t\sigma$ for any $X$-valued substitution
$\sigma$ such that $\models_{\xM} \varphi\sigma$.

\begin{restatable}{theorem}{TheoremProvingValidityByRewriting}
\label{thm:proving validity by rewriting}
Let $\langle\xM, \xE \rangle$ be a CE-theory, and $\CEqn{X}{s}{t}{\varphi}$ a CE. Suppose $s
\Cab[\xE]{*} s'$ and $t \Cab[\xE]{*} t'$ for some $s',t'$ such that
$\CEqn{X}{s'}{t'}{\varphi}$ is trivial. Then, $\xE \cec
\CEqn{X}{s}{t}{\varphi}$.
\end{restatable}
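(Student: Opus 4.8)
The plan is to show that for an arbitrary $X$-valued substitution $\sigma$ with $\models_{\xM} \varphi\sigma$, we have $s\sigma \Cab[\xE]{*} t\sigma$, which is exactly what $\xE \cec \CEqn{X}{s}{t}{\varphi}$ demands. The key observation is that the two given conversions $s \Cab[\xE]{*} s'$ and $t \Cab[\xE]{*} t'$ are unconstrained convertibilities, and $\Cab[\xE]{*}$ is closed under substitutions by Lemma~\ref{lem:sym and closure properties of equational steps}. First I would fix such a $\sigma$ and apply stability of conversion under substitution to both hypotheses, obtaining $s\sigma \Cab[\xE]{*} s'\sigma$ and $t\sigma \Cab[\xE]{*} t'\sigma$.

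Next I would connect the two middle terms $s'\sigma$ and $t'\sigma$. Since $\CEqn{X}{s'}{t'}{\varphi}$ is trivial, the triviality definition states that $s'\tau = t'\tau$ for every $X$-valued substitution $\tau$ with $\models_{\xM} \varphi\tau$. The substitution $\sigma$ is exactly such a $\tau$, so we get the syntactic equality $s'\sigma = t'\sigma$. This is the crucial step where the triviality assumption is consumed, and it reduces the middle portion of the conversion chain to a literal identity rather than a sequence of rewrite steps.

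Finally I would assemble the chain. Using the symmetry of $\Cab[\xE]{}$ (again Lemma~\ref{lem:sym and closure properties of equational steps}), the conversion $t\sigma \Cab[\xE]{*} t'\sigma$ reverses to $t'\sigma \Cab[\xE]{*} t\sigma$. Combining everything yields
\[
s\sigma \Cab[\xE]{*} s'\sigma = t'\sigma \Cab[\xE]{*} t\sigma,
\]
so $s\sigma \Cab[\xE]{*} t\sigma$ by transitivity of $\Cab[\xE]{*}$. Since $\sigma$ was an arbitrary $X$-valued substitution respecting $\varphi$, this establishes $\xE \cec \CEqn{X}{s}{t}{\varphi}$.

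I do not anticipate a serious obstacle here: the argument is essentially a diagram chase gluing two substitution-stable conversions across a triviality-induced identity. The only point requiring mild care is verifying that $\sigma$ genuinely qualifies as a witness in the triviality definition, i.e.\ that it is $X$-valued and satisfies $\models_{\xM} \varphi\sigma$ — but these are precisely the hypotheses we start from when unfolding the definition of CE-validity, so the match is immediate. Note also that this gives only a sufficient condition (hence a one-directional theorem), since in general $s$ and $t$ need not have computable witnesses $s', t'$ reachable by conversion, which is why the statement is a "proving validity" criterion rather than a characterization.
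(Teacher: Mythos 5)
Your proof is correct and takes essentially the same route as the paper's: fix an arbitrary $X$-valued $\sigma$ with $\models_{\xM} \varphi\sigma$, lift both conversions through $\sigma$ via the substitution-closure of $\Cab[\xE]{}$ (Lemma~\ref{lem:sym and closure properties of equational steps}), collapse the middle to the syntactic identity $s'\sigma = t'\sigma$ by triviality, and chain $s\sigma \Cab[\xE]{*} s'\sigma = t'\sigma \Cab[\xE]{*} t\sigma$. Your extra check that $\sigma$ qualifies as a witness in the triviality definition is exactly the (implicit) justification in the paper's proof, so there is nothing to add.
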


\begin{proof}[Proof (Sketch)]
Take an arbitrary $X$-valued substitution $\sigma$ such that $\models_{\xM}
\varphi\sigma$. Then, it follows from our assumptions that $s\sigma \Cab[\xE]{*}
s'\sigma = t'\sigma \Cab[\xE]{*} t\sigma$.
\end{proof}

\newcommand{\ProofProvingValidityByRewriting}{
\ifthenelse{\boolean{OmitProofs}}{\TheoremProvingValidityByRewriting*}{}
\begin{proof}
    Let $s',t'$ be terms such that 
    $s \Cab[\xE]{*} s'$, $t \Cab[\xE]{*} t'$, and 
    $\CEqn{X}{s'}{t'}{\varphi}$ is trivial.
    Let $\sigma$ be an arbitrary $X$-valued substitution
    such that $\models_{\xM} \varphi\sigma$.
    Then, from  $s \Cab[\xE]{*} s'$ and $t \Cab[\xE]{*} t'$,
    we have $s\sigma \Cab[\xE]{*} s'\sigma$ and $t\sigma \Cab[\xE]{*} t'\sigma$.
    Furthermore, from the triviality of $\CEqn{X}{s'}{t'}{\varphi}$,
    we know $s'\sigma = t'\sigma$.
    Thus, 
    $s\sigma \Cab[\xE]{*} s'\sigma = t'\sigma \Cab[\xE]{*} t\sigma$.
    This concludes $\xE \cec \CEqn{X}{s}{t}{\varphi}$.
\end{proof}
}
\ifthenelse{\boolean{OmitProofs}}{}{\ProofProvingValidityByRewriting}

Unfortunately, none of the CE-consequences in Example~\ref{exp:abs max} can be
handled by Theorems~\ref{thm:relation between validity and conversion of
rewriting} and~\ref{thm:proving validity by rewriting}.

\section{Proving CE-Validity}
\label{sec:proving-validity}

As mentioned in the previous section, CE-validity of a CE
$\CEqn{X}{s}{t}{\varphi}$ with respect to a CE-theory $\langle\xM, \xE \rangle$
is very tedious to be shown by the convertibility of $s$ and $t$ in
$\xE$. This motivates us to introduce another approach to reason about
CE-validity. In this section, we introduce an inference system for proving
CE-validity of CEs together with a discussion on its applicability and generality.

\subsection{Inference System \texorpdfstring{$\mathbf{CEC}_0$}{CEC\_0} and Its Soundness}

In this subsection, we introduce an inference system $\mathbf{CEC}_0$
(\emph{Constrained Equational Calculus for elementary steps}) for proving
CE-validity of CEs. We prove soundness of it, by which it is guaranteed
that all CEs $\CEqn{X}{s}{t}{\varphi}$ derivable from $\xE$ in the system
$\mathbf{CEC}_0$ are valid, i.e.\ $\xE \cec \CEqn{X}{s}{t}{\varphi}$.

\begin{definition}[derivation of $\mathbf{CEC}_0$]
\label{def:derivation of CEC0}
Let $\fT = \langle\xM, \xE \rangle$ be a CE-theory. The inference
system $\mathbf{CEC}_0$ consists of the inference rules given in
\Cref{fig:inf-rules}. We assume in the rules that $X,Y$ range over a (possibly
infinite) set of theory variables, $\varphi$ ranges over logical constraints.
Let $\CEqn{X}{s}{t}{\varphi}$ be a CE. We say that $\CEqn{X}{s}{t}{\varphi}$ is
derivable in $\mathbf{CEC}_0$ from $\xE$ (or $\CEqn{X}{s}{t}{\varphi}$ is a
consequence of $\xE$), written by $\xE \vdash_{\mathbf{CEC}_0}
\CEqn{X}{s}{t}{\varphi}$, if there exists a derivation of $\xE \vdash
\CEqn{X}{s}{t}{\varphi}$ in the system $\mathbf{CEC}_0$. When no confusion
arises, $\xE \vdash_{\mathbf{CEC}_0} \CEqn{X}{s}{t}{\varphi}$ is abbreviated as
$\xE \vdash \CEqn{X}{s}{t}{\varphi}$.
\end{definition}

\newcommand{\FigureInferenceRules}{%
\begin{figure}[tb]
\[
\begin{array}{c}
\begin{array}{lclc}
\deduce{\strut}{\deduce{\strut}{\mbox{\it  Refl}}}&
\infer[\Var(\varphi) \subseteq X]
  { \xE \vdash \CEqn{X}{s}{s}{\varphi}}
  { }
&
\deduce{\strut}{\deduce{\strut}{\mbox{\it Trans}}}&
\infer[]
  { \xE \vdash \CEqn{X}{s}{u}{\varphi}}
  { \xE \vdash \CEqn{X}{s}{t}{\varphi} 
    &  \xE \vdash \CEqn{X}{t}{u}{\varphi} }\\
\end{array}\\[5ex]
\begin{array}{lclc}
\deduce{\strut}{\deduce{\strut}{\mbox{\it  Sym}}}&
\infer[]
  { \xE \vdash \CEqn{X}{s}{t}{\varphi}}
  { \xE \vdash \CEqn{X}{t}{s}{\varphi}}
~~~~~\deduce{\strut}{\deduce{\strut}{\mbox{\it  Cong}}}&
\infer[]
  { \xE \vdash \CEqn{X}{f(s_1,\ldots,s_n)}{f(t_1,\ldots,t_n)}{\varphi}}
  { \xE \vdash \CEqn{X}{s_1}{t_1}{\varphi} 
    & \ldots & \xE \vdash \CEqn{X}{s_n}{t_n}{\varphi} }\\
\end{array}\\[5ex]
\begin{array}{l@{\qquad}c}
\deduce{\strut}{\mbox{\it  Rule}}&
\infer[(\CEqn{X}{\ell}{r}{\varphi}) \in \xE]
  { \xE \vdash \CEqn{X}{\ell}{r}{\varphi}}
  {}
\end{array}\\[3ex]
\begin{array}{l@{\qquad}c}
\deduce{\strut}{\mbox{\it Theory Instance}}&
\infer[\forall x \in Y.\ x\sigma \in \xT(\xFTh,X)]
{ \xE \vdash \CEqn{X}{s\sigma}{t\sigma}{\varphi\sigma}}
{ \xE \vdash \CEqn{Y}{s}{t}{\varphi}}\\
\end{array}
\\[3ex]
\begin{array}{l@{\qquad}c}
\deduce{\strut}{\mbox{\it  General Instance}}&
\infer[\Dom(\sigma) \cap X = \varnothing]
{ \xE \vdash \CEqn{X}{s\sigma}{t\sigma}{\varphi}}
{ \xE \vdash \CEqn{X}{s}{t}{\varphi}}\\
\end{array}
\\[3.5ex]
\begin{array}{lclc}
\deduce{\strut}{\deduce{\strut}{\mbox{\it  Weakening}}}& 
\infer[\begin{array}{l}
          \models_\xM (\varphi \mr{\Rightarrow} \psi),
           \Var(\varphi) \subseteq X
       \end{array}]
 { \xE \vdash \CEqn{X}{s}{t}{\varphi}}
 { \xE \vdash \CEqn{X}{s}{t}{\psi}} \\
\deduce{\strut}{\deduce{\strut}{\mbox{\it  Split}}}& 
\infer[]
  { \xE \vdash \CEqn{X}{s}{t}{\varphi \lor \psi}}
  { \xE \vdash \CEqn{X}{s}{t}{\varphi}
    &
    \xE \vdash \CEqn{X}{s}{t}{\psi}
    }\\
\end{array}\\[4ex]
\begin{array}{l@{\qquad}c}
\deduce{\strut}{\mbox{\it  Axiom}}&
\infer[\begin{array}{l}
\models_{\xM} (\varphi\sigma \Rightarrow s\sigma = t \sigma)
~\mbox{for all $\sigma$ s.t.\ $X \subseteq \VDom(\sigma)$},\\
\Var(\varphi) \subseteq X
\end{array}]
{ \xE \vdash \CEqn{X}{s}{t}{\varphi}}
{ }\\ 
\end{array}\\[4ex]
\begin{array}{l@{\qquad}c}
\deduce{\strut}{\mbox{\it  Abst}}&
\infer[\begin{array}{l}
       \models_{\xM} (\varphi  \Rightarrow \bigwedge_{x \in X} x = \sigma(x)),\\
       \Var(\varphi) \subseteq X, 
       \left(\bigcup_{x \in X}\Var(\sigma(x))\right) \subseteq X
       \end{array}]
  { \xE \vdash \CEqn{X}{s}{t}{\varphi}}
  { \xE \vdash \CEqn{X}{s\sigma}{t\sigma}{\varphi\sigma}}\\
\end{array}\\[3.5ex]
\begin{array}{l@{\qquad}c}
\deduce{\strut}{\mbox{\it  Enlarge}}&
\infer[\Var(s,t) \cap (Y \setminus X) = \varnothing, ~\Var(\varphi) \subseteq X]
 { \xE \vdash \CEqn{X}{s}{t}{\varphi}}
 { \xE \vdash \CEqn{Y}{s}{t}{\varphi}}\\
\end{array}
\end{array}
\]
\caption{Inference rules of $\mathbf{CEC}_0$.}
\label{fig:inf-rules}
\end{figure}
}

\FigureInferenceRules

We proceed with intuitive explanations of each rule.
The rules
\textit{Refl}, \textit{Trans}, \textit{Sym}, \textit{Cong}, and \textit{Rule}
are counterparts of the inference rules used in equational logic.

In order to handle instantiations, we consider two cases, namely
\textit{Theory Instance} and \textit{General Instance}. The former rule
covers instantiations affecting the logical constraint whereas the
latter covers the case not affecting it. 

The \textit{Weakening} and
\textit{Split} rules handle logical reasoning in constraints. The
\textit{Weakening} rule logically weakens the constraint equation by
strengthening its constraint. Note here the direction of the entailment
$\varphi \Rightarrow \psi$ in the side condition: the rule is sound because
the constrained equation is valid under the stronger constraint
$\varphi$ if the equation is valid under the weaker constraint $\psi$.
Since some rules, like \textit{Cong} and \textit{Trans}, have
premises with equality constraints, it may be required to first apply the
\textit{Weakening} rule to synchronize the constraints before using these rules.
On the other hand, in the \textit{Split} rule, the constraint of the
conclusion $\varphi \lor \psi$ is logically weaker than the independent
ones, $\varphi$ and $\psi$, in each premise. The inference is still
sound as it only joins two premises. Using the \textit{Split} rule, one can
perform reasoning based on case analysis. 

The \textit{Axiom} rule
makes it possible to use equational consequences entailed in the constraint
part of equational reasoning. 
The \textit{Abst} rule incorporates
consequences entailed in the constraint part in a different way, that is, to
infer a possible abstraction of the equation.

The rules \textit{Enlarge}
and \textit{Restrict} are used to adjust the set of instantiated variables (the
``$\Pi X$'' part of CEs), with the proviso that it does not affect the validity.
Note that, in \textit{Enlarge}, $Y \subseteq X$ implies the side condition
$\Var(s,t) \cap (Y \setminus X) = \varnothing$. We also want to remark that
despite its name, the restriction $X \subseteq Y$ can be added to
\textit{Enlarge}, provided that removed variables are not used in $s,t$ (the
side condition $\Var(s,t) \cap (Y \setminus X) = \varnothing$ has to be
satisfied).

\begin{restatable}{lemma}{LemmaPreservationOfCE}
\label{lem:preservation of CE}
Let $\langle \xM, \xE \rangle$ be a CE-theory.
If $\xE \vdash_{\mathbf{CEC}_0} \CEqn{X}{s}{t}{\varphi}$,
then $\CEqn{X}{s}{t}{\varphi}$ is a CE.
\end{restatable}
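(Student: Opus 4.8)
The plan is to proceed by induction on the structure of the derivation of $\xE \vdash \CEqn{X}{s}{t}{\varphi}$ in $\mathbf{CEC}_0$, verifying at each inference step that the conclusion satisfies the four defining conditions of a CE: that $s$ and $t$ are terms of the same sort, that $\varphi$ is a logical constraint (so in particular $\Var(\varphi) \subseteq \xVTh$), that $X \subseteq \xVTh$, and that $\Var(\varphi) \subseteq X$. Since every rule quantifies its variable sets $X,Y$ over theory variables by convention, $X \subseteq \xVTh$ always holds automatically, so the genuine work is to track the sort of $s,t$ and the inclusion $\Var(\varphi) \subseteq X$.

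First I would dispatch the base cases, i.e.\ the rules with no premises. For \textit{Refl} the two sides coincide, so equality of sorts is immediate and $\Var(\varphi) \subseteq X$ is exactly the side condition. For \textit{Rule} the conclusion is a member of $\xE$, which is an LCES and hence consists of CEs by definition. For \textit{Axiom} the side condition $\models_\xM (\varphi\sigma \Rightarrow s\sigma = t\sigma)$ is well-formed only when $s\sigma = t\sigma$ is a logical constraint, which forces $s,t$ to be theory terms of a common sort; combined with the side condition $\Var(\varphi) \subseteq X$ this yields a CE.

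Next come the inductive cases. For the equational rules \textit{Sym}, \textit{Trans} and \textit{Cong}, the constraint $\varphi$ and the set $X$ are inherited unchanged from the premises, so $\Var(\varphi) \subseteq X$ holds at once; equality of sorts follows from the induction hypothesis (for \textit{Trans} by composing the two sort equalities, for \textit{Cong} because $s_i,t_i$ share a sort for each $i$, whence $f(\seq{s})$ and $f(\seq{t})$ are well-sorted with a common sort). The rules \textit{Weakening}, \textit{Split}, \textit{General Instance} and \textit{Enlarge} either leave $s,t$ untouched or apply a sort-preserving substitution, and their side conditions directly supply $\Var(\varphi) \subseteq X$ (for \textit{Split} using $\Var(\varphi \lor \psi) = \Var(\varphi) \cup \Var(\psi)$). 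The \textit{Abst} rule reverses a substitution on its premise; since substitution preserves sorts, $s$ and $t$ inherit the common sort of $s\sigma$ and $t\sigma$, and $\Var(\varphi) \subseteq X$ is once more a side condition.

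The main case, and the only rule that genuinely alters the constraint, is \textit{Theory Instance}, passing from a premise CE $\CEqn{Y}{s}{t}{\varphi}$ to $\CEqn{X}{s\sigma}{t\sigma}{\varphi\sigma}$. Here I would combine $\Var(\varphi) \subseteq Y$ (induction hypothesis) with the side condition $x\sigma \in \xT(\xFTh,X)$ for every $x \in Y$ to obtain $\Var(\varphi\sigma) \subseteq \bigcup_{x \in \Var(\varphi)} \Var(x\sigma) \subseteq X \subseteq \xVTh$; this simultaneously shows that $\varphi\sigma$ is a logical constraint and that $\Var(\varphi\sigma) \subseteq X$, closing the case. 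I expect this to be the principal, though still routine, obstacle, precisely because it is the unique rule changing both the constraint and the variable set at once, so it is where the interplay between the side condition and the two membership conditions must be handled carefully; every other rule reduces to bookkeeping of sorts that follows from the induction hypothesis and the fact that substitution preserves well-sortedness.
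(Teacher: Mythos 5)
Your proposal is correct and coincides with the paper's own proof: the same induction on the derivation with a case analysis on the last inference rule, verifying the same four conditions (common sort of $s,t$; $\varphi$ a logical constraint; $X \subseteq \xVTh$ by convention; $\Var(\varphi) \subseteq X$). In particular, your treatment of the two cases that need real work matches the paper exactly — for \textit{Theory Instance} you combine the induction hypothesis $\Var(\varphi) \subseteq Y$ with the side condition $x\sigma \in \xT(\xFTh,X)$ for all $x \in Y$ to get $\Var(\varphi\sigma) \subseteq X$, and for \textit{Axiom} you derive well-sortedness of $s,t$ from the requirement that $s\sigma = t\sigma$ be a logical constraint.
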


\begin{proof}[Proof (Sketch)]
The proof proceeds by induction on the derivation
of $\xE \vdash \CEqn{X}{s}{t}{\varphi}$ that 
(1) $s,t \in \xT(\xF,\xV)$ have the same sort,
(2) $\varphi$ is a logical constraint,
(3) $X \subseteq \xVTh$, and
(4) $\Var(\varphi) \subseteq X$.
\end{proof}

\newcommand{\ProofPreservationOfCE}{
\ifthenelse{\boolean{OmitProofs}}{\LemmaPreservationOfCE*}{}
\begin{proof}
We prove this claim by induction on the derivation.
We make a case analysis depending on the last rule that have been applied.
We have to check that for the conclusion of the inference rule,
say $\xE \vdash \CEqn{X}{s}{t}{\varphi}$, the following 
conditions are met:
(1) $s,t \in \xT(\xF,\xV)$ have the same sort,
(2) $\varphi$ is a logical constraint,
(3) $X \subseteq \xVTh$, and
(4) $\Var(\varphi) \subseteq X$.
\begin{itemize}
\item Case where \textit{Refl} is applied.
    (1) is trivial.
    (2), (3) are guaranteed by our convention on the inference
    rules (Definition~\ref{def:derivation of CEC0}).
    (4) is guaranteed by the side condition of the rule.

\item Case where \textit{Trans} is applied.
    (1) follows from the induction hypotheses.
    (2)--(4) follow from (one of) the induction hypotheses.

\item Case where \textit{Sym} is applied.
    (1)--(4) follow from the induction hypothesis.

\item Case where \textit{Cong} is applied.
    (1) immediately follows from the induction hypotheses.
    (2)--(4) follow from (one of) the induction hypotheses.

\item Case where \textit{Rule} is applied.
   (1)--(4) hold because $\xE$ is an LCES (i.e.\ a set of CEs.)

\item Case where \textit{Instance} is applied.
    (1) immediately follows from the induction hypothesis.
    (3) By our convention on the inference
    rules (Definition~\ref{def:derivation of CEC0}),
    $X$ is a set of theory variables.
    (4) We have $\Var(\varphi) \subseteq Y$ by induction hypothesis,
    and by the side condition of the inference rule,
    we have $x\sigma \in \xT(\xFTh,X)$ for all $x \in Y$.
    These together imply that 
    $\Var(\varphi\sigma) \subseteq X$.
    (2) By the previous (3) and (4),
    $\varphi\sigma \in \xT(\xFTh,\xVTh)$.
    It remains to show that the sort of $\varphi\sigma$
    is $\m{Bool}$, but hold as so is $\varphi$, 
    and the substitution does not change the sort of terms.

\item Case where \textit{Term Instance} is applied.
    (1)--(4) follow from the induction hypothesis.

\item Case where \textit{Weakening} is applied.
    (1) and (3) follow from the induction hypothesis.
    (2) follows from the side condition of the inference rule,
    as $\models_{\xM} (\varphi \Rightarrow \psi)$ guarantee
    that $\varphi \Rightarrow \psi$ is a logical constraint,
    hence so is the $\varphi$.
    (4) is guaranteed by the side condition of the inference rule.

\item Case where \textit{Split} is applied.
    (1) and (3) follow from (one of) the induction hypotheses.
    (2) and (4) immediately follow from the induction hypotheses.

\item Case where \textit{Axiom} is applied.
    (1) We have that $s\sigma  = t\sigma$ is a logical constraint,
        which is guaranteed by the side condition.
        Then, since $=$ is an abbreviation for 
        $=^\tau\colon  \tau \times \tau \to \m{Bool}$ for some $\tau$,
        the claim follows.
    (3) By our convention on the inference
    rules (Definition~\ref{def:derivation of CEC0}),
    $X$ is a set of theory variables.
    (4) is guaranteed by the side condition of the inference rule.
    (2) We have that $\varphi\sigma$ is a logical constraint,
        which is guaranteed by the side condition.
        Thus, the only possibility of $\varphi$ not being a logical constraint,
        is to have a term variable in it. But
        this possibility is excluded by (4).

    \item Case where \textit{Abst} is applied.
    (1) immediately follows from the induction hypothesis.
    (2) follows from the side condition of the inference rule.
    (3) By our convention on the inference
    rules (Definition~\ref{def:derivation of CEC0}),
    $X$ is a set of theory variables.
    (4) is guaranteed by the side condition of the inference rule.


    \item Case where \textit{Enlarge} is applied.
    (1) and (2) follow from the induction hypothesis.
    (3) By our convention on the inference
    rules (Definition~\ref{def:derivation of CEC0}),
    $X$ is a set of theory variables.
    (4) is guaranteed by the side condition of the inference rule.
\end{itemize}
\end{proof}
}
\ifthenelse{\boolean{OmitProofs}}{}{\ProofPreservationOfCE}

Below we present some examples of derivations which cover all of our
inference rules at least once. In the following example we denote 
\textit{Theory Instance} by \textit{TInst} and \textit{General Instance}
by \textit{GInst} accordingly.

\begin{example}
Let $\langle\xM, \xE \rangle$ be the CE-theory
given in Example~\ref{exa:group with exponentials}.
Below, we present a derivation of 
$\xE \vdash \CEqn{\SET{ n }}{\m{exp}(x,n) * \m{exp}(x,-n)}{\m{e}}{}$.
{\small
\[
\infer[\!\!\it Trans]
 {\xE \vdash \CEqn{\SET{ n }}
                  {\m{exp}(x,n) * \m{exp}(x,-n)}
                  {\m{e}}
                  {}}
 {\infer[\!\it TInst]
   {\xE \vdash \CEqn{\SET{ n }}
                   {\m{exp}(x,n) * \m{exp}(x,-n)}
                   {\m{exp}(x,n+(-n))}
                   {}}
   {\infer[\!\it Rule]
     {\xE \vdash \CEqn{\SET{ n,m }}
                   {\m{exp}(x,n) * \m{exp}(x,m)}
                   {\m{exp}(x,n+m)}
                   {}}
     {}
   }
  &
  \deduce[]
   {\xE \vdash \CEqn{\SET{ n }}
                   {\m{exp}(x,n+(-n))}
                   {\m{e}}
                   {}}
   {(1)}}
\]}
where (1) is
{\small
\[
\infer[\it Trans]
 {\xE \vdash \CEqn{\SET{ n }}
                  {\m{exp}(x,n+(-n))}
                  {\m{e}}
                  {}}
 {\infer[\it Cong]
   {\xE \vdash \CEqn{\SET{ n }}
                   {\m{exp}(x,n+(-n))}
                   {\m{exp}(x,\m{0})}
                   {}}
   {\infer[\it Enlarge]
     {\xE \vdash \CEqn{\SET{ n }}{x}{x}{}}
     {\infer[\it Refl]
        {\xE \vdash \CEqn{}{x}{x}{}}
        {}}
    &
    \infer[\it Axiom]
     {\xE \vdash \CEqn{\SET{ n }}{n+(-n)}{\m{0}}{}}
     {}}
  &
  \infer[\it Enlarge]
     {\xE \vdash \CEqn{\SET{ n }}
                      {\m{exp}(x,\m{0})}
                      {\m{e}}
                      {}}
     {\infer[\it Rule]
        {\xE \vdash \CEqn{}
                      {\m{exp}(x,\m{0})}
                      {\m{e}}
                      {}}
         {}}}
\]}
Here, $\xE \vdash \CEqn{\SET{ n }}{n+(-n)}{\m{0}}{}$ is derived by the
\textit{Axiom} rule, because of $\models_{\xM} \sigma(n+(-n)) = \m{0}$ holds for
all $\SET{n}$-valued substitutions $\sigma$.
\end{example}

\begin{example}
Let $\langle\xM, \xE \rangle$ be the CE-theory given in Example~\ref{exp:list
functions}. Below, we present a derivation of 
$\xE \vdash \CEqn{\SET{ n }}{\m{nth}(x{::}y{::}zs,n+2)}{\m{nth}(zs,n)}{n>0}$.
{\small
\[
\infer[\it Trans]
 {\xE \vdash \CEqn{\SET{ n }}{\m{nth}(x{::}y{::}zs,n+2)}{\m{nth}(zs,n)}{n>0}}
 {(2)
  &
  \infer[\it GInst]
   {\xE \vdash \CEqn{\SET{ n }}{\m{nth}(y{::}zs,n+1)}{\m{nth}(zs,n)}{n>0}}
   {\infer[\it Weaken]
     {\xE \vdash \CEqn{\SET{ n }}{\m{nth}(x{::}xs,n+1)}{\m{nth}(xs,n)}{n>0}}
     {\infer[\it Trans]
       {\xE \vdash \CEqn{\SET{ n }}{\m{nth}(x{::}xs,n+1)}{\m{nth}(xs,n)}{n+1>0}}
       {\infer[\it TInst]
         {\xE \vdash \CEqn{\SET{ n }}{\m{nth}(x{::}xs,n+1)}{\m{nth}(xs,(n+1)-1)}{n+1>0}}
         {\infer[\it Rule]
           {\xE \vdash \CEqn{\SET{ n }}{\m{nth}(x{::}xs,n)}{\m{nth}(xs,n-1)}{n>0}}
            {}
         }
        &
         (1)
       }
     }
   }
 }
\]}
where (1) is
{\small
\[
\infer[\it Weaken]
  {\xE \vdash \CEqn{\SET{ n }}{\m{nth}(xs,((n+1)-1)}{\m{nth}(xs,n)}{n+1>0}}
     {\infer[\it Cong]
        {\xE \vdash \CEqn{\SET{ n }}{\m{nth}(xs,((n+1)-1)}{\m{nth}(xs,n)}{}}
        {\infer[\it Refl]
          {\xE \vdash \CEqn{\SET{ n }}{xs}{xs}{}}
          {}
         &
         \infer[\it Axiom]
             {\xE \vdash \CEqn{\SET{ n }}{(n+1)-1}{n}{}}
             {}
         }
       }
\]}
and (2) is 
{\small
\[
 \infer[\it GInst]
   {\xE \vdash \CEqn{\SET{ n }}{\m{nth}(x::y::zs,n+2)}{\m{nth}(y::zs,n+1)}{n>0}}
   {\infer[\it Weaken]
     {\xE \vdash \CEqn{\SET{ n }}{\m{nth}(x::ys,n+2)}{\m{nth}(ys,n+1)}{n>0}}
     {\infer[\it Trans]
       {\xE \vdash \CEqn{\SET{ n }}{\m{nth}(x::xs,n+2)}{\m{nth}(xs,n+1)}{n+2>0}}
       {\infer[\it TInst]
           {\xE \vdash \CEqn{\SET{ n }}{\m{nth}(x::xs,n+2)}{\m{nth}(xs,(n+2)-1)}{n+2>0}}
           {\infer[\it Rule]
            {\xE \vdash \CEqn{\SET{ n }}{\m{nth}(x::xs,n)}{\m{nth}(xs,n-1)}{n>0}}
            {}}
        &
        (3)}
     } 
   }
\]}
where (3) is
{\small
\[
\infer[\it Weaken]
 {\xE \vdash \CEqn{\SET{ n }}{\m{nth}(xs,(n+2)-1)}{\m{nth}(xs,n+1)}{n+2>0}}
     {\infer[\it Cong]
        {\xE \vdash \CEqn{\SET{ n }}{\m{nth}(xs,((n+2)-1)}{\m{nth}(xs,n+1)}{}}
        {\infer[\it Refl]
          {\xE \vdash \CEqn{\SET{ n }}{xs}{xs}{}}
          {}
         &
         \infer[\it Axiom]
             {\xE \vdash \CEqn{\SET{ n }}{(n+2)-1}{n+1}{}}
             {}
         }
       }
\]}
\end{example}

In the next example, we illustrate usages of the \textit{Split} rule and the
\textit{Abst} rule.

\begin{example}
    Let $\mathsf{f}\colon \Bool \to \Int \in \xFTe$.
    Let $\xE = \SET{ 
        \CEqn{\varnothing}{\mathsf{f}(\mathsf{true})}{\mathsf{0}}{\mathsf{true}}, \,
        \CEqn{\varnothing}{\mathsf{f}(\mathsf{false})}{\mathsf{0}}{\mathsf{true}}
    }$.
    Then we have $\mathsf{f}(\mathsf{true}) \leftrightarrow_\xE \mathsf{0}$
    and $\mathsf{f}(\mathsf{false}) \leftrightarrow_\xE \mathsf{0}$.
    Thus, for all $\SET{ x }$-valued substitutions $\sigma$,
    we have $\mathsf{f}(x)\sigma \Cab[\xE]{*} \mathsf{0}\sigma$.
{\small
\[
\infer[\it Weaken]
 {\xE \vdash \CEqn{\SET{ x }}{\mathsf{f}(x)}{\mathsf{0}}{\mathsf{true}}}
 {\infer[Split]
     {\xE \vdash \CEqn{\SET{ x }}{\mathsf{f}(x)}{\mathsf{0}}{x = \mathsf{true} \lor x = \mathsf{false}}}
     {\infer[\it Abst]
       {\xE \vdash \CEqn{\SET{ x }}{\mathsf{f}(x)}{\mathsf{0}}{x = \mathsf{true}}}
       {\infer[\it Weaken]
         {\xE \vdash \CEqn{\varnothing}{\mathsf{f}(\mathsf{true})}{\mathsf{0}}{\mathsf{true} =     \mathsf{true}}}
         {\infer[\it Rule]
             {\xE \vdash \CEqn{\varnothing}{\mathsf{f}(\mathsf{true})}{\mathsf{0}}{}}
             {}}}
      &
      \infer[\it Abst]
       {\xE \vdash \CEqn{\SET{ x }}{\mathsf{f}(x)}{\mathsf{0}}{x = \mathsf{false}}}
       {\infer[\it Weaken]
         {\xE \vdash \CEqn{\varnothing}{\mathsf{f}(\mathsf{false})}{\mathsf{0}}{\mathsf{false} =     \mathsf{false}}}
         {\infer[\it Rule]
            {\xE \vdash \CEqn{\varnothing}{\mathsf{f}(\mathsf{false})}{\mathsf{0}}{}}
            {}}}}}
\]}
\end{example}

We now present soundness of the system $\mathbf{CEC}_0$ with respect to
CE-validity.

\begin{restatable}[soundness of the system $\mathbf{CEC}_0$]{theorem}{TheoremSoundnessOfTheSystemCEC}
\label{prop:rewriting interpretation}
\label{thm:soundness of the system CEC0}
Let $\langle\xM, \xE \rangle$ be a CE-theory and
$\CEqn{X}{s}{t}{\varphi}$ a CE. If $\xE \vdash_{\mathbf{CEC}_0}
\CEqn{X}{s}{t}{\varphi}$, then $\xE \cec \CEqn{X}{s}{t}{\varphi}$.
\end{restatable}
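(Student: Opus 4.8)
The plan is to prove soundness by induction on the structure of the derivation of $\xE \vdash_{\mathbf{CEC}_0} \CEqn{X}{s}{t}{\varphi}$, performing a case analysis on the last inference rule applied. For each rule, I assume (as induction hypothesis) that the CE-validity statement $\xE \cec (\cdots)$ holds for each premise, and I must establish $\xE \cec \CEqn{X}{s}{t}{\varphi}$ for the conclusion. The key observation is that almost every inference rule has already been mirrored by a corresponding semantic lemma proved earlier in the paper, so most cases reduce to an immediate appeal to the appropriate lemma.

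**First I would** dispatch the ``equational-logic'' rules. The rules \emph{Refl}, \emph{Sym}, \emph{Trans}, and \emph{Cong} follow directly from Lemma~\ref{lem:congruence}, which states that $\fT \cec \CEqn{X}{\cdot}{\cdot}{\varphi}$ is a congruence relation (reflexive, symmetric, transitive, and congruent with respect to each $f$). The \emph{Rule} case is exactly Lemma~\ref{lem:rule}. The \emph{Theory Instance} case is precisely the statement of Lemma~\ref{lem:stability} (stability of theory terms), and \emph{General Instance} is Lemma~\ref{lem:stability general}. The \emph{Axiom} rule follows from Lemma~\ref{lem:model consequence}, since its side condition $\models_{\xM}(\varphi\sigma \Rightarrow s\sigma = t\sigma)$ for all $X$-valued $\sigma$ is exactly the hypothesis of that lemma.

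**Next I would** handle the constraint-manipulation rules. For \emph{Weakening}, I take an arbitrary $X$-valued substitution $\sigma$ with $\models_{\xM}\varphi\sigma$; since $\models_{\xM}(\varphi \Rightarrow \psi)$, Lemma~\ref{lem:no-resp-subst-imply-false} gives $\models_{\xM}\psi\sigma$, and then the induction hypothesis $\xE \cec \CEqn{X}{s}{t}{\psi}$ yields $s\sigma \Cab[\xE]{*} t\sigma$. For \emph{Split}, an $X$-valued $\sigma$ with $\models_{\xM}(\varphi\lor\psi)\sigma$ satisfies either $\models_{\xM}\varphi\sigma$ or $\models_{\xM}\psi\sigma$, so convertibility follows from one of the two induction hypotheses by case analysis. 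For \emph{Enlarge}, the side condition $\Var(s,t)\cap(Y\setminus X)=\varnothing$ ensures that instantiating the extra variables in $Y\setminus X$ does not affect $s,t$; I would verify that any $X$-valued $\sigma$ satisfying $\varphi$ can be extended to (or is compatible with) a $Y$-valued substitution witnessing convertibility, using that $s,t$ do not mention those variables.

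**The main obstacle will be** the \emph{Abst} rule, as it is the least standard and inverts the direction of substitution relative to \emph{Theory Instance}: from validity of the abstracted CE $\CEqn{X}{s\sigma}{t\sigma}{\varphi\sigma}$ one concludes validity of $\CEqn{X}{s}{t}{\varphi}$. Here I would take an $X$-valued $\theta$ with $\models_{\xM}\varphi\theta$, and exploit the side condition $\models_{\xM}(\varphi \Rightarrow \bigwedge_{x\in X} x = \sigma(x))$ to deduce that $\theta$ and $\theta\circ\sigma$ agree on the interpretation of each $x\in X$, i.e.\ $\inter{\theta(x)} = \inter{(\theta\circ\sigma)(x)}$. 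Using Lemma~\ref{lem:properties of calculation steps} this gives $s\theta \Cab[\xE]{*} s\sigma\theta$ and $t\sigma\theta \Cab[\xE]{*} t\theta$ (via calculation-step convertibility on the theory subterms), while $\models_{\xM}(\varphi\sigma)\theta$ together with the induction hypothesis yields $s\sigma\theta \Cab[\xE]{*} t\sigma\theta$; chaining these via Lemma~\ref{lem:sym and closure properties of equational steps} delivers $s\theta \Cab[\xE]{*} t\theta$. Care is needed to justify that substituting $\sigma$ inside $s$ (which may contain term symbols, not just theory terms) interacts correctly with the valuation argument, so I expect the bulk of the technical work to lie in this single case.
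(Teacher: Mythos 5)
Your proposal is correct and follows essentially the same route as the paper's own proof: induction on the derivation with each rule dispatched by the corresponding semantic lemma (Lemmas~\ref{lem:rule}--\ref{lem:model consequence}), direct constraint-level arguments for \emph{Weakening} and \emph{Split}, extension of the substitution by arbitrary values for \emph{Enlarge}, and for \emph{Abst} exactly the paper's argument that $\inter{\theta(x)} = \inter{\theta(\sigma(x))}$ for $x \in X$ yields $s\theta \Cab[\xE]{*} s\sigma\theta \Cab[\xE]{*} t\sigma\theta \Cab[\xE]{*} t\theta$ via Lemmas~\ref{lem:properties of calculation steps} and~\ref{lem:sym and closure properties of equational steps}. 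Even the subtlety you flag in \emph{Abst} (that $\sigma$ must only affect variables of $s,t$ lying in $X$) is the same point the paper's proof relies on via its parenthetical use of $\Var(s,t) \subseteq X$.
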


\begin{proof}[Proof (Sketch)]
The proof proceeds by induction on the derivation.
The cases for all rules except for \textit{Abst} and \textit{Enlarge}
follow by Lemmas~\ref{lem:rule}--\ref{lem:model consequence} and well-known
properties in logic.
For the case \textit{Abst}, suppose that $\xE \vdash \CEqn{X}{s}{t}{\varphi}$ is
derived from $\xE \vdash \CEqn{X}{s\sigma}{t\sigma}{\varphi\sigma}$ as given in
Figure~\ref{fig:inf-rules}. Let $\rho$ be an $X$-valued substitution such that
$\models_{\xM} \varphi\rho$. Then by the side conditions we have $\inter{\rho(x)} =
\inter{\rho(\sigma(x))}$ for any $x \in X$. Hence, $\models_{\xM}
\varphi\sigma\rho$, and $s\sigma\rho \Cab[\xE]{*} t\sigma\rho$ by the induction
hypothesis. Then by $\Var(s,t) \subseteq X$, we have $s\rho \Cab[\xE]{*}
s\sigma\rho \Cab[\xE]{*} t\sigma\rho \Cab[\xE]{*} t\rho$.
For the case \textit{Enlarge}, suppose that $\xE \vdash \CEqn{X}{s}{t}{\varphi}$
is derived from $\xE \vdash \CEqn{Y}{s}{t}{\varphi}$ as given in
Figure~\ref{fig:inf-rules}. Let $\delta$ be an $X$-valued substitution such that
$\models_{\xM} \varphi\delta$. Define a substitution $\delta'$ as follows:
$\delta'(z^\tau) = c^\tau$ if $x \in Y \setminus X$, and $\delta'(x) =
\delta(x)$ otherwise, where $c^\tau$ is (arbitrarily) taken from $\Val^{\tau}$.
Clearly, $\delta'$ is $Y$-valued. We also have $s\delta' = s\delta$, $t\delta' =
t\delta$, and $\varphi\delta' = \varphi\delta$ by the side conditions. Thus,
$s\delta = s\delta' \Cab[\xE]{*} t\delta' = t\delta$ using the induction
hypothesis.
\end{proof}

\newcommand{\ProofSoundnessOfTheSystemCEC}{
\ifthenelse{\boolean{OmitProofs}}{\TheoremSoundnessOfTheSystemCEC*}{}
\begin{proof}
We prove this claim by induction on the derivation. We make a case analysis
depending on which inference rule is applied at the last derivation step.
\begin{itemize}
\item Case where \textit{Refl} is applied.
    Suppose that $\xE \vdash \CEqn{X}{s}{s}{\varphi}$ is derived as given in 
    Figure~\ref{fig:inf-rules}.
    By Lemma~\ref{lem:congruence}, we have $\xE \cec \CEqn{X}{s}{s}{\varphi}$.
    This shows the claim.

    \item Case where \textit{Trans} is applied.
    Suppose that $\xE \vdash \CEqn{X}{s}{u}{\varphi}$ is derived
    from $\xE \vdash \CEqn{X}{s}{t}{\varphi}$ and $\xE \vdash \CEqn{X}{t}{u}{\varphi}$ 
    as given in Figure~\ref{fig:inf-rules}.
    By the induction hypotheses, we have $\xE \cec \CEqn{X}{s}{t}{\varphi}$ and $\xE \cec \CEqn{X}{t}{u}{\varphi}$.
    Then, it follows from Lemma~\ref{lem:congruence} that $\xE \cec \CEqn{X}{s}{u}{\varphi}$.

    \item Case where \textit{Sym} is applied.
    Suppose that $\xE \vdash \CEqn{X}{s}{t}{\varphi}$ is derived
    from $\xE \vdash \CEqn{X}{t}{s}{\varphi}$
    as given in Figure~\ref{fig:inf-rules}.
    By the induction hypotheses, we have $\xE \cec \CEqn{X}{t}{s}{\varphi}$.
    Then, it follows from Lemma~\ref{lem:congruence} that $\xE \cec \CEqn{X}{s}{t}{\varphi}$.

    \item Case where \textit{Cong} is applied.
    Suppose that 
    $\xE \vdash \CEqn{X}{f(s_1,\ldots,s_n)}{f(t_1,\ldots,t_n)}{\varphi}$ 
    is derived from $\xE \vdash \CEqn{X}{s_1}{t_1}{\varphi},
    \ldots,\xE \vdash \CEqn{X}{s_n}{t_n}{\varphi}$
    as given in Figure~\ref{fig:inf-rules}.
    By the induction hypotheses, 
    $\xE \cec \CEqn{X}{s_i}{t_i}{\varphi}$ for all $1 \leqslant i \leqslant n$.
    Then, it follows from Lemma~\ref{lem:congruence} that 
    $\xE \cec \CEqn{X}{f(s_1,\ldots,s_n)}{f(t_1,\ldots,t_n)}{\varphi}$.

    \item Case where \textit{Rule} is applied.
    Suppose that $\xE \vdash \CEqn{X}{l}{r}{\varphi}$ is derived 
    as given in Figure~\ref{fig:inf-rules}.
    By the side condition, $(\CEqn{X}{l}{r}{\varphi}) \in \xE$.
    Thus, one obtains from Lemma~\ref{lem:rule}
    that $\xE \cec \CEqn{X}{l}{r}{\varphi}$.

    \item Case where \textit{Theory Instance} is applied.
    Suppose that $\xE \vdash \CEqn{X}{s\sigma}{t\sigma}{\varphi\sigma}$ is
    derived from $\xE \vdash \CEqn{X}{s}{t}{\varphi}$,
    as given in Figure~\ref{fig:inf-rules}.
    By the induction hypothesis, we have $\xE \cec \CEqn{X}{s}{t}{\varphi}$.
    Then, because of the side condition, one can apply Lemma~\ref{lem:stability}, 
    so that we have $\xE \cec \CEqn{X}{s\sigma}{t\sigma}{\varphi\sigma}$.

    \item Case where \textit{General Instance} is applied.
    Suppose that $\xE \vdash \CEqn{X}{s\sigma}{t\sigma}{\varphi}$ is
    derived from $\xE \vdash \CEqn{X}{s}{t}{\varphi}$,
    as given in Figure~\ref{fig:inf-rules}.
    By the induction hypothesis, we have $\xE \cec \CEqn{X}{s}{t}{\varphi}$.
    Then, because of the side condition, 
    one can apply Lemma~\ref{lem:stability general}, 
    so that we have $\xE \cec \CEqn{X}{s\sigma}{t\sigma}{\varphi}$.

    \item 
    Case where \textit{Weakening} is applied.
    Suppose that $\xE \vdash \CEqn{X}{s}{t}{\psi}$ is derived
    from $\xE \vdash \CEqn{X}{s}{t}{\varphi}$
    as given in Figure~\ref{fig:inf-rules}.
    Let $\sigma$ be an $X$-valued substitution such that $\models_{\xM} \varphi\sigma$.
    As we have $\models_{\xM} (\varphi \Rightarrow \psi)$ by the side condition,
    we have $\models_{\xM} (\varphi\sigma \Rightarrow \psi\sigma)$
    (also because of $\Var(\varphi) \subseteq X$ and $\sigma$ is $X$-valued),
    and hence $\models_{\xM} \psi\sigma$.
    By the induction hypothesis, we have $\xE \cec \CEqn{X}{s}{t}{\psi}$,
    that is, $s\sigma' \Cab[\xE]{*} t\sigma'$ holds for any 
    $X$-valued substitution such that $\models_{\xM} \psi\sigma'$.
    Thus, one obtains $s\sigma \Cab[\xE]{*} t\sigma$.
    We now have shown that  $s\sigma \Cab[\xE]{*} t\sigma$ holds
    for any $X$-valued substitution $\sigma$ 
    such that $\models_{\xM} \varphi\sigma$,
    and this concludes $\xE \cec \CEqn{X}{s}{t}{\varphi}$.

    \item 
    Case where \textit{Split} is applied.
    Suppose that $\xE \vdash \CEqn{X}{s}{t}{\varphi \lor \psi}$ is derived
    from $\xE \vdash \CEqn{X}{s}{t}{\varphi}$
    and $\xE \vdash \CEqn{X}{s}{t}{\psi}$
    as given in Figure~\ref{fig:inf-rules}.
    Let $\sigma$ be an $X$-valued substitution such that 
    $\models_{\xM} (\varphi \lor \psi)\sigma$,
    i.e.\ $\models_{\xM} (\varphi\sigma \lor \psi\sigma)$.
    Then, $\models_{\xM} \varphi\sigma$ or $\models_{\xM} \psi\sigma$ holds
    (note here $\varphi\sigma, \psi\sigma$ are ground theory terms).
    By the induction hypotheses, we have
    $\xE \cec \CEqn{X}{s}{t}{\varphi}$ and
    $\xE \cec \CEqn{X}{s}{t}{\psi}$,
    and thus, in either case, $s\sigma \Cab[\xE]{*} t\sigma$ holds.
    Therefore, we have that $\xE \cec \CEqn{X}{s}{t}{\varphi \lor \psi}$.
    
    \item 
    Case where \textit{Axiom} is applied.
    Suppose that $\xE \vdash \CEqn{X}{s}{t}{\varphi}$ is derived 
    as given in Figure~\ref{fig:inf-rules}.
    By the side conditions, one can apply Lemma~\ref{lem:model consequence},
    so that $\xE \cec \CEqn{X}{s}{t}{\varphi}$ holds.

    \item 
    Case where \textit{Abst} is applied.
    Suppose that $\xE \vdash \CEqn{X}{s}{t}{\varphi}$ is derived
    from $\xE \vdash \CEqn{X}{s\sigma}{t\sigma}{\varphi\sigma}$
    as given in Figure~\ref{fig:inf-rules}.
    Let $\rho$ be an $X$-valued substitution such that $\models_{\xM} \varphi\rho$.
    As we have $\models_{\xM} (\varphi \Rightarrow \bigvee_{x\in X} x = \sigma(x))$,
    we have $\models_{\xM} (\varphi\rho \Rightarrow 
    \bigvee_{x\in X} \rho(x)= \rho(\sigma(x)))$,
    and thus $\models_{\xM} \bigvee_{x\in X} \rho(x)= \rho(\sigma(x)))$,
    i.e.\ $\inter{\rho(x)} = \inter{\rho(\sigma(x))}$ for any $x \in X$
    (note here that $\rho(\sigma(x))$ is a ground theory term
    by the side condition $\bigcup_{x \in X} \Var(\sigma(x)) \subseteq X$).
    Thus, it follows from $\models_{\xM} \varphi\rho$
    that $\models_{\xM} \varphi\sigma\rho$, as $\Var(\varphi) \subseteq X$.
    By the induction hypothesis, we have
    $\xE \cec \CEqn{X}{s\sigma}{t\sigma}{\varphi\sigma}$,
    and thus, $s\sigma\rho \Cab[\xE]{*} t\sigma\rho$ holds.
    Hence, from $\inter{\rho(x)} = \inter{\rho(\sigma(x))}$ for any $x \in X$
    (and $\Var(s,t) \subseteq X$),
    one obtains $s\rho \Cab[\xE]{*} t\rho$.
    Therefore, we now have shown that  $s\rho \Cab[\xE]{*} t\rho$ holds
    for any $X$-valued substitution $\rho$ 
    such that $\models_{\xM} \varphi\rho$,
    and this concludes $\xE \cec \CEqn{X}{s}{t}{\varphi}$.

    \item Case where \textit{Enlarge} is applied.
    Suppose that $\xE \vdash \CEqn{X}{s}{t}{\varphi}$ is derived
    from $\xE \vdash \CEqn{Y}{s}{t}{\varphi}$
    as given in Figure~\ref{fig:inf-rules}.
    Let $\delta$ be an $X$-valued substitution such that 
    $\models_{\xM} \varphi\delta$.
    Define a substitution $\delta'$ as follows:
    $\delta'(z^\tau) = c^\tau$ if $x \in Y \setminus X$, and
    $\delta'(x) = \delta(x)$ otherwise,
    where $c^\tau$ is (arbitrarily) taken from $\Val^{\tau}$.
    Apparently, $\delta'$ is $Y$-valued by the definition,
    as $\delta$ is $X$-valued.
    Furthermore, since $\Var(\varphi) \subseteq X$ by the side condition,
    we have $\varphi\delta' = \varphi\delta$, and thus,
    $\models_{\xM} \varphi\delta'$ also holds.
    By the induction hypothesis, 
    we have $\xE \cec \CEqn{Y}{s}{t}{\varphi}$,
    and thus, $s\delta' \Cab[\xE]{*} t\delta'$.
    From the side condition that 
    $\Var(s,t) \cap (Y \setminus X) = \varnothing$,
    we have $s\delta' = s\delta$ and $t\delta' = t\delta$.
    Thus, $s\delta \Cab[\xE]{*} t\delta$.
    Therefore, we now have shown that  $s\delta \Cab[\xE]{*} t\delta$ holds
    for any $X$-valued substitution $\delta$ 
    such that $\models_{\xM} \varphi\delta$,
    and this concludes $\xE \cec \CEqn{X}{s}{t}{\varphi}$.

\end{itemize}
\end{proof}
}
\ifthenelse{\boolean{OmitProofs}}{}{\ProofSoundnessOfTheSystemCEC}

\begin{remark}
We remark that some of our inference rules utilize validity in the model.
Thus, $\mathbf{CEC}_0$ does not have convenient properties like recursive
enumerability of its theorems like we are used to from other formal systems.
\end{remark}

\subsection{Partial Completeness of \texorpdfstring{$\mathbf{CEC}_0$}{CEC\_0}}

In this subsection, we present some results regarding the completeness property
of $\mathbf{CEC}_0$.
  
\begin{restatable}{lemma}{LemmaAbstractingReflexivity}
\label{lem: abstracting reflexivity}
Let $\langle\xM, \xE \rangle$ be a CE-theory and
$\CEqn{X}{s}{t}{\varphi}$ a CE such that $\varphi$ is satisfiable.
Suppose $s\sigma = t\sigma$ for all $X$-valued substitutions $\sigma$ such that
$\models_{\xM} \varphi\sigma$. Then $\xE \vdash \CEqn{X}{s}{t}{\varphi}$.
\end{restatable}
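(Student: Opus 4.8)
The plan is to argue by structural induction on the term $s$ (equivalently, on the common position tree of $s$ and $t$), splitting into three situations according to how the roots of $s$ and $t$ compare. Throughout, recall that $\CEqn{X}{s}{t}{\varphi}$ being a CE gives $\Var(\varphi) \subseteq X$, and that the hypothesis is exactly that $\CEqn{X}{s}{t}{\varphi}$ is \emph{trivial}: $s\sigma = t\sigma$ as terms for every $X$-valued $\sigma$ with $\models_{\xM} \varphi\sigma$. Satisfiability of $\varphi$ guarantees that at least one such witness $\sigma_0$ exists, and since $\Var(\varphi) \subseteq X$ I may freely modify $\sigma_0$ on variables outside $X$ (in particular set it to be the identity there) without destroying $\models_{\xM} \varphi\sigma_0$; these tailored witnesses are what drive the case analysis.

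If $s = t$, then \emph{Refl} immediately yields $\xE \vdash \CEqn{X}{s}{s}{\varphi}$, using $\Var(\varphi) \subseteq X$. If $s \neq t$ but $\mathit{root}(s) = \mathit{root}(t) = f$ with $f$ of arity $n \geq 1$, write $s = f(\seq{s})$ and $t = f(\seq{t})$. Since $f(\seq{s})\sigma = f(\seq{t})\sigma$ holds if and only if $s_i\sigma = t_i\sigma$ for every $i$, each $\CEqn{X}{s_i}{t_i}{\varphi}$ is again trivial with the same satisfiable $\varphi$; the induction hypothesis gives $\xE \vdash \CEqn{X}{s_i}{t_i}{\varphi}$ for all $i$, and \emph{Cong} assembles these into $\xE \vdash \CEqn{X}{f(\seq{s})}{f(\seq{t})}{\varphi}$.

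The remaining situation, and the crux of the argument, is when $s \neq t$ and their roots differ. Here I would first show that triviality forces at least one of $s,t$ to be a variable in $X$: if neither root were an $X$-variable, I could build a witness $\sigma$ that is the identity off $X$ under which $s\sigma$ and $t\sigma$ keep distinct roots --- two distinct (possibly nullary) function symbols, a function symbol against an untouched variable, or two distinct untouched non-$X$ variables --- contradicting $s\sigma = t\sigma$. By symmetry (applying \emph{Sym} at the end if the $X$-variable sits on the right) assume $s = x \in X$. Then $s\sigma = \sigma(x) \in \Val$ for every witness $\sigma$, so $t\sigma$ must be that very value; inspecting the possible shapes of $t$ (non-$X$ variable, non-value constant, or proper application all fail to yield a value syntactically) rules out every option except $t \in X \cup \Val$, and in either case $\varphi$ forces $x$ and $t$ to agree, i.e.\ $\models_{\xM}(\varphi \Rightarrow x = t)$. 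I expect this enumeration --- proving that a root mismatch can only be a variable against a $\varphi$-equal value or $\varphi$-equal variable --- to be the main obstacle, since it is precisely where satisfiability and the \emph{syntactic} reading of triviality are genuinely used.

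Finally I would discharge this base case by \emph{abstracting reflexivity}, whence the name of the lemma: put $\theta = \SET{x \mapsto t}$, the identity elsewhere. The side conditions of \emph{Abst} then hold, because $\models_{\xM}(\varphi \Rightarrow x = t)$ supplies $\models_{\xM}(\varphi \Rightarrow \bigwedge_{z \in X} z = \theta(z))$, $\Var(\varphi) \subseteq X$ is the CE condition, and $\Var(t) \subseteq X$ gives $\bigcup_{z \in X}\Var(\theta(z)) \subseteq X$. Its premise is $\CEqn{X}{x\theta}{t\theta}{\varphi\theta} = \CEqn{X}{t}{t}{\varphi\theta}$, which is derivable by \emph{Refl} (noting $\Var(\varphi\theta) \subseteq X$), so \emph{Abst} yields $\xE \vdash \CEqn{X}{x}{t}{\varphi}$, and a closing \emph{Sym} (if the orientation was normalised) completes the step. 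One could equally discharge this leaf with the \emph{Axiom} rule, since triviality directly gives $\models_{\xM}(\varphi\sigma \Rightarrow s\sigma = t\sigma)$ for all $\sigma$ with $X \subseteq \VDom(\sigma)$; I would retain the \emph{Abst} route to match the intended reading of the statement.
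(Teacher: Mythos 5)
Your proposal is correct and follows essentially the same strategy as the paper's proof: use satisfiability to tailor witness substitutions (identity off $X$) that force $s$ and $t$ to share all their non-theory structure, peel that common structure off with \emph{Refl}/\emph{Sym}/\emph{Trans}/\emph{Cong}, and discharge the remaining theory-level leaves semantically. The only deviations are cosmetic: the paper decomposes in one step (any witness unifies $s$ and $t$, yielding a multi-hole context with a variable on one side of each hole) and settles leaves with $s,t \in \xT(\xFTh,X)$ by a single \emph{Axiom} application, whereas you recurse structurally and close the variable-versus-value/variable leaves with \emph{Abst} --- while yourself noting that \emph{Axiom} suffices --- and your explicit shape enumeration for the case $s = x \in X$ even spells out a step the paper's proof leaves implicit.
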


\begin{proof}[Proof (Sketch)]
The case $s,t \in \xT(\xFTh,X)$ follows by the assumption using the
\textit{Axiom} rule. Next, we consider the case $s = x \in \xV$ with $x \notin
X$. In this case, we can derive $t = x$ using the assumption, and the case
follows using the \textit{Refl} rule. For the general case, from the assumption,
one can let $s = C[s_1,\ldots,s_n]$ and $t = C[t_1,\ldots,t_n]$ for a multi-hole
context $C$ and terms $s_i,t_i$ ($1 \le i \le n)$ such that either one of $s_i$
or $t_i$ is a variable. Thus, by the previous cases, we know that $\xE \vdash
\CEqn{X}{s_i}{t_i}{\varphi}$ for each $1 \leqslant i \leqslant n$, possibly
using the \textit{Sym} rule. Thus, the claim is obtained using \textit{Refl},
\textit{Trans} and  \textit{Cong} rules.
\end{proof}

\newcommand{\ProofAbstractingReflexivity}{
\ifthenelse{\boolean{OmitProofs}}{\LemmaAbstractingReflexivity*}{}
\begin{proof}
First, we consider the case $s,t \in \xT(\xFTh,X)$. In this case, the claim is
obtained easily from the \textit{Axiom} rule as follows. Suppose $\sigma$ is an
$X$-valued substitution. If $\models_{\xM} \varphi\sigma$, then it follows from
the assumption that $s\sigma = t\sigma$. Then, by $s,t \in \xT(\xFTh,X)$ and $X
\subseteq \VDom(\sigma)$, we have $\inter{s\sigma} = \inter{t\sigma}$, and
hence, $\models_{\xM} (\varphi\sigma \Rightarrow s\sigma = t\sigma)$ holds.
Suppose $\not\models_{\xM} \varphi\sigma$. In this case, $\models_{\xM}
(\varphi\sigma \Rightarrow s\sigma = t\sigma)$ holds trivially. All in all,
$\models_{\xM} (\varphi\sigma \Rightarrow s\sigma = t\sigma)$ for any $X$-valued
substitution $\sigma$. Therefore, by \textit{Axiom} rule, $\xE \vdash
\CEqn{X}{s}{t}{\varphi}$.

Next, we consider the case $s = x \in \xV$. The case $x \in X$ is included in
the case above. So, assume $x \notin X$. By our assumption, we exclude that case
that $\varphi$ is not satisfiable. Let $\sigma$ be a $X$-valued substitution
such that $\models_{\xM} \varphi\sigma$. Furthermore, take $\sigma_0$ as a
substitution given by: $\sigma_0(x) = x$ and $\sigma_0(y) = \sigma(y)$ for any
$y \neq x$. Clearly, we have $\models_{\xM} \varphi\sigma_0$, as $\sigma_0$ and
$\sigma$ disagree only on the variable $x \notin X$ and $\Var(\varphi) \subseteq
X$.
Thus, by our condition, $x = x\sigma_0 = t\sigma_0$. Hence we know $t \in \xV$.
Similarly, one can take a substitution $\sigma_1$ as: $\sigma_1(x) = x'$ and
$\sigma_0(y) = \sigma(y)$ for any $y \neq x$, where $x'$ is a variable other
than $x$. Then, again by our condition, $x' = x\sigma_1 = t\sigma_1$.
Thus, we know from $x' \neq x$ that $t\sigma_1 \neq t\sigma_0$. From this, we
obtain $t = x$, as $\sigma_0$ and $\sigma_1$ disagree only on the term variable
$x$. Since $\xE \vdash \CEqn{X}{x}{x}{\varphi}$ by the \textit{Refl} rule, the
claim follows.

Now we proceed to the general case. By our assumption, we exclude the case where
$\varphi$ is not satisfiable. Let $\sigma$ be a $X$-valued substitution such
that $\models_{\xM} \varphi\sigma$.
Then, from the assumption we have $s\sigma = t\sigma$, and hence, we know that
$s,t$ are unifiable, and thus, there is a multi-hole context $C$ such that $s =
C[s_1,\ldots,s_n]$ and $t = C[t_1,\ldots,t_n]$ such that either one of $s_i$ or
$t_i$ is a variable for each $1 \leqslant i \leqslant n$. By
$C\sigma[s_1\sigma,\ldots,s_n\sigma] = s\sigma = t\sigma =
C\sigma[t_1\sigma,\ldots,t_n\sigma]$, we know $s_i\sigma = t_i\sigma$ for each
$1 \leqslant i \leqslant n$.
Thus, by our claim above for the case $s \in \xV$, and the \textit{Sym} rule, we
know that $\xE \vdash \CEqn{X}{s_i}{t_i}{\varphi}$ for each $1 \leqslant i
\leqslant n$. Then, using \textit{Refl}, \textit{Trans}, \textit{Cong} rules, it
is straightforward to obtain the desired $\xE \vdash \CEqn{X}{s}{t}{\varphi}$.
\end{proof}
}
\ifthenelse{\boolean{OmitProofs}}{}{\ProofAbstractingReflexivity}

\begin{restatable}{lemma}{LemmaAbstractingCalcStep}
\label{lem: abstracting calc step}
Let $\langle\xM, \xE \rangle$ be a CE-theory and
$\CEqn{X}{s}{t}{\varphi}$ a CE such that $\varphi$ is satisfiable.
Suppose $s\sigma \Cca^{=} t\sigma$ for all $X$-valued substitutions $\sigma$
such that $\Dom(\sigma) = X$ and $\models_{\xM} \varphi\sigma$. Then $\xE \vdash
\CEqn{X}{s}{t}{\varphi}$.
\end{restatable}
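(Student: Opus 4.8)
The plan is to mimic the proof of Lemma~\ref{lem: abstracting reflexivity}, but to account for the one extra calculation step permitted by $\Cca^{=}$. First I would take the maximal common context $C$ of $s$ and $t$, writing $s = C[s_1,\ldots,s_k]$ and $t = C[t_1,\ldots,t_k]$ so that at each hole the subterms $s_i$ and $t_i$ differ at the root, i.e.\ they have distinct root symbols or one of them is a variable. The key observation is that the hypothesis descends to every hole. Fix a relevant substitution $\sigma$ with $\Dom(\sigma) = X$ and $\models_\xM \varphi\sigma$. Since $C\sigma$ is a common context of $s\sigma$ and $t\sigma$, the relation $s\sigma \Cca^{=} t\sigma$ either is an equality, in which case $s_i\sigma = t_i\sigma$ for every $i$, or is a single calculation step, which must then occur at the root of exactly one hole $p_j$ (the holes sit at pairwise-incomparable positions, and above them $s\sigma$ and $t\sigma$ coincide); in the latter case $s_j\sigma \Cca t_j\sigma$ and $s_i\sigma = t_i\sigma$ for $i \neq j$. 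Hence in all cases $s_i\sigma \Cca^{=} t_i\sigma$ for every hole $i$ and every relevant $\sigma$.

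Next I would establish, for each hole, a dichotomy. If both $s_i, t_i \in \xT(\xFTh,X)$, then for every substitution $\sigma$ with $X \subseteq \VDom(\sigma)$ the terms $s_i\sigma, t_i\sigma$ are ground theory terms, and the implication $\varphi\sigma \Rightarrow s_i\sigma = t_i\sigma$ is valid: it holds trivially when $\not\models_\xM \varphi\sigma$, and otherwise the hypothesis applied to the restriction of $\sigma$ to $X$, together with Lemma~\ref{lem:properties of calculation steps}, yields $\inter{s_i\sigma} = \inter{t_i\sigma}$. Thus the side condition of the \textit{Axiom} rule is met and $\xE \vdash \CEqn{X}{s_i}{t_i}{\varphi}$. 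Otherwise I claim that no relevant $\sigma$ can produce a genuine calculation step at hole $i$: because $s_i$ and $t_i$ differ at the root, such a step would take place at the root of the hole, forcing one side to be a value and the other a redex $f(c_1,\ldots,c_n)$ with all $c_\ell \in \Val$; tracing this back through $\sigma$, which only replaces $X$-variables by values, shows that both $s_i$ and $t_i$ must already lie in $\xT(\xFTh,X)$, contradicting the case assumption. Consequently $s_i\sigma = t_i\sigma$ for all relevant $\sigma$, and $\xE \vdash \CEqn{X}{s_i}{t_i}{\varphi}$ follows from Lemma~\ref{lem: abstracting reflexivity}; here the satisfiability of $\varphi$ is used to guarantee the existence of a relevant $\sigma$.

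Finally I would assemble the pieces: having $\xE \vdash \CEqn{X}{s_i}{t_i}{\varphi}$ for every hole, repeated applications of \textit{Cong} along the context $C$, using \textit{Refl} on the parts of $C$ that are unchanged, give $\xE \vdash \CEqn{X}{C[s_1,\ldots,s_k]}{C[t_1,\ldots,t_k]}{\varphi}$, that is, $\xE \vdash \CEqn{X}{s}{t}{\varphi}$. I expect the main obstacle to be the bookkeeping of the second paragraph: carefully arguing that a single calculation step localizes to exactly one hole, and that at a hole where the roots already differ a genuine step forces both sides into $\xT(\xFTh,X)$. This is precisely where the interaction between the syntactic difference of $s$ and $t$ and the semantic nature of calculation steps has to be pinned down; once it is settled, the reduction to Lemma~\ref{lem: abstracting reflexivity} and the \textit{Axiom} rule is routine.
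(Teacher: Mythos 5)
Your proof is correct, and at bottom it rests on the same two primitives as the paper's: the \textit{Axiom} rule for holes whose two sides lie in $\xT(\xFTh,X)$, and Lemma~\ref{lem: abstracting reflexivity} for holes where equality is forced, assembled by \textit{Cong}. The organization is genuinely different, though. The paper first splits off the cases $s,t \in \xT(\xFTh,X)$ and ``$s\sigma = t\sigma$ for all relevant $\sigma$'', and otherwise fixes \emph{one} witnessing substitution realizing a genuine step, pinning its redex to a position $q$ by a minimality argument; it then exploits the fact that $f(s_1\sigma,\ldots,s_n\sigma)$ can never equal the value $t_0\sigma$ to conclude that for \emph{every} relevant substitution the single permitted step is already ``used up'' at this same $q$, so the surrounding contexts agree under all relevant substitutions and Lemma~\ref{lem: abstracting reflexivity} applies off $q$. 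You instead decompose along the maximal common context of $s$ and $t$ and run a uniform per-hole dichotomy, which buys you two simplifications: no top-level case split, and no need to argue that the step position is the same across all substitutions (your argument tolerates, a priori, different substitutions stepping at different holes, since every ``step-capable'' hole is forced into $\xT(\xFTh,X)$ and handled by \textit{Axiom}). Your localization claims are sound: the redex cannot sit in the common part (there $s\sigma$ and $t\sigma$ coincide), cannot sit strictly above a hole (the replaced value is a constant, so the hole position would vanish from one side), and cannot sit strictly below a hole root (the differing roots, after a substitution that maps $X$ into $\Val$ and fixes everything else, either stay distinct or turn one side into a leaf); and your tracing-back argument at a hole root is exactly the paper's minimality argument in different clothing. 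One detail worth a line in a full write-up: your hypothesis yields $s_i\sigma = t_i\sigma$ only for substitutions with $\Dom(\sigma) = X$, whereas Lemma~\ref{lem: abstracting reflexivity} quantifies over all $X$-valued substitutions; this is bridged by factoring an arbitrary $X$-valued $\sigma$ as its restriction $\sigma'$ to $X$ followed by a substitution acting on the remaining variables, so that $s_i\sigma' = t_i\sigma'$ transfers to $s_i\sigma = t_i\sigma$ --- a point the paper's own proof elides in the same way.
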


\begin{proof}[Proof (Sketch)]
First of all, the claim for the case $s,t \in \xT(\xFTh,X)$ follows using the
\textit{Axiom} rule. If $s\sigma = t\sigma$ for all $X$-valued substitutions
$\sigma$, the claim follows from Lemma~\ref{lem: abstracting reflexivity}. Thus,
it remains to consider the case that there exists an $X$-valued substitution
$\sigma$ such that $s\sigma \Cca  t\sigma$, $\models_{\xM} \varphi\sigma$ and $X
= \Dom(\sigma)$. Suppose that $s\sigma = C[f(v_1,\ldots,v_n)]_q$ and $t\sigma =
C[v_0]_q$ with $f \in \xFTh$ and $v_0,\ldots,v_n \in \Val$ such that
$\xI(f)(v_1,\ldots,v_n) = v_0$.
As $\Dom(\sigma) = X$ and $\sigma$ is $X$-valued, we have $s|_q =
f(s_1,\ldots,s_n)$ with $s_1,\ldots,s_n \in \Val \cup X$ and $t|_{q} \in \Val
\cup X$; hence $s|_q,t|_{q} \in \xT(\xFTh,X)$, and thus, $\xE \vdash
\CEqn{X}{s|_q}{t|_q}{\varphi}$ holds as we mentioned above.
Let $s = C_1[f(s_1,\ldots,s_n)]_q$ and $t = C_2[t_0]_q$ with
$C_1\sigma = C_2\sigma$ for any $X$-valued substitution $\sigma$ such that
$\Dom(\sigma) = X$ and $\models_{\xM} \varphi\sigma$. Then, $\xE \vdash
\CEqn{X}{s}{t}{\varphi}$ follows using $\xE \vdash
\CEqn{X}{s|_q}{t|_q}{\varphi}$, Lemma~\ref{lem: abstracting reflexivity}, and the
\textit{Cong} and \textit{Trans} rules.
\end{proof}

\newcommand{\ProofAbstractingCalcStep}{
\ifthenelse{\boolean{OmitProofs}}{\LemmaAbstractingCalcStep*}{}
\begin{proof}
We first show the case that $s,t \in \xT(\xFTh,X)$. Let $\sigma$ be any
$X$-valued substitution such that $\models_{\xM} \varphi\sigma$. Then,
$s\sigma,t\sigma \in \xT(\xFTh)$, and either $s\sigma = t\sigma$ or and $s\sigma
= C[f(v_1,\ldots,v_n)]$ and $t\sigma = C[v_0]$ (or vice versa) for some context
$C$ and $f \in \xFTh$ and $v_0,\ldots,v_n \in \Val$ such that
$\xI(f)(v_1,\ldots,v_n) = v_0$. Then, clearly, $\models_{\xM} s\sigma =
t\sigma$. Thus, $\models_{\xM} (\varphi\sigma \Rightarrow s\sigma = t\sigma)$.
Hence, using the \textit{Axiom} rule, it follows that $\xE \vdash
\CEqn{X}{s}{t}{\varphi}$.
If $s\sigma = t\sigma$ for all $X$-valued substitutions $\sigma$, then the claim
follows from Lemma~\ref{lem: abstracting reflexivity}. Thus, suppose there
exists an $X$-valued substitution $\sigma$ such that $s\sigma \Cca  t\sigma$,
$\models_{\xM} \varphi\sigma$ and $X = \Dom(\sigma)$. Then, we have $s\sigma =
C[f(v_1,\ldots,v_n)]_q$ and $t\sigma = C[v_0]_q$ for some context $C$ and $f \in
\xFTh$ and $v_0,\ldots,v_n \in \Val$ such that $\xI(f)(v_1,\ldots,v_n) = v_0$
(or vice versa).
Let $p \in \Pos(s)$ be the position such that $s|_p$ is the minimal subterm of
$s$ such that $f(v_1,\ldots,v_n) \unlhd s|_p\sigma$. Then, because $\Dom(\sigma)
= X$ and $\sigma$ is $X$-valued, we have $s|_p = f(s_1,\ldots,s_n)$ with
$s_1,\ldots,s_n \in \Val \cup X$. Hence, $p = q$.
Let $p' \in \Pos(t)$ be the position such that $t|_{p'}$ is the minimal subterm
of $t$ such that $v_0 \unlhd t|_{p'}\sigma$. Then, because $\Dom(\sigma) = X$
and $\sigma$ is $X$-valued, we have $t|_{p'} \in \Val \cup X$. Hence $p' = q$.
Thus, we have $p = p' = q$. Furthermore, $s|_q, t|_q \in \xT(\xFTh,X)$. Thus, it
follows from the previous arguments that $\xE \vdash
\CEqn{X}{s|_q}{t|_q}{\varphi}$.
Now, we have $s = C_1[f(s_1,\ldots,s_n)]_q$ and $t = C_2[t_0]_q$ for some
$s_1,\ldots,s_n,t_0 \in \Val \cup X$ and contexts $C_1,C_2$.
By our assumption, we have that $s\sigma =
C_1\sigma[f(s_1\sigma,\ldots,s_n\sigma)]_q \Cca^{=} C_2\sigma[t_0\sigma]_q$ for
all $X$-valued substitutions $\sigma$ such that $\Dom(\sigma) = X$ and
$\models_{\xM} \varphi\sigma$. Since $f(s_1\sigma,\ldots,s_n\sigma) = t_0\sigma$
never holds for such substitutions $\sigma$, we have that for each position $q'
\not\leqslant q$, $C_1\sigma|_{q'} = C_2\sigma|_{q'}$ holds for any $X$-valued
substitution $\sigma$ such that $\Dom(\sigma) = X$ and $\models_{\xM}
\varphi\sigma$. Thus, by Lemma~\ref{lem: abstracting reflexivity}, $\xE \vdash
\CEqn{X}{s|_{q'}}{t|_{q'}}{\varphi}$ holds for any $q' \in \Pos(s) \cap \Pos(t)$
such that $q' \not\leqslant q$.
Now, using \textit{Cong} and \textit{Trans} rules,
we have $\xE \vdash \CEqn{X}{s}{t}{\varphi}$.
\end{proof}
}
\ifthenelse{\boolean{OmitProofs}}{}{\ProofAbstractingCalcStep}

From Lemma~\ref{lem: abstracting calc step}, the partial completeness
for at most one calculation step follows.

\begin{theorem}
\label{thm: partial completeness of CEC0}
Let $\langle\xM, \xE \rangle$ be a CE-theory and
$\CEqn{X}{s}{t}{\varphi}$ a CE such that $\varphi$ is satisfiable.
Suppose $s\sigma \Cca^{=} t\sigma$ for all $X$-valued substitutions $\sigma$
such that $\models_{\xM} \varphi\sigma$. Then $\xE \vdash_{\mathbf{CEC}_0}
\CEqn{X}{s}{t}{\varphi}$.
\end{theorem}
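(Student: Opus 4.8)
The plan is to derive this theorem directly from Lemma~\ref{lem: abstracting calc step}. The two statements differ only in the range of substitutions over which the hypothesis is assumed: the lemma requires $s\sigma \Cca^{=} t\sigma$ only for those $X$-valued substitutions $\sigma$ that additionally satisfy $\Dom(\sigma) = X$, whereas the present theorem assumes it for \emph{all} $X$-valued substitutions $\sigma$ with $\models_{\xM} \varphi\sigma$. Hence the theorem carries a (logically) stronger hypothesis than the lemma, while the conclusion to be established is literally the same, namely $\xE \vdash_{\mathbf{CEC}_0} \CEqn{X}{s}{t}{\varphi}$.

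First I would observe that every substitution $\sigma$ which is $X$-valued and satisfies $\Dom(\sigma) = X$ is in particular an $X$-valued substitution. Therefore the class of substitutions quantified in the hypothesis of Lemma~\ref{lem: abstracting calc step} is contained in the class quantified in the hypothesis of the theorem. Consequently, assuming the theorem's hypothesis that $s\sigma \Cca^{=} t\sigma$ holds for every $X$-valued $\sigma$ with $\models_{\xM} \varphi\sigma$, this property holds in particular for every such $\sigma$ that \emph{also} satisfies $\Dom(\sigma) = X$; that is, the hypothesis of Lemma~\ref{lem: abstracting calc step} is fulfilled. The side condition that $\varphi$ be satisfiable, which the lemma also requires, is already part of the theorem's assumptions and thus transfers verbatim.

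It then remains only to invoke Lemma~\ref{lem: abstracting calc step} to conclude $\xE \vdash_{\mathbf{CEC}_0} \CEqn{X}{s}{t}{\varphi}$, completing the proof. Since the reasoning amounts to a straightforward specialization of the quantifier range over substitutions, there is essentially no obstacle at this level: all the substantive work—the case distinction on whether a genuine calculation step actually occurs, the localization of that step to a single theory redex handled via the \emph{Cong} and \emph{Trans} rules, and the reduction of the surrounding (reflexive) parts through Lemma~\ref{lem: abstracting reflexivity}—has already been discharged inside the proof of Lemma~\ref{lem: abstracting calc step}.
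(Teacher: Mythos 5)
Your proposal is correct and is essentially the paper's own argument: the paper derives Theorem~\ref{thm: partial completeness of CEC0} immediately from Lemma~\ref{lem: abstracting calc step}, exactly as you do, by observing that the theorem's hypothesis (quantified over all $X$-valued substitutions $\sigma$ with $\models_{\xM} \varphi\sigma$) subsumes the lemma's hypothesis (restricted to those with $\Dom(\sigma) = X$), so the conclusion $\xE \vdash_{\mathbf{CEC}_0} \CEqn{X}{s}{t}{\varphi}$ follows by a direct application of the lemma. Nothing further is needed.
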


One may expect that Theorem~\ref{thm: partial completeness of CEC0} can be
extended to the general completeness theorem for arbitrary $\xE$ in such a way
that $\xE \cec \CEqn{X}{s}{t}{\varphi}$ implies $\xE \vdash
\CEqn{X}{s}{t}{\varphi}$ (full completeness). Rephrasing this, we have: if
$s\sigma \Cab[\xE]{*} t\sigma$ for all $X$-valued substitutions $\sigma$ such
that $\models_{\xM} \varphi\sigma$, then $\xE \vdash \CEqn{X}{s}{t}{\varphi}$.
The partial completeness result above is far from this formulation of full
completeness in that the assumption does not assume arbitrary conversions
$s\sigma \Cab[\xE]{*} t\sigma$ but only $s\sigma \Cca^{=} t\sigma$ (i.e.\ at
most one calculation step). However, full completeness does not seem to hold for
the system $\mathbf{CEC}_0$, as witnessed by the following example.

\begin{example}
\label{exp:counter example for completeness}
Consider the following LCES:
\[
\xE = 
\left\{
\begin{array}{ll}
     \CEqn{\SET{ x }}{\mathsf{nneg}(x)}{\mathsf{true}}{x = 0} & (1)\\
     \CEqn{\SET{x,y}}{\mathsf{nneg}(x)}{\mathsf{nneg}(y)}{x + 1 = y} & (2)\\
\end{array}
\right.
\]
For each $n \ge 0$,
we have $\mathsf{nneg}(n) \Cab[\xE]{*} \mathsf{true}$:
\[
\mathsf{nneg}(n) \Cab[\xE]{} 
\mathsf{nneg}(n-1) \Cab[\xE]{}
\cdots \Cab[\xE]{}
\mathsf{nneg}(0) \Cab[\xE]{}
\mathsf{true}
\]
Thus, 
for the CE
$\CEqn{\SET{ x }}{\mathsf{nneg}(x)}{\mathsf{true}}{ x \ge 0}$,
we have
for all $\sigma$ such that $\models_{\xM} \sigma(x) \ge 0$,
$\mathsf{nneg}(x)\sigma \Cab[\xE]{*} \mathsf{true}\sigma = \mathsf{true}$.
On the other hand, for each $n \ge 0$,
one can give a derivation of
$\CEqn{\varnothing}{\mathsf{nneg}(n)}{\mathsf{true}}{}$---for example,
for $n = 2$,
\[
\infer[\it Trans]
  {\CEqn{\varnothing}{\mathsf{nneg}(2)}{\mathsf{true}}{}}
  {\infer[\it TInst]
     {\CEqn{\varnothing}{\mathsf{nneg}(2)}{\mathsf{nneg}(1)}{}}
     {\infer[\it Rule]
       {(2)}
       {}
     }
   &
   \infer[\it Trans]
    {\CEqn{\varnothing}{\mathsf{nneg}(1)}{\mathsf{true}}{}}
    {\infer[\it I]
       {\CEqn{\varnothing}{\mathsf{nneg}(1)}{\mathsf{nneg}(0)}{}}
       {\infer[\it Rule]
         {(2)}
         {}
       }
     &
     \infer[\it TInst]
       {\CEqn{\varnothing}{\mathsf{nneg}(0)}{\mathsf{true}}{}}
       {\infer[\it Rule]
         {(1)}
         {}
       }}}
\]
However, these derivations differ for each $n \ge 0$, and and are hardly
merged. As a conclusion, it seems that the CE $\CEqn{\SET{ x
}}{\mathsf{nneg}(x)}{\mathsf{true}}{ x \ge 0}$ is beyond the derivability of
$\mathbf{CEC}_0$.
\end{example}

\begin{remark}
After looking at Example \ref{exp:counter example for completeness}, it
might seem reasonable that adding some kind of induction reasoning is
required for our proof system. However, rules for the induction on positive
integers, etc.\ are only possible when working with a specific model. Such
rules have clearly a different nature than rules in our calculus
that work with any underlying model. Our calculus $\mathbf{CEC}_0$ intends
to be a general calculus that is free from specific underlying models and
does not include model-specific rules.
\end{remark}

\begin{remark}
We remark a difficulty to extend Theorem~\ref{thm: partial completeness
of CEC0} to multiple (calculation) steps, i.e.\ to have a statement like
$\xE \vdash_{\mathbf{CEC}_0} \CEqn{X}{s}{t}{\varphi}$ whenever $s\sigma \Cca^{*}
t\sigma$ for all $X$-valued substitutions $\sigma$ such that $\models_{\xM}
\varphi\sigma$. Or even to obtain a slightly weaker statement like
$\xE \vdash_{\mathbf{CEC}_0} \CEqn{X}{s}{t}{\varphi}$ whenever there exists a
natural number $n$ such that $s\sigma \Cca^{n} t\sigma$ for all $X$-valued
substitutions $\sigma$ such that $\models_{\xM} \varphi\sigma$. It might
look that induction on the length of $s\sigma \Cca^{*} t\sigma$ (or the one on
$n$ in the case of the latter) can be applied. However, to apply
Theorem~\ref{thm: partial completeness of CEC0} to each step, we need the form
$s_0\sigma \Cca s_1\sigma \Cca  \cdots \Cca  s_n\sigma$ for each $\sigma$ which
is not generally implied by $s\sigma \Cca^{*} t\sigma$ or $s\sigma \Cca^{n}
t\sigma$, as intermediate terms may vary depending on the substitution $\sigma$.
Extending Theorem~\ref{thm: partial completeness of CEC0} to multiple steps
remains as our future work.
\end{remark}

\begin{remark}
Our final remark deals with the difficulty to extend Theorem~\ref{thm: partial
completeness of CEC0} to a single rule step, i.e.\ to have a statement like $\xE
\vdash_{\mathbf{CEC}_0} \CEqn{X}{s}{t}{\varphi}$ whenever $s\sigma \Cru t\sigma$
for all $X$-valued substitutions $\sigma$ such that $\models_{\xM}
\varphi\sigma$. For the rule step, there may be multiple choices of positions
and multiple choices of CEs to be applied for the step $s\sigma \Cru t\sigma$.
Thus, we have to divide $X$-valued substitutions satisfying $\varphi$ depending
on each position $p$ that a CE is applied and each applied CE
$\CEqn{X_i}{\ell_i}{r_i}{\varphi_i} \in \xE$, and combine the obtained
consequences. However, it is in general not guaranteed that such a division of
substitutions can be characterized by a constraint. Note that the set of sets of
substitutions is in general not countable but the set of constraints is
countable. Thus, it may be necessary to assume some assumption on the
expressiveness of constraints to obtain the extension for the single rule step.
On the other hand, we conjecture that the (full) completeness would hold for
CE-theories with a finite underlying model.
\end{remark}

\section{Algebraic Semantics for CE-Validity}
\label{sec:algebraic-semantics}

In this section, we explore algebraic semantics for CE-validity. In this
approach, CE-validity is characterized by validity in models in a class of
algebras, which we call \emph{CE-algebras}. We show that this characterization is sound
and complete, in the sense that CE-validity can be fully characterized.

\subsection{CE-Algebras}

In this subsection, we introduce a notion of CE-algebras and validity in them.
After presenting basic properties of our semantics, we prove its soundness with
respect to the CE-validity.

\begin{definition}[CE-$\langle \Sigma,\xM \rangle$-algebra]
Let $\Sigma = \langle\xSTh,\xSTe, \xFTh, \xFTe \rangle$ be a signature and $\xM
= \langle \xI, \xJ \rangle$ be a model over $\xSTh$ and $\xFTh$. A \emph{constrained
equational $\langle \Sigma,\xM \rangle$-algebra} (CE-$\langle \Sigma,\xM
\rangle$-algebra, for short) is a pair $\fM = \langle \fI, \fJ \rangle$ where
$\fI$ assigns each $\tau \in \xS$ a non-empty set $\fI(\tau)$, specifying
its domain, and $\fJ$ assigns each $f\colon \tau_1 \times \cdots \times \tau_n
\to \tau_0 \in \xF$ an interpretation function $\fJ(f)\colon \fI(\tau_1) \times
\cdots \times \fI(\tau_n) \to \fI(\tau_0)$ that extends the model $\xM = \langle
\xI, \xJ \rangle$, that is, $\fI(\tau) \supseteq \xI(\tau)$ for all $\tau \in
\xSTh$ and $\fJ(f)\restriction_{\xI(\tau_1) \times \cdots \times \xI(\tau_n)} =
\xJ(f)$ for all $f \in \xFTh$ (or more generally there exists an injective
homomorphism $\iota\colon \xM \to \fM$).
\end{definition}

Let $\fM = \langle \fI, \fJ \rangle$ be a CE-$\langle \Sigma,\xM
\rangle$-algebra. A valuation $\rho$ over $\fM$ is defined similarly to $\xM$,
but $\xS$ instead of $\xSTh$, $\xF$ instead of $\xFTh$, etc.
Similarly, a valuation $\rho$ over $\fM$ satisfies a logical constraint
$\varphi$, denoted by $\models_{\fM} \varphi$, if $\inter{\varphi}_{\rho,\fM} =
\mathsf{true}$.

Careful readers may wonder why the interpretation functions for the theory part
are not the same but an extension of the underlying model $\xM = \langle
\xI, \xJ \rangle$. Indeed, in the definition of CE-$\langle \Sigma,\xM
\rangle$-algebras $\fM = \langle \fI, \fJ \rangle$ above, we only demand that
$\fI(\tau) \supseteq \xI(\tau)$ for all $\tau \in \xSTh$ and not $\fI(\tau) =
\xI(\tau)$ for all $\tau \in \xSTh$.
In fact, this is required to obtain the completeness result; however,
this explanation is postponed until Example~\ref{exp:why underlying model
is an extension}. We continue to present some basic properties of our
semantics which are proven in a straightforward manner.

\begin{lemma}
\label{lem:interpretation identity is closed under substitutions and contexts}
    Let $\fT = \langle \xM, \xE \rangle$ 
    be a CE-theory over a signature $\Sigma$, and
    $\fM = \langle \fI, \fJ \rangle$ a CE-$\langle \Sigma,\xM\rangle$-algebra.
    Then, the binary relation over terms given by $\inter{\cdot}_{\fM,\rho} = \inter{\cdot}_{\fM,\rho}$ 
    for any valuation $\rho$ on $\fM$, 
    is closed under substitutions and contexts.
\end{lemma}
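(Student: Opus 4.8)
The relation under consideration is the one that relates $s$ and $t$ exactly when $\inter{s}_{\fM,\rho} = \inter{t}_{\fM,\rho}$ holds for every valuation $\rho$ over $\fM$; write $s \mathrel{\sim} t$ for this. Here the interpretation $\inter{\cdot}_{\fM,\rho}$ is the evident extension of the theory interpretation from $\xT(\xFTh,\xV)$ to all of $\xT(\xF,\xV)$ using $\fJ$, namely $\inter{x}_{\fM,\rho} = \rho(x)$ and $\inter{f(t_1,\ldots,t_n)}_{\fM,\rho} = \fJ(f)(\inter{t_1}_{\fM,\rho},\ldots,\inter{t_n}_{\fM,\rho})$. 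The plan is to treat the two closure properties independently, each reducing to a short structural induction.

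For closure under contexts, I would fix $s,t$ with $s \mathrel{\sim} t$ and prove $\inter{C[s]}_{\fM,\rho} = \inter{C[t]}_{\fM,\rho}$ for every $\rho$ by induction on the context $C$. The base case, where $C$ is the empty context, is exactly the hypothesis. In the step case $C = f(u_1,\ldots,C',\ldots,u_n)$ with the hole occurring inside $C'$, I would unfold the interpretation through $\fJ(f)$, invoke the induction hypothesis to replace $\inter{C'[s]}_{\fM,\rho}$ by $\inter{C'[t]}_{\fM,\rho}$, and fold back up; the remaining arguments $\inter{u_i}_{\fM,\rho}$ are left untouched, so the two interpretations agree.

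For closure under substitutions, the key ingredient I would first establish is a substitution lemma for $\fM$ analogous to Lemma~\ref{lem:substitution as interpretation}(1): for every $t \in \xT(\xF,\xV)$, every substitution $\sigma$, and every valuation $\rho$ over $\fM$,
\[
\inter{t\sigma}_{\fM,\rho} = \inter{t}_{\fM,\inter{\sigma}_{\fM,\rho}},
\]
where $\inter{\sigma}_{\fM,\rho}$ is the valuation $x \mapsto \inter{\sigma(x)}_{\fM,\rho}$. This is proven by a routine structural induction on $t$: the variable case is immediate from the definition of $\inter{\sigma}_{\fM,\rho}$, and the function-symbol case pushes the equality through $\fJ(f)$ using the induction hypothesis. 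Granting this lemma, closure under substitutions follows at once: for $s \mathrel{\sim} t$ and any $\rho$ we get $\inter{s\sigma}_{\fM,\rho} = \inter{s}_{\fM,\inter{\sigma}_{\fM,\rho}} = \inter{t}_{\fM,\inter{\sigma}_{\fM,\rho}} = \inter{t\sigma}_{\fM,\rho}$, where the middle equality instantiates the hypothesis at the valuation $\inter{\sigma}_{\fM,\rho}$.

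I do not anticipate a genuine obstacle; the only point demanding mild care is that the interpretation $\inter{\cdot}_{\fM,\rho}$ and the substitution lemma above must be set up for \emph{all} terms over $\xF$ rather than merely theory terms, so Lemma~\ref{lem:substitution as interpretation} cannot be cited verbatim but must be re-run with $\fI,\fJ$ in place of $\xI,\xJ$. Notably, unlike the theory-term version, no restriction on $\sigma$ (such as mapping into $\xT(\xFTh,\xVTh)$) is needed here, precisely because $\fM$ interprets every symbol of $\xF$. Everything else is a direct unfolding of the definitions.
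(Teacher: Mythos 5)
Your proof is correct and takes exactly the intended route: the paper gives no explicit proof of this lemma (it is only asserted to be ``proven in a straightforward manner''), and your two structural inductions---on contexts, and the $\fM$-analogue of Lemma~\ref{lem:substitution as interpretation}(1) with $\fI,\fJ$ in place of $\xI,\xJ$---are precisely the steps the paper itself re-runs later, e.g.\ inside the proof of Lemma~\ref{lem:soundness w.r.t. conversion}, where $\inter{u\sigma}_{\fM,\rho} = \inter{u}_{\fM,\inter{\sigma}_{\fM,\rho}}$ is established ``similarly to Lemma~\ref{lem:substitution as interpretation}''. Your closing observation that, unlike the theory-term version, no restriction on $\sigma$ is needed because $\fM$ interprets every symbol of $\xF$ and valuations range over all of $\xV$ is also accurate.
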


\begin{lemma}
\label{lem:preservation_of_validity_in_underlying_model} 
    Let $\fT = \langle \xM, \xE \rangle$ 
    be a CE- theory over a signature $\Sigma$ such that $\xM = \langle \xI, \xJ \rangle$,
    $\fM$ a CE-$\langle \Sigma,\xM\rangle$-algebra, and
    $X \subseteq \xVTh$ a set of theory variables and suppose
    $t \in \xT(\xFTh,X)$. 
    Then, for any valuation $\rho$ on $\fM$ such that $\rho(x) \in \xI(\tau)$ for all $x^\tau \in X$,
    we have $\inter{t}_{\fM,\rho} = \inter{t}_{\xM,\rho}$.
\end{lemma}

Next, we extend the definition of validity on CE-algebras
to CEs,
by which we can give a notion of models of CE-theories,
and the semantic consequence relation.

\begin{definition}[model of constrained equational theory]
Let $\fT = \langle \xM, \xE \rangle$ be a CE-theory
over a signature $\Sigma$ such that $\xM = \langle \xI, \xJ \rangle$,
and
$\fM = \langle \fI, \fJ \rangle$ a CE-$\langle \Sigma,\xM \rangle$-algebra.
\begin{enumerate}
    \item 
A CE
$\CEqn{X}{\ell}{r}{\varphi}$
is said to be \emph{valid} in $\fM$,
denoted by $\models_{\fM} \CEqn{X}{\ell}{r}{\varphi}$,
if for all valuations $\rho$ over $\fM$ 
satisfying the constraint $\varphi$ (i.e.\ $\inter{\varphi}_{\fM,\rho} = \m{true}$ holds)
and $\rho(x) \in \xI(\tau)$ holds for all $x^\tau \in X$,
we have $\inter{\ell}_{\fM,\rho} = \inter{r}_{\fM,\rho}$.
\item
    A CE-$\langle \Sigma,\xM\rangle$-algebra $\fM = \langle \fI, \fJ \rangle$
    is said to be a \emph{model} of the CE-theory $\fT$
    if $\models_{\fM} \xE$. 
    Here, $\models_{\fM} \xE$ denotes that $\models_{\fM} \CEqn{X}{\ell}{r}{\varphi}$
    for all $\CEqn{X}{\ell}{r}{\varphi} \in \xE$.
\item
    Let $\CEqn{X}{\ell}{r}{\varphi}$ be a CE.
    We write $\fT \models \CEqn{X}{\ell}{r}{\varphi}$
    (or $\xE \models \CEqn{X}{\ell}{r}{\varphi}$ if no confusion arises)
    if $\models_{\fM} \CEqn{X}{\ell}{r}{\varphi}$ holds for all CE-$\langle \Sigma,\xM\rangle$-algebras
    $\fM$ that are models of $\fT$.
\end{enumerate}
\end{definition}
We remark that, in item \Bfnum{1}, as $\varphi \in \xT(\xFTh,\xVTh)$, we have
$\inter{\varphi}_{\fM,\rho} = \m{true}$ if and only if
$\inter{\varphi}_{\xM,\rho} = \m{true}$ by
Lemma~\ref{lem:preservation_of_validity_in_underlying_model}. Based on the
preceding lemmas,  soundness of our semantics with respect to conversion is not
difficult to obtain.

\begin{restatable}[soundness w.r.t.\ conversion]{lemma}{LemmaSoundnessWrtConversion}
\label{lem:soundness w.r.t. conversion}
    Let $\fT = \langle \xM, \xE \rangle$ 
    be a CE-theory over a signature $\Sigma$, and
    $\fM = \langle \fI, \fJ \rangle$ a CE-$\langle \Sigma,\xM\rangle$-algebra
    such that $\models_{\fM} \xE$.
    If $s \Cab[\xE]{*} t$ then $\inter{s}_{\fM,\rho} = \inter{t}_{\fM,\rho}$
    for any valuation $\rho$ on $\fM$.
\end{restatable}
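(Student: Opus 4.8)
The plan is to induct on the number of steps in the conversion $s \Cab[\xE]{*} t$. The base case ($s = t$) is immediate, and the inductive step reduces, via transitivity of equality together with the induction hypothesis, to establishing the claim for a single step $s \Cab[\xE]{} u$. Since ${\Cab[\xE]{}} = {\Cca} \cup {\Cru}$, two cases remain, and in both the step has the form $s = C[p]$, $u = C[q]$ (or vice versa) for some context $C$; because equality is symmetric, the ``vice versa'' subcase needs no separate treatment.

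For a calculation step $s \Cca u$ I have $p = f(c_1,\dots,c_n)$ and $q = c_0$ with $f \in \xFTh \setminus \Val$, $c_0,\dots,c_n \in \Val$ and $c_0 = \xJ(f)(c_1,\dots,c_n)$. Both $p$ and $q$ are ground theory terms, so Lemma~\ref{lem:preservation_of_validity_in_underlying_model} (with $X = \varnothing$) gives $\inter{p}_{\fM,\rho} = \inter{p}_{\xM} = \xJ(f)(c_1,\dots,c_n) = c_0 = \inter{q}_{\xM} = \inter{q}_{\fM,\rho}$. Closure of interpretation identity under contexts (Lemma~\ref{lem:interpretation identity is closed under substitutions and contexts}) then lifts this to $\inter{s}_{\fM,\rho} = \inter{u}_{\fM,\rho}$.

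For a rule step $s \Cru u$ I have $p = \ell\sigma$, $q = r\sigma$ for some $\CEqn{X}{\ell}{r}{\varphi} \in \xE$ and some $X$-valued $\sigma$ with $\models_{\xM} \varphi\sigma$. The key idea is to package $\sigma$ and $\rho$ into a single valuation $\rho'$ over $\fM$, defined by $\rho'(x) = \inter{\sigma(x)}_{\fM,\rho}$, and to apply the hypothesis $\models_{\fM} \CEqn{X}{\ell}{r}{\varphi}$ to it. Using the standard substitution lemma for $\fM$ (the evident generalisation of Lemma~\ref{lem:substitution as interpretation} to arbitrary terms and to the extended algebra, proved by the same structural induction) I obtain $\inter{\ell}_{\fM,\rho'} = \inter{\ell\sigma}_{\fM,\rho}$, and likewise for $r$ and for $\varphi$. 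It then remains to check the two side conditions in the definition of validity: first, since $\sigma$ is $X$-valued, $\sigma(x) \in \Val \subseteq \xFTh$ for each $x^\tau \in X$, so $\rho'(x) = \fJ(\sigma(x)) = \xJ(\sigma(x)) \in \xI(\tau)$; second, as $\Var(\varphi) \subseteq X$ the term $\varphi\sigma$ is ground, whence $\inter{\varphi}_{\fM,\rho'} = \inter{\varphi\sigma}_{\fM,\rho} = \inter{\varphi\sigma}_{\xM} = \mathsf{true}$ by Lemma~\ref{lem:preservation_of_validity_in_underlying_model} and $\models_{\xM} \varphi\sigma$. Validity of the CE in $\fM$ now yields $\inter{\ell\sigma}_{\fM,\rho} = \inter{\ell}_{\fM,\rho'} = \inter{r}_{\fM,\rho'} = \inter{r\sigma}_{\fM,\rho}$, and closure under contexts again gives $\inter{s}_{\fM,\rho} = \inter{u}_{\fM,\rho}$.

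I expect the rule step to be the only delicate part: the crux is choosing $\rho' = \inter{\sigma}_{\fM,\rho}$ and verifying that it lands inside $\xI$ on $X$ and satisfies $\varphi$ in $\fM$, so that the model hypothesis $\models_\fM \CEqn{X}{\ell}{r}{\varphi}$ actually applies. This is precisely where the two lemmas about the embedding of $\xM$ into $\fM$ do their work, and where one must note that it is $X$-valuedness of $\sigma$ (rather than mere groundness) that forces $\rho'(x) \in \xI(\tau)$. The calculation step and the bookkeeping of the induction are routine by comparison.
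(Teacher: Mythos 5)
Your proof is correct and follows essentially the same route as the paper's: reduce to a single (root) step via closure under contexts, handle calculation steps through Lemma~\ref{lem:preservation_of_validity_in_underlying_model}, and for rule steps package $\sigma$ and $\rho$ into the valuation $\inter{\sigma}_{\fM,\rho}$, verify via the substitution and preservation lemmas that it lands in $\xI$ on $X$ and satisfies $\varphi$, and then invoke $\models_{\fM} \xE$. Your explicit check that $\rho'(x) = \xJ(\sigma(x)) \in \xI(\tau)$ is exactly the step the paper performs (there written as $\inter{\sigma}_{\fM,\rho}(x) = \sigma(x)$ under the value/interpretation identification), so nothing is missing.
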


\begin{proof}[Proof (Sketch)]
It suffices to consider the case $s \Cab[\xE]{} t$ with a root step;
the claim easily follows from Lemma~\ref{lem:interpretation identity
is closed under substitutions and contexts}. Let $\Sigma = \langle\xSTh,\xSTe,
\xFTh, \xFTe \rangle$ and $\xM = \langle \xI, \xJ \rangle$.
Let $s \Cca t$. Then, $s,t \in \xT(\xFTh)$, and hence $\inter{s}_{\xM} = \inter{t}_{\xM}$.
Thus, $\inter{s}_{\fM} = \inter{t}_{\fM}$ by Lemma~\ref{lem:preservation_of_validity_in_underlying_model}.
Otherwise, let $s \Cru[\xE] t$.
Then, $s = \ell\sigma$ and $t = r\sigma$ for some $\CEqn{X}{\ell}{r}{\varphi} \in \xE$
and an $X$-valued substitution $\sigma$ such that $\models_{\xM} \varphi\sigma$.
We have a valuation $\inter{\sigma}_{\fM,\rho}$ on $\fM$
by $\inter{\sigma}_{\fM,\rho}(y) = \inter{\sigma(y)}_{\fM,\rho}$ for any $y \in \xV$.
Then, similarly to Lemma~\ref{lem:substitution as interpretation},
we have $\inter{u\sigma}_{\fM,\rho} = \inter{u}_{\fM, \inter{\sigma}_{\fM,\rho}}$
for any term $u \in \xT(\Sigma,\xV)$.
Furthermore, for $x \in X$,
$\inter{\sigma}_{\fM,\rho}(x) = \inter{\sigma(x)}_{\fM,\rho} = \sigma(x)$ holds.
Hence, by Lemma~\ref{lem:preservation_of_validity_in_underlying_model},
$\inter{\varphi}_{\fM,\inter{\sigma}_{\fM,\rho}} = \mathsf{true}$.
Thus, 
$\inter{s}_{\fM,\rho} = \inter{\ell}_{\fM, \inter{\sigma}_{\fM,\rho}}
= \inter{r}_{\fM, \inter{\sigma}_{\fM,\rho}}
= \inter{t}_{\fM,\rho}$.
\end{proof}

\newcommand{\ProofSoundnessWrtConversion}{
\ifthenelse{\boolean{OmitProofs}}{\LemmaSoundnessWrtConversion*}{}
\begin{proof}
Clearly, it suffices to show that $s \Cab[\xE]{} t$
implies $\inter{s}_{\fM,\rho} = \inter{t}_{\fM,\rho}$ for any valuation $\rho$ on $\fM$.
Moreover, by Lemma~\ref{lem:interpretation identity is closed under substitutions and contexts},
it suffices to consider the case where $s \Cab[\xE]{} t$ is a root step.
Let $\Sigma = \langle\xSTh,\xSTe, \xFTh, \xFTe \rangle$ and $\xM = \langle \xI, \xJ \rangle$.
First, we consider the case $s \Cca t$.
Then, $s = f(s_1,\ldots,s_n)$, $s_1,\ldots,s_n,t \in \Val$, $f \in \xFTh$,
and $\xI(f)(s_1,\ldots,s_n) = t$ (or vice versa).
Thus, $s,t \in \xT(\xFTh)$,
and we have $\inter{s}_{\xM} = \inter{t}_{\xM}$,
and hence $\inter{s}_{\fM} = \inter{t}_{\fM}$ follows
by Lemma~\ref{lem:preservation_of_validity_in_underlying_model}.
By $s,t \in \xT(\xFTh)$,
we have $\inter{s}_{\fM,\rho} = \inter{t}_{\fM,\rho}$ for any valuation $\rho$.
Next, we consider the case $s \Cru[\xE] t$.
Then, $s = \ell\sigma$ and $t = r\sigma$ for some $\CEqn{X}{\ell}{r}{\varphi} \in \xE$
and an ($X$-valued) substitution $\sigma$ 
such that $\sigma(x) \in \xI(\tau)$ for $x^\tau \in X$
and $\models_{\xM} \varphi\sigma$.
Let $\xi$ be a valuation on $\fM$ such that $\xi(x) = \sigma(x)$ for each $x \in X$.
Then, as $\varphi \in \xT(\xFTh,X)$, by Lemma~\ref{lem:preservation_of_validity_in_underlying_model},
we obtain $\inter{\varphi}_{\fM,\xi} = \inter{\varphi}_{\xM,\sigma} = \mathsf{true}$.
Thus, by our assumption that $\models_{\fM} \xE$,
it follows that $\inter{\ell}_{\fM,\xi} = \inter{r}_{\fM,\xi}$.
Thus, we have $\inter{\ell}_{\fM,\xi} = \inter{r}_{\fM,\xi}$
for any valuation $\xi$ such that $\xi(x) = \sigma(x)$ for each $x \in X$.
Now, define a valuation $\inter{\sigma}_{\fM,\rho}$ on $\fM$
by $\inter{\sigma}_{\fM,\rho}(y) = \inter{\sigma(y)}_{\fM,\rho}$ for any $y \in \xV$.
Then, it can be shown by induction on $u$
that $\inter{u\sigma}_{\fM,\rho} = \inter{u}_{\fM, \inter{\sigma}_{\fM,\rho}}$
for any term $u \in \xT(\Sigma,\xV)$,
similarly to Lemma~\ref{lem:substitution as interpretation}.
Furthermore, for $x \in X$,
as $\sigma(x) \in \xI(\tau)$,
$\inter{\sigma}_{\fM,\rho}(x) = \inter{\sigma(x)}_{\fM,\rho} = \sigma(x)$ holds.
Thus, 
$\inter{s}_{\fM,\rho} = \inter{\ell}_{\fM, \inter{\sigma}_{\fM,\rho}}
= \inter{r}_{\fM, \inter{\sigma}_{\fM,\rho}}
= \inter{t}_{\fM,\rho}$.
\end{proof}
}
\ifthenelse{\boolean{OmitProofs}}{}{\ProofSoundnessWrtConversion}

Now we present the soundness of our semantics with respect to the CE-validity.

\begin{restatable}[soundness w.r.t.\ CE-validity]{theorem}{TheoremSoundnessOfAlgebraicSemantics}
\label{thm:soundness of algebraic semantics w.r.t. constrained equational validity}
    Let $\fT$ be a CE-theory.
    If $\fT \cec \CEqn{X}{s}{t}{\varphi}$,
    then $\fT \models \CEqn{X}{s}{t}{\varphi}$.
\end{restatable}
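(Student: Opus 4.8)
The plan is to unfold $\fT \models \CEqn{X}{s}{t}{\varphi}$ directly: fix an arbitrary model $\fM = \langle \fI, \fJ \rangle$ of $\fT$ and an arbitrary valuation $\rho$ over $\fM$ with $\inter{\varphi}_{\fM,\rho} = \mathsf{true}$ and $\rho(x) \in \xI(\tau)$ for every $x^\tau \in X$; the goal reduces to showing $\inter{s}_{\fM,\rho} = \inter{t}_{\fM,\rho}$, after which arbitrariness of $\fM$ yields the claim. First I would convert $\rho$ into an $X$-valued substitution: using the bijection $\Val_\tau \cong \xI(\tau)$, define $\sigma$ with $\Dom(\sigma) = X$ by letting $\sigma(x^\tau)$ be the value in $\Val_\tau$ corresponding to $\rho(x)$. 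Then $\sigma$ is $X$-valued, and I would verify $\models_{\xM} \varphi\sigma$: since $\Var(\varphi) \subseteq X$, Lemma~\ref{lem:preservation_of_validity_in_underlying_model} gives $\inter{\varphi}_{\xM,\rho} = \inter{\varphi}_{\fM,\rho} = \mathsf{true}$, while the induced valuation $\hat\sigma$ of Lemma~\ref{lem:substitution as interpretation} agrees with $\rho$ on $\Var(\varphi)$ by construction, so $\inter{\varphi\sigma}_{\xM} = \inter{\varphi}_{\xM,\hat\sigma} = \mathsf{true}$.

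With $\sigma$ an $X$-valued substitution satisfying $\models_{\xM}\varphi\sigma$, I would apply the hypothesis $\fT \cec \CEqn{X}{s}{t}{\varphi}$ to obtain the conversion $s\sigma \Cab[\xE]{*} t\sigma$. Because $\fM$ is a model of $\xE$, Lemma~\ref{lem:soundness w.r.t. conversion} then yields $\inter{s\sigma}_{\fM,\rho} = \inter{t\sigma}_{\fM,\rho}$.

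It then remains to transport this equality from the instances back to $s$ and $t$. Here I would invoke the substitution property for $\fM$ established inside the proof of Lemma~\ref{lem:soundness w.r.t. conversion}, namely $\inter{u\sigma}_{\fM,\rho} = \inter{u}_{\fM,\inter{\sigma}_{\fM,\rho}}$ for every term $u$, where $\inter{\sigma}_{\fM,\rho}(y) = \inter{\sigma(y)}_{\fM,\rho}$. The crucial observation is $\inter{\sigma}_{\fM,\rho} = \rho$: for $x \in X$ the value $\sigma(x)$ is interpreted in $\fM$ (which extends $\xM$) as the corresponding element $\rho(x) \in \xI(\tau)$, and for $y \notin X$ we have $\sigma(y) = y$, so $\inter{\sigma(y)}_{\fM,\rho} = \rho(y)$. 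Hence $\inter{s\sigma}_{\fM,\rho} = \inter{s}_{\fM,\rho}$ and $\inter{t\sigma}_{\fM,\rho} = \inter{t}_{\fM,\rho}$, and chaining the three equalities gives $\inter{s}_{\fM,\rho} = \inter{t}_{\fM,\rho}$, as required.

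The main obstacle I anticipate is the valuation-to-substitution bridge rather than any rewriting argument: I must ensure that the $\sigma$ built from $\rho$ is genuinely $X$-valued, that $\models_{\xM}\varphi\sigma$ really follows (which relies on $\Var(\varphi)\subseteq X$ and on $\rho$ mapping $X$ into the underlying domains $\xI(\tau)$), and, most importantly, that the $\fM$-valuation $\inter{\sigma}_{\fM,\rho}$ induced by $\sigma$ coincides exactly with $\rho$. This last identity is the linchpin, since it is what allows the conversion proved only for the instance $s\sigma \Cab[\xE]{*} t\sigma$ to be read back as an equality of interpretations of the original terms $s$ and $t$ under the given valuation $\rho$.
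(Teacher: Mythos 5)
Your proposal is correct and follows essentially the same route as the paper's proof: the paper also converts $\rho$ into the $X$-valued substitution $\hat\rho$ (the restriction of $\rho$ to $X$, read as values via $\Val_\tau \cong \xI(\tau)$), verifies $\models_{\xM}\varphi\hat\rho$ through Lemma~\ref{lem:preservation_of_validity_in_underlying_model}, applies the CE-validity hypothesis and Lemma~\ref{lem:soundness w.r.t. conversion}, and then transports the equality back to $s$ and $t$ by noting that $\rho$ extends $\hat\rho$. Your explicit verification that $\inter{\sigma}_{\fM,\rho} = \rho$ merely spells out the substitution-interpretation identity that the paper leaves implicit in its final ``extension'' step, so the two arguments coincide in substance.
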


\begin{proof}[Proof (Sketch)]
Let $\fT = \langle \xM, \xE \rangle$ and $\xM = \langle \xI, \xJ \rangle$.
Suppose $\fM = \langle \fI, \fJ \rangle$ is a CE-$\langle \Sigma,\xM\rangle$-algebra 
such that $\models_{\fM} \xE$.
Let $\rho$ be a valuation over $\fM$ 
satisfying the constraints $\varphi$
and $\rho(x) \in \xI(\tau)$ holds for all $x^\tau \in X$.
Now, let $\hat \rho$ be a valuation that is obtained
from $\rho$ by restricting its domain to $X$.
Then, $\models_{\xM} \varphi\hat\rho$ by Lemma~\ref{lem:preservation_of_validity_in_underlying_model},
and thus $s\hat\rho \Cab[\xE]{*} t\hat\rho$ holds.
Hence, by~Lemma~\ref{lem:soundness w.r.t. conversion},
$\inter{s\hat\rho}_{\fM,\tau} = \inter{t\hat\rho}_{\fM,\tau}$ holds
for any valuation $\tau$.
This means that $\inter{s}_{\fM,\tau'} = \inter{t}_{\fM,\tau'}$
for any extension $\tau'$ of $\hat\rho$. In particular, 
one obtains  $\inter{s}_{\fM,\rho} = \inter{t}_{\fM,\rho}$.
\end{proof}

\newcommand{\ProofSoundnessOfAlgebraicSemantics}{
\ifthenelse{\boolean{OmitProofs}}{\TheoremSoundnessOfAlgebraicSemantics*}{}
\begin{proof}
  Suppose that $\fT = \langle \xM, \xE \rangle$ is a CE-theory over 
  a signature $\Sigma$.
  Let $\xM = \langle \xI, \xJ \rangle$.
  Suppose moreover $\xE \cec \CEqn{X}{s}{t}{\varphi}$.
  We show that for any CE-$\langle \Sigma,\xM\rangle$-algebra $\fM = \langle \fI, \fJ \rangle$ 
  such that $\models_{\fM} \xE$, we have $\models_{\fM} \CEqn{X}{s}{t}{\varphi}$.
  Suppose $\fM = \langle \fI, \fJ \rangle$ is a CE-$\langle \Sigma,\xM\rangle$-algebra 
  such that $\models_{\fM} \xE$.
  Then, by the definition of CE-validity,
  $s\sigma \Cab[\xE]{*} t\sigma$ holds for any 
  $X$-valued substitution $\sigma$ such that $\models_{\xM} \varphi\sigma$.
Let $\rho$ be a valuation over $\fM$ 
satisfying the constraints $\varphi$
and $\rho(x) \in \xI(\tau)$ holds for all $x^\tau \in X$.
Now, let $\hat \rho$ be a valuation that is obtained
from $\rho$ by restricting its domain to $X$.
Then since $\rho(x) \in \xI(\tau) \cong \Val^\tau$ for all $x \in X$, 
$\hat \rho$ is a substitution such that $\Dom(\hat \rho) = \VDom(\hat \rho) = X$.
Furthermore, as $\Var(\varphi) \subseteq X$, 
we have by Lemma~\ref{lem:preservation_of_validity_in_underlying_model} 
that $\models_{\xM} \varphi\hat\rho$.
Thus, $\hat \rho$ is a $X$-valued substitution satisfying $\models_{\xM} \varphi\hat\rho$,
and thus, by our assumption, 
$s\hat\rho \Cab[\xE]{*} t\hat\rho$ holds.
Thus, by~Lemma~\ref{lem:soundness w.r.t. conversion},
$\inter{s\hat\rho}_{\fM,\tau} = \inter{t\hat\rho}_{\fM,\tau}$ holds
for any valuation $\tau$.
This means that for any extension $\tau'$ of $\hat\rho$,
$\inter{s}_{\fM,\tau'} = \inter{t}_{\fM,\tau'}$.
In particular, one can take $\tau' := \rho$.
Then $\inter{s}_{\fM,\rho} = \inter{t}_{\fM,\rho}$ holds.
Thus we have now shown that $\inter{s}_{\fM,\rho} = \inter{t}_{\fM,\rho}$ holds,
for all valuations $\rho$ over $\fM$ 
satisfying the constraints $\varphi$
and $\rho(x) \in \xI(\tau)$ holds for all $x^\tau \in X$.
Therefore, we conclude $\models_{\fM} \CEqn{X}{s}{t}{\varphi}$.
\end{proof}
}
\ifthenelse{\boolean{OmitProofs}}{}{\ProofSoundnessOfAlgebraicSemantics}

The combination of Theorem~\ref{thm:soundness of the system CEC0} and
Theorem~\ref{thm:soundness of algebraic semantics w.r.t. constrained equational
validity} implies the following corollary.

\begin{corollary}[soundness of $\mathbf{CEC}_0$ w.r.t.\ algebraic semantics]
\label{cor:soundness of CEC0 wrt algebraic semantics}
    Let $\fT$ be a CE-theory.
    If $\fT \vdash_{\mathbf{CEC}_0} \CEqn{X}{s}{t}{\varphi}$,
    then $\fT \models \CEqn{X}{s}{t}{\varphi}$.
\end{corollary}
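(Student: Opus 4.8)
The plan is to obtain the statement by composing the two soundness results already established, namely Theorem~\ref{thm:soundness of the system CEC0} (soundness of $\mathbf{CEC}_0$ with respect to CE-validity) and Theorem~\ref{thm:soundness of algebraic semantics w.r.t. constrained equational validity} (soundness of the algebraic semantics with respect to CE-validity). The key observation is that CE-validity, written $\fT \cec \CEqn{X}{s}{t}{\varphi}$, serves as a common intermediate notion: the first theorem lets us pass \emph{from} provability in the calculus \emph{to} CE-validity, while the second theorem lets us pass \emph{from} CE-validity \emph{to} the semantic consequence relation $\fT \models \CEqn{X}{s}{t}{\varphi}$. Chaining these two implications along this bridge yields exactly the desired conclusion.

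Concretely, I would proceed in two steps. First, assume $\fT \vdash_{\mathbf{CEC}_0} \CEqn{X}{s}{t}{\varphi}$; by Theorem~\ref{thm:soundness of the system CEC0} this gives $\fT \cec \CEqn{X}{s}{t}{\varphi}$. Second, feeding this into Theorem~\ref{thm:soundness of algebraic semantics w.r.t. constrained equational validity} immediately produces $\fT \models \CEqn{X}{s}{t}{\varphi}$, which is the claim. Since both premises are literal instances of the hypotheses of the two theorems, no additional side conditions or case distinctions arise, and there is no genuine obstacle to overcome; the whole argument is a one-line transitivity of entailment. The only point worth a moment's care is purely bookkeeping: ensuring that the CE-theory $\fT = \langle \xM, \xE \rangle$ and the quadruple $\CEqn{X}{s}{t}{\varphi}$ in the corollary are the same objects to which both theorems are applied, so that the intermediate $\fT \cec \CEqn{X}{s}{t}{\varphi}$ matches on both sides. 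Given that all the real work has been discharged in proving the two component theorems (in particular the rule-by-rule induction underlying Theorem~\ref{thm:soundness of the system CEC0} and the valuation-restriction argument underlying Theorem~\ref{thm:soundness of algebraic semantics w.r.t. constrained equational validity}), the corollary itself requires no further effort beyond this composition.
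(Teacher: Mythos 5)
Your proposal is correct and matches the paper's own argument exactly: the corollary is stated in the paper as an immediate combination of Theorem~\ref{thm:soundness of the system CEC0} and Theorem~\ref{thm:soundness of algebraic semantics w.r.t. constrained equational validity}, chained through CE-validity precisely as you describe. Nothing further is needed.
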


\begin{example}
Consider integer arithmetic for the underlying model $\xM$. Take a term
signature $\xFTe = \SET{ \m{a}\colon \textsf{Int}}$. Consider the LCES
$\xE= \SET{ \m{a} \approx \m{0}, \m{a} \approx \m{1} }$ with $\m{0},\m{1} \in \Val$ and
$0,1 \in \mathbb{Z}$, hence $\xJ(\m{0}) = 0$ and $\xJ(\m{1}) = 1$.
Then, for any valuation $\rho$ on a CE-$\langle \Sigma,\xM\rangle$-algebra $\fM
= \langle \fI, \fJ \rangle$ we have $\rho(\m{0}) = \m{0}$ and $\rho(\m{1}) =
\m{1}$. Thus, if $\fM$ is a model of $\xE$ then it follows that $0 =
\inter{\m{0}} = \inter{\m{a}} = \inter{\m{1}} = 1$, which is a contradiction.
Therefore, there is no CE-$\langle \Sigma,\xM\rangle$-algebra $\fM$ which
validates $\xE$.
\end{example}

This example motivates us to introduce the following definition of consistency
for CE-theories.

\begin{definition}[consistency]
\label{def:consistency}
A CE-theory is said to be \emph{consistent} if it has a model.
\end{definition} 
Our definition of consistent CE-theories does not exclude any
theory that has only an almost trivial model such that $\xI(\tau) = \SET{
\bullet }$ for all $\tau \in \xSTe$.

\subsection{Completeness w.r.t.\ CE-Validity}

In this subsection, we prove the completeness of algebraic semantics with
respect to the CE-validity. That is, if a CE is valid in all models of a
CE-theory then it is a CE-consequence of the CE-theory. We start by defining
congruence relations, quotient algebras and term algebras that suit our
formalism, incorporating standard notions for example the first-order equational
logic, and then present basic results on them.

Let $\Sigma = \langle\xSTh,\xSTe, \xFTh, \xFTe \rangle$ be a signature,
$\xM = \langle \xI, \xJ \rangle$ a model over $\xSTh$ and $\xFTh$,
and $\fM = \langle \fI, \fJ \rangle$ a CE-$\langle \Sigma,\xM\rangle$-algebra.
A \emph{congruence relation} on $\fM$ is an 
$\xS$-indexed family of relations 
${\sim} = ({\sim}^\tau)_{\tau \in \xS}$
that satisfies all of the following:
\begin{enumerate}
\item $\sim^\tau$ is an equivalence relation on $\fI(\tau)$,
\item $\sim^\tau \cap~ \xI(\tau)^2$ is the identity relation for $\tau \in \xSTh$, and
\item for each $f\colon \tau_1 \times \cdots \times \tau_n \to \tau_0 \in \xF$,
if $a_i \sim^{\tau_i} b_i$ for all $1 \leqslant i \leqslant n$
then $\fJ(f)(a_1,\ldots,a_n) \sim^{\tau_0} \fJ(f)(b_1,\ldots,b_n)$.
\end{enumerate}
We note here that the difference from the standard notion of
congruence relation on algebras is located in item \Bfnum{2}.
Given a CE-$\langle \Sigma,\xM\rangle$-algebra
$\fM = \langle \fI, \fJ \rangle$ and a congruence relation $\sim$
on it, the quotient CE-$\langle \Sigma,\xM\rangle$-algebra
$\fM/{\sim} = \langle \fI', \fJ' \rangle$
is defined as follows:
$\fI'(\tau) = \fI(\tau)/{\sim^\tau}
= \SET{ {}[a]_{\sim^\tau} \mid a \in \fI(\tau) }$
where $[a]_{\sim^\tau}$ is the equivalence class of $a \in \fI(\tau)$,
i.e.\ $[a]_{\sim^\tau} = \SET{ b \in \fI(\tau) \mid a \sim^\tau b }$,
and 
$\fJ'(f)([a_1]_{\sim^{\tau_1}},\ldots,[a_n]_{\sim^{\tau_n}})
= [\fJ(f)(a_1,\ldots,a_n)]_{\sim^{\tau_0}}$.
It is easy to see 
that $\fJ'$ is well-defined provided that $\sim$ is a congruence.
When no confusion occurs, we omit the superscript $\tau$ from $\sim^\tau$.

\begin{restatable}[quotient algebra]{lemma}{LemmaQuotientAlgebra}
\label{lem:quotient algebra}
Let $\fM$ be a CE-$\langle \Sigma,\xM\rangle$-algebra,
and $\sim$ a congruence on it.
Then
$\fM/{\sim}$ is a CE-$\langle \Sigma,\xM\rangle$-algebra.
\end{restatable}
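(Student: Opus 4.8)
The plan is to verify the three requirements in the definition of a CE-$\langle \Sigma,\xM\rangle$-algebra for $\fM/{\sim} = \langle \fI', \fJ' \rangle$: that each domain $\fI'(\tau)$ is non-empty, that each interpretation $\fJ'(f)$ is a well-defined function of the correct arity, and---most importantly---that $\fM/{\sim}$ still extends the underlying model $\xM$. The first two points are routine: since $\fI(\tau)$ is non-empty, so is the set of equivalence classes $\fI'(\tau) = \fI(\tau)/{\sim^\tau}$; and well-definedness of $\fJ'$ is exactly what clause \Bfnum{3} of the congruence definition provides, as already remarked just before the statement. So I would dispatch these quickly and concentrate on the third requirement.

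The substance lies in showing that $\fM/{\sim}$ extends $\xM$. Since the elements of $\fI'(\tau)$ are equivalence classes rather than elements of $\xI(\tau)$, the literal inclusion $\fI'(\tau) \supseteq \xI(\tau)$ cannot hold, so I would instead establish the alternative ``more general'' condition from the definition, namely an injective homomorphism $\iota\colon \xM \to \fM/{\sim}$. First I would define $\iota$ on each theory sort $\tau \in \xSTh$ by $\iota(a) = [a]_{\sim^\tau}$, where $a \in \xI(\tau)$ is regarded as an element of $\fI(\tau)$ via the inclusion witnessing that $\fM$ extends $\xM$; equivalently, $\iota$ is the composite of that inclusion with the canonical quotient map $\fM \to \fM/{\sim}$.

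To see that $\iota$ is a homomorphism, for $f\colon \tau_1 \times \cdots \times \tau_n \to \tau_0 \in \xFTh$ and $a_i \in \xI(\tau_i)$ I would compute $\fJ'(f)([a_1]_{\sim^{\tau_1}},\ldots,[a_n]_{\sim^{\tau_n}}) = [\fJ(f)(a_1,\ldots,a_n)]_{\sim^{\tau_0}} = [\xJ(f)(a_1,\ldots,a_n)]_{\sim^{\tau_0}}$, using the definition of $\fJ'$ and then that $\fJ$ restricts to $\xJ$ on the domains $\xI(\tau_i)$; this is precisely $\iota(\xJ(f)(a_1,\ldots,a_n))$, so $\iota$ commutes with the theory operations.

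The key step---and the place where the non-standard clause \Bfnum{2} of the congruence definition is indispensable---is injectivity of $\iota$. Suppose $a,b \in \xI(\tau)$ with $\iota(a) = \iota(b)$, that is $a \sim^\tau b$. Since $a,b \in \xI(\tau)$, the pair $(a,b)$ lies in ${\sim^\tau} \cap \xI(\tau)^2$, which by clause \Bfnum{2} is the identity relation; hence $a = b$. This is exactly what guarantees that the values of the background theory are not collapsed by passing to the quotient, so that $\fM/{\sim}$ genuinely extends $\xM$. I expect this injectivity argument to be the main (though short) obstacle, as it is the only point of the proof that invokes the modified notion of congruence and thereby pinpoints why that modification was built into the definition in the first place.
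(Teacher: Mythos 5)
Your proposal is correct and takes essentially the same approach as the paper: the paper's proof likewise dispatches well-definedness of $\fI'$ and $\fJ'$ as already remarked, and then invokes clause \Bfnum{2} of the congruence definition to conclude that $\fM/{\sim}$ restricted to $(\xI(\tau))_{\tau \in \xSTh}$ and $\xFTh$ is isomorphic to $\xM$, which is exactly the injective homomorphism $\iota$ you construct. Your version merely spells out the homomorphism computation and the injectivity step that the paper states tersely.
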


\newcommand{\ProofQuotientAlgebra}{
\ifthenelse{\boolean{OmitProofs}}{\LemmaQuotientAlgebra*}{}
\begin{proof}
Suppose $\Sigma = \langle\xSTh,\xSTe, \xFTh, \xFTe \rangle$ 
and $\xM = \langle \xI, \xJ \rangle$.
Let $\fM/{\sim} = \langle \fI', \fJ' \rangle$.
Then $\fI', \fJ'$ are well-defined as we mentioned above.
By the definition of ${\sim} = ({\sim}^\tau)_{\tau \in \xS}$,
the domain and codomain of $\fJ'(f)$ respect the sort of function 
symbols $f$.
Lastly, 
since $\sim^\tau \cap ~ \xI(\tau)^2$ is the identify relation for $\tau \in \xSTh$,
we obtain that $\fM/{\sim}$ restricted to
$(\xI(\tau))_{\tau \in \xSTh}$
and $\xFTh$ is isomorphic to $\xM$.
Therefore, $\fM/{\sim}$ is again a CE-$\langle \Sigma,\xM\rangle$-algebra.
\end{proof}
}
\ifthenelse{\boolean{OmitProofs}}{}{\ProofQuotientAlgebra}

Next, we define the term algebra. In contrast to the usual
construction, for term CE-algebras we need to take care of
identification induced by underlying models.

\begin{definition}[term algebra]
    Let $\Sigma = \langle\xSTh,\xSTe, \xFTh, \xFTe \rangle$ be a signature,
    $\xM = \langle \xI, \xJ \rangle$ a model over $\xSTh$ and $\xFTh$,
    and $U$ a set of variables.
    The \emph{term algebra generated from $U$ with $\xM$} 
        (denoted by $T[\xM](\Sigma,U)$) 
     is a pair
    $\fM = \langle \fI, \fJ \rangle$ where
    \begin{itemize}
    \item $\fI(\tau) =  \xT(\xF, U)^\tau/{\sim_\mathrm{c}}$, and
    \item $\fJ(f)([t_1]_\mathrm{c},\ldots,[t_n]_\mathrm{c}) = [f(t_1,\ldots,t_n)]_\mathrm{c}$
    for any $f \in \xF$.
    \end{itemize}
    Here, $\xF = \xFTh \cup \xFTe$, ${\sim_\mathrm{c}} = {\Cca^{*}}$,
    and $[t]_\mathrm{c}$ denotes the $\sim_\mathrm{c}$-equivalence class containing a term $t$.
    Since ${\Cca^{*}}$ is sort preserving, we regard $\sim_\mathrm{c}$
    as the sum of the $\tau$-indexed family of relations $\sim_\mathrm{c}^\tau$ with $\tau \in \xS$.
    Clearly, $\fJ(f)$ is well-defined, since ${\Cca^{*}}$ is closed under contexts.
\end{definition}

\begin{restatable}{lemma}{LemmaTermAlgebra}
\label{lem:term algebra}
    Let $\Sigma = \langle\xSTh,\xSTe, \xFTh, \xFTe \rangle$ be a signature,
    $\xM$ a model over $\xSTh$ and $\xFTh$,
    and $U$ a set of variables.
    Then, the term algebra $T[\xM](\Sigma,U)$
    is a CE-$\langle \Sigma,\xM\rangle$-algebra.
\end{restatable}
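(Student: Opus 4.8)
The plan is to check the two defining requirements for $\fM = T[\xM](\Sigma,U)$ to be a CE-$\langle \Sigma,\xM\rangle$-algebra: that every $\fI(\tau)$ is a non-empty set on which $\fJ$ respects the declared sorts, and that $\xM$ embeds into $\fM$. That $\fJ$ is well-defined with domains and codomains matching the sorts of the function symbols is already noted in the definition, since $\sim_\mathrm{c} = {\Cca^*}$ is sort-preserving and closed under contexts, so nothing remains there. Non-emptiness of $\fI(\tau) = \xT(\xF,U)^\tau/{\sim_\mathrm{c}}$ reduces to the existence of one term of sort $\tau$; for $\tau \in \xSTh$ this holds because $\xI(\tau)$ is non-empty and $\Val_\tau \cong \xI(\tau)$ furnishes a value $c \in \Val_\tau \subseteq \xT(\xFTh,U)^\tau$, and for $\tau \in \xSTe$ it holds as soon as $U$ contains a variable of that sort.

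The real work is to produce an injective homomorphism $\iota\colon \xM \to \fM$. Using the bijection $\Val_\tau \cong \xI(\tau)$, write $c_a \in \Val_\tau$ for the value representing $a \in \xI(\tau)$ (recall that we identify a value with its interpretation, so $\inter{c_a} = a$), and define $\iota_\tau(a) = [c_a]_\mathrm{c} \in \fI(\tau)$. For injectivity, suppose $\iota_\tau(a) = \iota_\tau(b)$; then $c_a \sim_\mathrm{c} c_b$, i.e.\ $c_a \Cca^* c_b$. Since $c_a, c_b \in \xT(\xFTh)$ and every calculation step keeps a ground theory term inside $\xT(\xFTh)$, the whole conversion stays within ground theory terms, so part~\Bfnum{4} of Lemma~\ref{lem:properties of calculation steps} applies and yields $\inter{c_a} = \inter{c_b}$, whence $a = b$.

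It remains to verify the homomorphism equations. For $f\colon \sig{\tau} \to \tau_0 \in \xFTh$ and values $c_1,\ldots,c_n$ representing $a_1,\ldots,a_n$, the right-hand side $\fJ(f)(\iota_{\tau_1}(a_1),\ldots,\iota_{\tau_n}(a_n))$ unfolds by definition of $\fJ$ to $[f(c_1,\ldots,c_n)]_\mathrm{c}$, while the left-hand side $\iota_{\tau_0}(\xJ(f)(a_1,\ldots,a_n))$ is $[c_0]_\mathrm{c}$ with $c_0$ the value of $\xJ(f)(c_1,\ldots,c_n)$. If $f \notin \Val$, the single calculation step $f(c_1,\ldots,c_n) \Rca c_0$ gives $[f(c_1,\ldots,c_n)]_\mathrm{c} = [c_0]_\mathrm{c}$; if $f \in \Val$ (necessarily a constant) both sides are $[f]_\mathrm{c}$. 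Hence $\iota$ is an injective homomorphism and $\fM$ is a CE-$\langle \Sigma,\xM\rangle$-algebra.

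The one genuinely load-bearing step is the injectivity of $\iota$: the term algebra is formed modulo $\Cca^*$ rather than modulo syntactic equality precisely so that ground theory terms with equal interpretation are collapsed, and it is part~\Bfnum{4} of Lemma~\ref{lem:properties of calculation steps} that simultaneously guarantees this collapse and keeps distinct values distinct, so that the copy of $\xM$ sitting inside $\fM$ is faithful. The remaining verifications are routine bookkeeping.
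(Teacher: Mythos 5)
Your proof is correct and takes essentially the same route as the paper: the load-bearing step in both is that values lie in distinct $\sim_\mathrm{c}$-classes by Lemma~\ref{lem:properties of calculation steps} (giving the faithful copy of $\xM$ inside the term algebra, which you phrase as injectivity of $\iota$ and the paper as $\xI(\tau) \cong \SET{[v]_\mathrm{c} \mid v \in \Val^\tau}$ up to identification), and the agreement of $\fJ$ with $\xJ$ on values follows from the single calculation step $f(c_1,\ldots,c_n) \Rca c_0$ exactly as in the paper. Your extra checks (non-emptiness of the carriers, the constant case $f \in \Val$) are sound refinements of details the paper leaves tacit.
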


\newcommand{\ProofTermAlgebra}{
\ifthenelse{\boolean{OmitProofs}}{\LemmaTermAlgebra*}{}
\begin{proof}
Let $\xM = \langle \xI, \xJ \rangle$ and
$T[\xM](\Sigma,U) = \langle \fI, \fJ \rangle$.
First, we show $\xI(\tau) \subseteq \fI(\tau)$ for any $\tau \in \xSTh$.
Let $u,v \in \Val^\tau$.
Then $u \sim_\mathrm{c} v$ implies $u = v$ by Lemma~\ref{lem:properties of calculation steps}.
Hence for any $u,v \in \Val^\tau$, $[u]_\mathrm{c} = [v]_\mathrm{c}$ if and only if $u = v$.
Thus, we have $\xI(\tau) = \Val^\tau 
\cong \SET{ {}[v]_\mathrm{c} \mid v \in \Val^\tau }
\subseteq \xT(\Sigma,U)^\tau/{\sim_\mathrm{c}}$.
Here, $A \cong B$ denotes that sets $A,B$ are isomorphic
(and isomorphic sets are identified).
Thus, $\xI(\tau) \subseteq \fI(\tau)$ (up to isomorphism).
Next, let $v_1,\ldots,v_n \in \Val$
and $\xJ(f)(v_1,\ldots,v_n) = v_0$.
Then, by $f(v_1,\ldots,v_n) \Cca^* v_0$,
we have $f(v_1,\ldots,v_n) \sim_\mathrm{c} v_0$.
Therefore, $\fJ(f)([v_1]_\mathrm{c},\ldots,[v_n]_\mathrm{c}) 
= [f(v_1,\ldots,v_n)]_\mathrm{c}
= [v_0]_\mathrm{c} = v_0 = \xJ(f)(v_1,\ldots,v_n)$.
Note that $v_0$ and $[v_0]_\mathrm{c}$ are identified
based on $\Val^\tau \cong \SET{ {}[v]_\mathrm{c} \mid v \in \Val^\tau }$.
\end{proof}
}
\ifthenelse{\boolean{OmitProofs}}{}{\ProofTermAlgebra}

We introduce a syntactic counter part of the notion of consistency of
CE-theories for which equivalence of the two notions will be proved
only briefly.

\begin{definition}[consistency w.r.t.\ values]
\label{def:consistency w.r.t. values}
    A CE-theory $\fT = \langle \xM, \xE \rangle$ is said to be \emph{consistent with respect to values} (value-consistent, for short)
    if for any $u,v\in \Val^\tau$,
    $u \Cab[\xE]{*} v$ implies $u = v$.
\end{definition}

Based on the preparations so far, we now proceed to construct canonical models
of CE-theories. The first step is to show that $\Cab[\xE]{*}$ is a congruence
relation  on the term algebra for each CE-theory $\fT = \langle \xM, \xE
\rangle$; special attention on $\sim_\mathrm{c}$ is required.

\begin{restatable}{lemma}{LemmaCongruenceOnTermAlgebra}
\label{lem:congruence on term algebra}
    Let $\fT = \langle \xM, \xE \rangle$
    be a value-consistent CE-theory
    over a signature $\Sigma$,
    and
    $U$ a set of variables.
    For any $[s]_{\mathrm{c}},[t]_{\mathrm{c}} \in T[\xM](\Sigma,U)$,
    let ${\sim}_{\xE} = \SET{ \langle [s]_{\mathrm{c}}, [t]_{\mathrm{c}} \rangle \mid s \Cab[\xE]{*} t }$.
    Then, $\sim_\xE$ is a congruence relation 
    on the term algebra $T[\xM](\Sigma,U)$.
\end{restatable}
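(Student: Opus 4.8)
The plan is to verify directly the three defining conditions of a congruence relation (as stated before Lemma~\ref{lem:quotient algebra}), after first checking that $\sim_\xE$ is well-defined on $\sim_\mathrm{c}$-classes. The observation that drives the whole argument is that $\Cca \subseteq \Cab[\xE]{}$, since $\Cab[\xE]{} = \Cca \cup \Cru$; consequently $\sim_\mathrm{c} = \Cca^{*} \subseteq \Cab[\xE]{*}$. Using this together with the symmetry and transitivity of $\Cab[\xE]{*}$, I would first argue that whether $s \Cab[\xE]{*} t$ holds is independent of the chosen representatives $s \in [s]_\mathrm{c}$, $t \in [t]_\mathrm{c}$: replacing either term by a $\sim_\mathrm{c}$-equivalent one only pre- or appends $\Cab[\xE]{*}$-steps. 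This makes $\sim_\xE$ a genuine relation on the carrier of $T[\xM](\Sigma,U)$.

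Conditions \textbf{1} and \textbf{3} should then be routine. For \textbf{1}, I would note that $\Cab[\xE]{*}$ is the reflexive--transitive closure of the symmetric relation $\Cab[\xE]{}$ (Lemma~\ref{lem:sym and closure properties of equational steps}), hence reflexive, symmetric and transitive, and these properties transfer verbatim to $\sim_\xE$ on the classes. For \textbf{3}, given $[a_i]_\mathrm{c} \sim_\xE [b_i]_\mathrm{c}$, i.e.\ $a_i \Cab[\xE]{*} b_i$ for all $i$, closure under contexts (Lemma~\ref{lem:sym and closure properties of equational steps}) yields $f(a_1,\ldots,a_n) \Cab[\xE]{*} f(b_1,\ldots,b_n)$, which is exactly $\fJ(f)([a_1]_\mathrm{c},\ldots,[a_n]_\mathrm{c}) \sim_\xE \fJ(f)([b_1]_\mathrm{c},\ldots,[b_n]_\mathrm{c})$ by the definition of $\fJ$ in the term algebra.

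The conceptually essential step, and the one place where value-consistency is needed, is condition \textbf{2}. I would recall from Lemma~\ref{lem:term algebra} that the copy of $\xI(\tau)$ inside $T[\xM](\Sigma,U)$ is exactly $\SET{ [v]_\mathrm{c} \mid v \in \Val^\tau }$, with distinct values giving distinct classes. Thus a typical element of $\sim_\xE^\tau \cap \xI(\tau)^2$ is a pair $\langle [u]_\mathrm{c}, [v]_\mathrm{c} \rangle$ with $u,v \in \Val^\tau$ and $u \Cab[\xE]{*} v$; value-consistency (Definition~\ref{def:consistency w.r.t. values}) then forces $u = v$, so $[u]_\mathrm{c} = [v]_\mathrm{c}$. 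Since reflexivity already puts the identity into this restriction, it equals the identity relation, as required. I expect this to be the crux of the proof: it is precisely the property for which value-consistency was introduced, and without that hypothesis $\sim_\xE$ could collapse two distinct values and so fail condition \textbf{2}, the non-standard clause of our congruence definition.
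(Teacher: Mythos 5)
Your proposal is correct and follows essentially the same route as the paper's proof: well-definedness via ${\Cca^{*}} \subseteq {\Cab[\xE]{*}}$, the equivalence and congruence clauses via the symmetry, transitivity and context-closure of $\Cab[\xE]{*}$ (Lemma~\ref{lem:sym and closure properties of equational steps}), and value-consistency used exactly once to force $u = v$ from $u \Cab[\xE]{*} v$ for values $u,v \in \Val^\tau$ in clause \textbf{2}. Your identification of clause \textbf{2} as the crux, and your slightly more explicit justification of representative-independence, match the paper's argument in substance.
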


\begin{proof}[Proof (Sketch)]
Note first that $\sim_\xE$ is well-defined because one has always ${\Cca^{*}} \subseteq {\Cab[\xE]{*}}$.
    Let $\Sigma = \langle\xSTh,\xSTe, \xFTh, \xFTe \rangle$, $\xM = \langle \xI, \xJ \rangle$,
    and $T[\xM](\Sigma,U) = \langle \fI, \fJ \rangle$.
    We only present a proof that ${\sim}_\xE^\tau \cap {\xI(\tau)}^2$ 
    equals the identity relation for $\tau \in \xSTh$ here.
    Let $\tau \in \xSTh$
    and suppose $[u]_\mathrm{c} \sim_\xE^\tau [v]_\mathrm{c}$ with $u, v \in \xI(\tau) \cong \Val^\tau$.
    Then, we have $u \Cab[\xE]{*} v$ by the definition of $\sim_\xE$, and by consistency w.r.t.\ values of the theory $\fT$, 
    we obtain $u = v$ as $u,v \in \Val$. Therefore, $[u]_\mathrm{c} = [v]_\mathrm{c}$.
\end{proof}

\newcommand{\ProofCongruenceOnTermAlgebra}{
\ifthenelse{\boolean{OmitProofs}}{\LemmaCongruenceOnTermAlgebra*}{}
\begin{proof}
    Note first that $\sim_\xE$ is well-defined because one has always ${\Cca^{*}} \subseteq {\Cab[\xE]{*}}$.
    Let $\Sigma = \langle\xSTh,\xSTe, \xFTh, \xFTe \rangle$, $\xM = \langle \xI, \xJ \rangle$,
    and $T[\xM](\Sigma,U) = \langle \fI, \fJ \rangle$.
    Since $s \Cab[\xE]{*} t$ implies terms $s,t$ have the same sort, 
    and thus one can regard $\sim_\xE$ as the sum of $\tau$-indexed $\sim_\xE^\tau$
    with $\tau \in \xS$.
    First, since $\Cab[\xE]{*}$ is an equivalence relation,
    ${\sim}_\xE$ is an equivalence relation.
    Next, we show that ${\sim}_\xE^\tau \cap {\xI(\tau)}^2$ 
    equals the identity relation for $\tau \in \xSTh$.
    For this, let $\tau \in \xSTh$
    and suppose $[u]_\mathrm{c} \sim_\xE^\tau [v]_\mathrm{c}$ with $u, v \in \xI(\tau) \cong \Val^\tau$.
    Then, we have $u \Cab[\xE]{*} v$ by the definition of $\sim_\xE$, and by consistency with respect to values of the theory $\fT$, 
    we obtain $u = v$ as $u,v \in \Val$. Hence, $[u]_\mathrm{c} = [v]_\mathrm{c}$.
    Finally, assume $f \in \xF$ and $[s_i]_\mathrm{c} \sim_\xE [t_i]_\mathrm{c}$  for $1 \leqslant i \leqslant n$.
    Then $s_i \Cab[\xE]{*} t_i$ for $1 \leqslant i \leqslant n$ by the definition of $\sim_\xE$. 
    Hence $f(s_1,\ldots,s_n) \Cab[\xE]{*} f(t_1,\ldots,t_n)$.
    Thus, $\fJ(f)([s_1]_\mathrm{c},\ldots,[s_n]_\mathrm{c}) = [f(s_1,\ldots,s_n)]_\mathrm{c}
    \sim_\xE [f(t_1,\ldots,t_n)]_\mathrm{c} = \fJ(f)([t_1]_\mathrm{c},\ldots,[t_n]_\mathrm{c})$.
\end{proof}
}
\ifthenelse{\boolean{OmitProofs}}{}{\ProofCongruenceOnTermAlgebra}

We give a construction of canonical models for each CE-theory $\xT$.

\begin{restatable}{lemma}{LemmaValidityOnTermAlgebra}
\label{lem:validity on term algebra}
    Let $\fT = \langle \xM, \xE \rangle$
    be a value-consistent CE-theory
    over a signature $\Sigma$.
    Then, 
    the quotient $\xT_\xE = T[\xM](\Sigma,\xV)/{\sim_\xE}$ of the term algebra 
    is a CE-$\langle \Sigma,\xM\rangle$-algebra.
    Furthermore, both of the following hold:
    \begin{enumerate}
    \item $\models_{\xT_\xE} \CEqn{X}{s}{t}{\varphi}$
    if and only if 
    $\xE \cec \CEqn{X}{s}{t}{\varphi}$,
    and 
    \item $\models_{\xT_\xE} \xE$.
    \end{enumerate}
\end{restatable}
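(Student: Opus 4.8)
The plan is to assemble the three assertions in sequence, leaning on the machinery of the preceding lemmas. First I would establish that $\xT_\xE$ is a CE-$\langle\Sigma,\xM\rangle$-algebra by chaining three results: by Lemma~\ref{lem:term algebra} the term algebra $T[\xM](\Sigma,\xV)$ is a CE-algebra, by Lemma~\ref{lem:congruence on term algebra} the relation $\sim_\xE$ is a congruence on it (this is exactly where value-consistency is used, to guarantee condition~\Bfnum{2} of congruence), and by Lemma~\ref{lem:quotient algebra} the quotient of a CE-algebra by a congruence is again a CE-algebra. Composing these gives the first assertion at once.

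The technical core is an interpretation lemma describing $\inter{\cdot}_{\xT_\xE,\rho}$. Since $\sim_\mathrm{c} = {\Cca^{*}} \subseteq {\Cab[\xE]{*}}$, the nested classes collapse so that $[[s]_\mathrm{c}]_{\sim_\xE} = [[t]_\mathrm{c}]_{\sim_\xE}$ if and only if $s \Cab[\xE]{*} t$. Given a valuation $\rho$ on $\xT_\xE$, I would choose for each variable $x$ a representative term, yielding a substitution $\sigma$ with $\rho(x) = [[\sigma(x)]_\mathrm{c}]_{\sim_\xE}$, and then prove by structural induction (analogous to Lemma~\ref{lem:substitution as interpretation} and the computation inside the proof of Lemma~\ref{lem:soundness w.r.t. conversion}) that $\inter{u}_{\xT_\xE,\rho} = [[u\sigma]_\mathrm{c}]_{\sim_\xE}$ for every term $u$; the base case is the choice of $\sigma$ and the inductive step uses the definition of the quotient interpretation $\fJ'$.

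With this in hand, item~\Bfnum{1} follows by translating valuations into $X$-valued substitutions and back. For the \emph{if} direction, assume $\xE \cec \CEqn{X}{s}{t}{\varphi}$ and take a valuation $\rho$ satisfying $\varphi$ with $\rho(x)\in\xI(\tau)$ for all $x^\tau\in X$; via $\xI(\tau)\cong\Val^\tau$ the representatives for $x\in X$ can be chosen in $\Val$, making $\sigma$ an $X$-valued substitution, and Lemma~\ref{lem:preservation_of_validity_in_underlying_model} converts $\inter{\varphi}_{\xT_\xE,\rho}=\m{true}$ into $\models_{\xM}\varphi\sigma$, whence $s\sigma\Cab[\xE]{*}t\sigma$, so the interpretation lemma gives $\inter{s}_{\xT_\xE,\rho}=\inter{t}_{\xT_\xE,\rho}$. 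For the \emph{only if} direction, an $X$-valued $\sigma$ with $\models_{\xM}\varphi\sigma$ defines a valuation $\rho$ by $\rho(x)=[[\sigma(x)]_\mathrm{c}]_{\sim_\xE}$ which satisfies $\varphi$ and respects $X$, and the hypothesis together with the interpretation lemma yields $s\sigma\Cab[\xE]{*}t\sigma$. Finally, item~\Bfnum{2} is a corollary: each $\CEqn{X}{\ell}{r}{\varphi}\in\xE$ is CE-valid by Lemma~\ref{lem:rule}, hence valid in $\xT_\xE$ by item~\Bfnum{1}, so $\models_{\xT_\xE}\xE$.

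I expect the main obstacle to be the careful bookkeeping around the nested quotient and the two-way correspondence between valuations on $\xT_\xE$ and $X$-valued substitutions, in particular ensuring that representatives for variables in $X$ are genuine values (using value-consistency to keep the embedding $\xI(\tau)\hookrightarrow\xT_\xE$ injective) and that satisfaction of the constraint $\varphi$ transfers faithfully between $\xT_\xE$ and $\xM$ through Lemma~\ref{lem:preservation_of_validity_in_underlying_model}. Once the interpretation lemma is in place, the remaining reasoning is routine.
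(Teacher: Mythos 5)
Your proposal is correct and follows essentially the same route as the paper's proof: the algebra structure via Lemmas~\ref{lem:term algebra}, \ref{lem:congruence on term algebra} and~\ref{lem:quotient algebra}, the inductive interpretation claim $\inter{u}_{\xT_\xE,\rho} = [u\sigma]_{\xE}$ as the technical core, the two-way translation between valuations and $X$-valued substitutions via $\xI(\tau)\cong\Val^\tau$ and Lemmas~\ref{lem:preservation_of_validity_in_underlying_model} and~\ref{lem:substitution as interpretation}, and item~\Bfnum{2} as an immediate consequence of item~\Bfnum{1}. You even correctly identify where value-consistency enters (injectivity of the embedding of values into the quotient), so nothing essential is missing.
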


\begin{proof}[Proof (Sketch)]
That $\xT_\xE$ is a CE-$\langle \Sigma,\xM\rangle$-algebra
    follows from Lemmas~\ref{lem:quotient algebra} and~\ref{lem:congruence on term algebra}.
    Let us abbreviate $[[t]_\mathrm{c}]_{{\sim}_\xE}$ as $[t]_\xE$.
    First we claim that $[u\sigma]_\xE = \inter{u}_{\xT_\xE,\rho}$ holds for any term $u$,
    for any substitution $\sigma$ and valuation $\rho$ on $\xT_\xE$
    such that $\rho(x) = [\sigma(x)]_\xE$, using induction on $u$. 
    \Bfnum{1.}
   ($\Rightarrow$)
    Let $\sigma$ be an $X$-valued substitution such that $\models_\xM \varphi\sigma$.
    Take a valuation $\rho$ on $\xT_\xE$ as $\rho(x) = [\sigma(x)]_\xE$.
    Then, $\rho(x) \in \xI(\tau)$ for all $x^\tau \in X$ and $\models_\xM \varphi\rho$.
    Thus, $\inter{s}_{\xT_\xE,\rho} = \inter{t}_{\xT_\xE,\rho}$.
    Hence,  $[s\sigma]_\xE = [t\sigma]_\xE$ by the claim,
    and therefore, $s\sigma \Cab[\xE]{*} t\sigma$.
    ($\Leftarrow$)
    Let $\rho$ be a valuation over $\xT_\xE$
    satisfying the constraints $\varphi$ and 
    $\rho(x) \in \xI(\tau)$ for all $x \in X$.
    Take a substitution $\sigma$ in such a way that $\sigma(x) = v_x$ for each $x \in X$,
    where $v_x \in \Val^\tau$ such that $[v_x]_\xE = \rho(x)$.
    By Lemma~\ref{lem:preservation_of_validity_in_underlying_model}, $\inter{\varphi}_{\xM,\rho} = \m{true}$,
    and thus, $\models_{\xM} \varphi\sigma$ by Lemma~\ref{lem:substitution as interpretation}.
    Hence $s\sigma \Cab[\xE]{*} t\sigma$, and thus $[s\sigma]_\xE = [t\sigma]_\xE$.
    Therefore $\inter{s}_{\xT_\xE,\rho} = \inter{t}_{\xT_\xE,\rho}$ by the claim.
    Item \Bfnum{2} follows from item \Bfnum{1}.
\end{proof}

\newcommand{\ProofValidityOnTermAlgebra}{
\ifthenelse{\boolean{OmitProofs}}{\LemmaValidityOnTermAlgebra*}{}
\begin{proof}
    From our assumptions and Lemmas~\ref{lem:quotient algebra}
    and~\ref{lem:congruence on term algebra},
    it immediately follows that $\xT_\xE$ 
    is a CE-$\langle \Sigma,\xM\rangle$-algebra.
    Furthermore, the item \Bfnum{2} follows from the item \Bfnum{1},
    as $\xE \cec \xE$ clearly holds.
    Thus, it remains to show claim \Bfnum{1} of this lemma.
    Let $\xM = \langle \xI, \xJ \rangle$ and $\xT_\xE = \langle \fI, \fJ \rangle$.
    Let us below also abbreviate $[[t]_\mathrm{c}]_{{\sim}_\xE}$ as $[t]_\xE$.
    We first show the following claim.
    \begin{quote}
        \textbf{Claim}: 
Let $\sigma,\rho$ be a substitution and a valuation on $\xT_\xE$, respectively, 
such that $\rho(x) = [\sigma(x)]_\xE$.
Then, we have $[u\sigma]_\xE = \inter{u}_{\xT_\xE,\rho}$ for any term $u$.
   \begin{proof}[Proof of the claim]
    The proof proceeds by structural induction on $u$.
    The case $u = x \in \xV$ follows as
    $[x\sigma]_\xE = [\sigma(x)]_\xE = \rho(x) = \inter{x}_\rho$.
    Consider the remaining case $u \notin \xV$.
    Suppose $u = f(u_1,\ldots,u_n)$ with $f \in \xF$.
    Then, using the induction hypotheses, 
    we have
    $[f(u_1,\ldots,u_n)\sigma]_\xE 
    =  [f(u_1\sigma,\ldots,u_n\sigma)]_\xE 
    =  \fJ(f)([u_1\sigma]_\xE,\ldots,[u_n\sigma]_\xE )
    =  \fJ(f)(\inter{u_1}_\rho,\ldots,\inter{u_n}_\rho)
    =  \inter{f(u_1,\ldots,u_n)}_\rho$.
   \end{proof}
    \end{quote}
    We now proceeds to show the claim (1).
    ($\Rightarrow$)
    Suppose $\models_{\xT_\xE} \CEqn{X}{s}{t}{\varphi}$.
    That is, $\inter{s}_{\xT_\xE,\rho} = \inter{t}_{\xT_\xE,\rho}$,
    for any valuation $\rho$ over $\xT_\xE$
    satisfying the constraints $\varphi$ and 
    $\rho(x) \in \xI(\tau)$ holds for all $x \in X$.
    To show $\xE \cec \CEqn{X}{s}{t}{\varphi}$,
    let us take an $X$-valued substitution $\sigma$ such that $\models_\xM \varphi\sigma$.
    We are now going to show $s\sigma \Cab[\xE]{*} t\sigma$ holds.
    Take a valuation $\rho$ on $\xT_\xE$ as $\rho(x) = [\sigma(x)]_\xE$.
    Then, because $\sigma$ is $X$-valued, 
    $\rho(x) \in \xI(\tau)$ holds for all $x^\tau \in X$.
    Furthermore, as $\Var(\varphi) \subseteq X$,
    $\models_\xM \varphi\rho$ follows from $\models_\xM \varphi\sigma$.
    Thus, $\inter{s}_{\xT_\xE,\rho} = \inter{t}_{\xT_\xE,\rho}$.
    Hence,  $[s\sigma]_\xE = [t\sigma]_\xE$ by the claim above,
    and therefore, $s\sigma \Cab[\xE]{*} t\sigma$.
    ($\Leftarrow$)
    Suppose $\xE \cec \CEqn{X}{s}{t}{\varphi}$.
    That is, $s\sigma \Cab[\xE]{*} t\sigma$ holds,
    for any $X$-valued substitution $\sigma$ such that $\models_\xM \varphi\sigma$.
    To show $\models_{\xT_\xE} \CEqn{X}{s}{t}{\varphi}$,
    let us take a valuation $\rho$ over $\xT_\xE$
    satisfying the constraints $\varphi$ and 
    $\rho(x) \in \xI(\tau)$ holds for all $x \in X$.
    We now going to show $\inter{s}_{\xT_\xE,\rho} = \inter{t}_{\xT_\xE,\rho}$.
    Since $X \subseteq \xVTh$, we have $\tau \in \xSTh$ for each $x^\tau \in X$.
    Thus, there exists a value $v_x \in \Val^\tau$ such that $[v_x]_\xE = \rho(x) \in \xI(\tau)$.
    Take a substitution $\sigma$ in such a way that $\sigma(x) = v_x$ for each $x \in X$.
    Clearly, $\sigma$ is $X$-valued.
    Furthermore, 
    since $\inter{\varphi}_{\xM,\rho} = \m{true}$ by Lemma~\ref{lem:preservation_of_validity_in_underlying_model},
    we have $\models_{\xM} \varphi\sigma$ by Lemma~\ref{lem:substitution as interpretation}.
    Thus, by our assumption, $s\sigma \Cab[\xE]{*} t\sigma$ holds.
    That is, $[s\sigma]_\xE = [t\sigma]_\xE$,
    and therefore $\inter{s}_{\xT_\xE,\rho} = \inter{t}_{\xT_\xE,\rho}$ by the claim above.
    Therefore, 
    $\xT_\xE \models \CEqn{X}{s}{t}{\varphi}$.
\end{proof}
}
\ifthenelse{\boolean{OmitProofs}}{}{\ProofValidityOnTermAlgebra}

Before proceeding to the completeness theorem,
we connect the two notions related to consistency
(Definitions~\ref{def:consistency} and~\ref{def:consistency w.r.t. values}).

\begin{restatable}{lemma}{LemmaConnectingConsistency}
\label{lem:connecting consistency}
    A CE-theory $\fT$ is consistent if and only if
    it is consistent with respect to values.
\end{restatable}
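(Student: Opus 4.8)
The plan is to prove the two implications separately, with the substantive direction (value-consistency $\Rightarrow$ consistency) following essentially for free from the canonical-model construction already established.

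For value-consistency $\Rightarrow$ consistency, I would simply invoke Lemma~\ref{lem:validity on term algebra}. Assuming $\fT = \langle \xM, \xE \rangle$ is value-consistent, that lemma guarantees that the quotient term algebra $\xT_\xE = T[\xM](\Sigma,\xV)/{\sim_\xE}$ is a well-defined CE-$\langle \Sigma,\xM \rangle$-algebra and, by its item~\Bfnum{2}, satisfies $\models_{\xT_\xE} \xE$. Hence $\xT_\xE$ is a model of $\fT$, so $\fT$ is consistent by Definition~\ref{def:consistency}.

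For the converse, consistency $\Rightarrow$ value-consistency, I would take a model $\fM = \langle \fI, \fJ \rangle$ of $\fT$, i.e.\ $\models_{\fM} \xE$, and show that $u \Cab[\xE]{*} v$ forces $u = v$ for all $u,v \in \Val^\tau$. By soundness with respect to conversion (Lemma~\ref{lem:soundness w.r.t. conversion}), $u \Cab[\xE]{*} v$ yields $\inter{u}_{\fM,\rho} = \inter{v}_{\fM,\rho}$ for any valuation $\rho$ on $\fM$. The key observation is that values, being ground theory constants, interpret to themselves: applying Lemma~\ref{lem:preservation_of_validity_in_underlying_model} with $X = \varnothing$ (so that $u,v \in \xT(\xFTh)$ and the side condition on $\rho$ is vacuous) gives $\inter{u}_{\fM} = \inter{u}_{\xM} = u$ and likewise $\inter{v}_{\fM} = v$, under the identification $\Val^\tau \cong \xI(\tau)$. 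Thus $u = v$ in $\xI(\tau)$, and the bijection $\Val^\tau \cong \xI(\tau)$ lets us conclude $u = v$ as values, establishing value-consistency.

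Neither direction presents a serious obstacle once the earlier machinery is in place; the only point that requires any care is ensuring that values are interpreted identically in $\fM$ and in $\xM$, which is precisely what Lemma~\ref{lem:preservation_of_validity_in_underlying_model} supplies. Here the design choice that CE-algebras \emph{extend}, rather than merely contain, the underlying model is exactly what makes this step clean. In effect, the genuine conceptual content of this equivalence has already been absorbed into the construction of the canonical model $\xT_\xE$, so the proof reduces to assembling Lemmas~\ref{lem:soundness w.r.t. conversion}, \ref{lem:validity on term algebra}, and~\ref{lem:preservation_of_validity_in_underlying_model}.
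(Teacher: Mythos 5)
Your proof is correct and follows essentially the same route as the paper's: the direction from value-consistency to consistency is exactly the paper's appeal to Lemma~\ref{lem:validity on term algebra} exhibiting $\xT_\xE$ as a model, and the converse is the paper's argument via Lemma~\ref{lem:soundness w.r.t. conversion} together with the fact that values interpret to themselves under $\Val^\tau \cong \xI(\tau) \subseteq \fI(\tau)$. The only cosmetic difference is that you justify $\inter{u}_{\fM} = \inter{u}_{\xM}$ by explicitly invoking Lemma~\ref{lem:preservation_of_validity_in_underlying_model}, where the paper appeals directly to the extension condition in the definition of CE-algebras.
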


\begin{proof}[Proof (Sketch)]
($\Rightarrow$)
Suppose that $\fT = \langle \xM, \xE \rangle$ is a CE-theory
over a signature $\Sigma$ and let $\xM = \langle \xI,\xJ \rangle$.
Let $\fM = \langle \fI, \fJ \rangle$ be a CE-$\langle \Sigma,\xM\rangle$-algebra 
such that $\models_{\fM} \xE$.
Suppose $u,v \in \Val^\tau$ with $u \Cab[\xE]{*} v$.
Then $\inter{u}_\fM = \inter{v}_\fM$ by Lemma~\ref{lem:soundness w.r.t. conversion}.
Therefore, by $u,v \in \Val^\tau \cong \xI(\tau) \subseteq \fI(\tau)$,
we have $u = \inter{u}_\xM = \inter{u}_\fM = 
\inter{v}_\fM = \inter{v}_\xM = v$.
($\Leftarrow$)
By Lemma~\ref{lem:validity on term algebra}, $\xT_\xE$ is a model of $\fT$.
This witnesses that $\fT$ is consistent.
\end{proof}

\newcommand{\ProofConnectingConsistency}{
\ifthenelse{\boolean{OmitProofs}}{\LemmaConnectingConsistency*}{}
\begin{proof}
($\Rightarrow$)
Suppose that $\fT = \langle \xM, \xE \rangle$ is a CE-theory
over a signature $\Sigma$ and let $\xM = \langle \xI,\xJ \rangle$.
Suppose that $\fT$ is consistent. 
Then, there exists a CE-$\langle \Sigma,\xM\rangle$-algebra $\fM = \langle \fI, \fJ \rangle$
such that $\models_{\fM} \xE$.
Suppose $u,v \in \Val^\tau$ with $u \Cab[\xE]{*} v$.
By the definition of the CE-$\langle \Sigma,\xM\rangle$-algebras,
we have $\xI(\tau) \subseteq\fI(\tau)$.
Also, by $\models_{\fM} \xE$ and $u \Cab[\xE]{*} v$,
we have $\inter{u}_\fM = \inter{v}_\fM$ by Lemma~\ref{lem:soundness w.r.t. conversion}.
Therefore, by $u,v \in \Val^\tau \cong \xI(\tau) \subseteq \fI(\tau)$,
we have $u = \inter{u}_\xM = \inter{u}_\fM = 
\inter{v}_\fM = \inter{v}_\xM = v$.
($\Leftarrow$)
By Lemma~\ref{lem:validity on term algebra}, $\xT_\xE$ is a model of $\fT$.
This witnesses that $\fT$ is consistent.
\end{proof}
}
\ifthenelse{\boolean{OmitProofs}}{}{\ProofConnectingConsistency}

We now arrive at the main theorem of this section.

\begin{restatable}[completeness]{theorem}{TheoremCompleteness}
\label{thm:completeness}
    Let $\fT = \langle \xM, \xE \rangle$ 
    be a consistent CE-theory.
    Then, we have $\xE \cec \CEqn{X}{s}{t}{\varphi}$
    if and only if 
    $\xE \models \CEqn{X}{s}{t}{\varphi}$.
\end{restatable}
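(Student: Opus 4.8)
The plan is to derive both directions almost immediately from the machinery already established, with the consistency hypothesis entering only in the backward direction. The forward direction ($\Rightarrow$) requires no new work: it is exactly the content of Theorem~\ref{thm:soundness of algebraic semantics w.r.t. constrained equational validity}, which holds for arbitrary CE-theories and asserts that $\fT \cec \CEqn{X}{s}{t}{\varphi}$ implies $\fT \models \CEqn{X}{s}{t}{\varphi}$. I would simply invoke that theorem.

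For the backward direction ($\Leftarrow$), the key idea is to use the canonical term-algebra model $\xT_\xE$ as a witness. First I would convert the consistency hypothesis into value-consistency via Lemma~\ref{lem:connecting consistency}. With value-consistency in hand, Lemma~\ref{lem:validity on term algebra} applies, providing the quotient algebra $\xT_\xE = T[\xM](\Sigma,\xV)/{\sim_\xE}$ together with its two crucial properties: item~\Bfnum{2} states $\models_{\xT_\xE} \xE$, i.e.\ $\xT_\xE$ is a model of $\fT$, while item~\Bfnum{1} states that validity in $\xT_\xE$ coincides exactly with CE-validity.

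The argument then closes in one step. Assuming $\xE \models \CEqn{X}{s}{t}{\varphi}$, by definition validity holds in every CE-$\langle \Sigma,\xM\rangle$-algebra that models $\fT$; since $\xT_\xE$ is such a model, we obtain $\models_{\xT_\xE} \CEqn{X}{s}{t}{\varphi}$ in particular. Applying the equivalence in item~\Bfnum{1} of Lemma~\ref{lem:validity on term algebra} then yields $\xE \cec \CEqn{X}{s}{t}{\varphi}$, as desired.

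There is no genuine obstacle remaining at this level: all the difficulty has been absorbed into constructing $\xT_\xE$ and verifying that it is a well-defined CE-algebra---where value-consistency is precisely what guarantees the non-standard congruence condition on values from item~\Bfnum{2} of the definition of congruence relation---and into establishing the tight correspondence between syntactic CE-validity and semantic validity in $\xT_\xE$. The role of the consistency assumption is exactly to ensure that at least one such canonical model exists; without it the class of models of $\fT$ could be empty and the backward implication would hold only vacuously, decoupling $\models$ from $\cec$.
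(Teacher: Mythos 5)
Your proposal is correct and follows essentially the same route as the paper: the forward direction is Theorem~\ref{thm:soundness of algebraic semantics w.r.t. constrained equational validity}, and the backward direction uses the canonical model $\xT_\xE$ via items~\Bfnum{1} and~\Bfnum{2} of Lemma~\ref{lem:validity on term algebra}, with your direct argument being just the contrapositive restatement of the paper's. If anything, you are slightly more careful than the paper's write-up in explicitly invoking Lemma~\ref{lem:connecting consistency} to pass from consistency to the value-consistency hypothesis that Lemma~\ref{lem:validity on term algebra} actually requires.
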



\begin{proof}
The \textit{only if}~part 
follows from 
Theorem~\ref{thm:soundness of algebraic semantics w.r.t. constrained equational validity}.
Thus, it remains to show the \textit{if}~part.
Suppose contrapositively
that $\xE \cec \CEqn{X}{s}{t}{\varphi}$ does not hold.
Then, by Lemma~\ref{lem:validity on term algebra}~\Bfnum{1},
$\not\models_{\xT_\xE} \CEqn{X}{s}{t}{\varphi}$.
Since $\models_{\xT_\xE} \xE$, by Lemma~\ref{lem:validity on term algebra}~\Bfnum{2},
this witnesses that 
there exists a CE-$\langle \Sigma,\xM\rangle$-algebra $\fM$
such that $\models_{\fM} \xE$ but not 
$\models_{\fM} \CEqn{X}{s}{t}{\varphi}$.
This means $\xE \not\models \CEqn{X}{s}{t}{\varphi}$.
This completes the proof of the \textit{if}~part.
\end{proof}

To conclude this section, we explain the postponed question from
the beginning of the section on the definition of CE-algebras. The question
was on why it is required to include those equipped with underlying extended
models---if such models would not be allowed, one does not obtain the completeness
result as witnessed by the following example.

\begin{example}
\label{exp:why underlying model is an extension}
Consider integer arithmetic for the underlying model $\xM$.
Take a term signature 
$\xFTe = \SET{ \m{f}\colon \textsf{Ints} \to \textsf{Bool},
\m{g}\colon \textsf{Ints} \to \textsf{Bool} }$.
Consider the LCES $\xE = 
\SET{
\CEqn{}{\m{f}(x)}{\m{true}}{x \geqslant \m{0}},
\CEqn{}{\m{f}(x)}{\m{true}}{x < \m{0}},
\CEqn{}{\m{g}(x)}{\m{true}}{} 
}$.
By orienting the equations in an obvious way,
we obtain a complete LCTRS (e.g.~\cite{KN13frocos,WM18}).
Then as $\m{g}(x){\downarrow} = \m{true}  \neq 
\m{f}(x) = \m{f}(x){\downarrow}$,
it turns out that no conversions hold between $\m{g}(x)$ and $\m{f}(x)$.
It follows from the Theorem~\ref{thm:relation between validity and conversion of rewriting} that $\xE \not\cec \CEqn{\varnothing}{\m{g}(x)}{\m{f}(x)}{}$.
Now, from Theorem~\ref{thm:completeness}, 
we have $\xE \not\models \CEqn{\varnothing}{\m{g}(x)}{\m{f}(x)}{}$,
i.e.\ one should find a model that witnesses this invalidity.
Indeed,
one can take a CE-$\langle \Sigma,\xM\rangle$-algebra $\fM = \langle \fI, \fJ \rangle$ 
with $\fI(\textsf{Ints}) =  \mathbb{Z} \cup \SET{ \bullet }$,
where $\bullet \notin \mathbb{Z}$,
with the interpretations:
$\fJ(\textsf{f})(\bullet) =  \mathsf{false}$,
$\fJ(\textsf{f})(a) =  \mathsf{true}$ for all $a \in \mathbb{Z}$, and
$\fJ(\textsf{g})(x) =  \mathsf{true}$  for all $x \in \mathbb{Z} \cup \SET{ \bullet }$.
On the other hand, if we would require to take $\fI(\textsf{Ints}) =  \mathbb{Z}$,
then we do not get any model that invalidates this CE.
\end{example}

\section{Related Work}
\label{sec:related-work}

Constrained rewriting began to be popular around 1990, which has been
initiated by the motivation to achieve a tractable solution for
completion modulo equations (such as AC, ACI, etc.), by separating off the
(intractable) equational solving part as constraints. These constraints
mainly consist of (dis)equality of built-in equational theories such as $x*y
\approx_{AC} y*x$. A constrained completion procedure in such a framework
is given in~\cite{KK89}; it is well-known that the specification language Maude
also deals with such built-in theories~\cite{Mes12}. This line of research was
extended to a framework of rewriting with constraints of an arbitrary
first-order formula in~\cite{KK89}, where various completion methods have been
developed for this. However, they, similar to us, mainly considered
term algebras as the underlying models, because the main motivation was to
deal with a wide range of completion problems by separating off some
parts of the equational theory as constraints.

Another well-known type of constraints studied in the context of
constrained rewriting is membership constraints of regular tree
languages. This type of constraints is motivated by dealing with terms
over an (order-)sorted signature and representing an infinite number of
terms that obeys regular patterns obtained from divergence of theorem
proving procedures. In this line of research, \cite{Com98I,Com98II} give a
dedicated completion method for constrained rewrite systems with membership
constraints of regular tree languages. Further a method for
inductive theorem proving for conditional and constrained systems, which is
based on tree grammars with constraints, has been proposed in~\cite{BJ12}.
We also want to mention~\cite{YT89} as a formalism with more abstract
constraints---confluence of term rewrite systems with membership constraints
over an arbitrary term set has been considered there.

The work in this era which is in our opinion closest to the LCTRSs
formalism is the one given in~\cite{DG90}. This is also motivated by giving
a link between (symbolic) equational deduction and constraint solving. Thus,
they considered constraints of an arbitrary theory such as linear integer
arithmetic, similarly to LCTRSs. Based on the initial model of this framework,
they gave an operational semantics of constraint equational logic programming.

The introduction of the LCTRS framework is more recent, and was
initiated by the motivation to deal with built-in data structures such as
integers, bit-vectors etc.\ in order to verify programs written by
real-world programming languages with the help of SMT-solvers. A detailed
comparison to the works in this line of research has been given in~\cite{KN13frocos}.

All in all, to the best of our knowledge, there does not exist anything in
the literature on algebraic semantics of constrained rewriting and Birkhoff
style completeness, as considered in this paper.

\section{Conclusion}
\label{sec:conclusion}

With the goal to establish a semantic formalism of logically constrained
rewriting, we have introduced the notions of constrained equations and
CE-theories. For this, we have extended the form of these constrained
equations by specifying explicitly the variables, which need to be
instantiated by values, in order to treat equational properties in a uniform
way.
Then we have introduced a notion of CE-validity to give a uniform foundation
from a semantic point of view for the LCTRS formalism. After establishing basic
properties of the introduced validity, we have shown the relation to
the conversion of rewriting. Then we presented a sound inference
system $\mathbf{CEC}_0$ to prove validity of constrained equations in
CE-theories. We have demonstrated its ability to establish validity via some
examples. A partial completeness result and a discussion on why only partial
completeness is obtained followed. Finally, we devised sound and complete
algebraic semantics, which enables one to show invalidity of constrained
equations in CE-theories. Furthermore, we have derived an important
characterization of CE-theories, namely, consistency of CE-theories, for which
the completeness theorem holds.
Thus, we have established a basis for CE-theories and their validity by
showing its fundamental properties and giving methods for proving and disproving
the validity of constrained equations in CE-theories.

The question whether there exists a sound and complete proof system that
captures CE-validity remains open. Part of our future work is the
automation of proving validity of constrained equations.

This paper uses the initial formalism of LCTRSs given
in~\cite{KN13frocos}. However, there exists a variant which incorporates the
interpretation of user-defined function symbols by the term algebra~\cite{CL18}.
This variant is incomparable to the initial one. Nevertheless, to investigate
the semantic side of LCTRSs, the initial formalism was a reasonable starting
point. The adaptation of the current work to the extended formalism is also
a part of our future work.



\bibliography{biblio}

\ifthenelse{\boolean{OmitProofs}}{\appendix}{\end{document}}

\section{Omitted Proofs}

In this appendix, omitted proofs are presented.

\subsection{Proofs in Section~\ref{sec:preliminaries}}

\ifthenelse{\boolean{OmitProofs}}{\ProofLemmaSubstitutionAsInterpretation}{}
\ifthenelse{\boolean{OmitProofs}}{\ProofNoRespSubstImplyFalse}{}
\ifthenelse{\boolean{OmitProofs}}{\ProofLemmaPropertiesOfCalculationSteps}{}

\subsection{Proofs in Section~\ref{sec:constrained-equational-validity}}
\ifthenelse{\boolean{OmitProofs}}{\ProofSymAndClosurePropertiesOfEquationalSteps}{}
\ifthenelse{\boolean{OmitProofs}}{\ProofRule}{}
\ifthenelse{\boolean{OmitProofs}}{\ProofCongruence}{}
\ifthenelse{\boolean{OmitProofs}}{\ProofStability}{}
\ifthenelse{\boolean{OmitProofs}}{\ProofStabilityGeneral}{}
\ifthenelse{\boolean{OmitProofs}}{\ProofModelConsequence}{}
\ifthenelse{\boolean{OmitProofs}}{\ProofGeneralTheoryRule}{}
\ifthenelse{\boolean{OmitProofs}}{\ProofRelationBetweenValidityAndConversionOfRewriting}{}
\ifthenelse{\boolean{OmitProofs}}{\ProofProvingValidityByRewriting}{}

\subsection{Proofs in Section~\ref{sec:proving-validity}}
\ifthenelse{\boolean{OmitProofs}}{\ProofPreservationOfCE}{}
\ifthenelse{\boolean{OmitProofs}}{\ProofSoundnessOfTheSystemCEC}{}
\ifthenelse{\boolean{OmitProofs}}{\ProofAbstractingReflexivity}{}
\ifthenelse{\boolean{OmitProofs}}{\ProofAbstractingCalcStep}{}

\subsection{Proofs in Section~\ref{sec:algebraic-semantics}}
\ifthenelse{\boolean{OmitProofs}}{\ProofSoundnessWrtConversion}{}
\ifthenelse{\boolean{OmitProofs}}{\ProofSoundnessOfAlgebraicSemantics}{}
\ifthenelse{\boolean{OmitProofs}}{\ProofQuotientAlgebra}{}
\ifthenelse{\boolean{OmitProofs}}{\ProofTermAlgebra}{}
\ifthenelse{\boolean{OmitProofs}}{\ProofCongruenceOnTermAlgebra}{}
\ifthenelse{\boolean{OmitProofs}}{\ProofValidityOnTermAlgebra}{}
\ifthenelse{\boolean{OmitProofs}}{\ProofConnectingConsistency}{}

\end{document}